\definecolor{DarkRed}{rgb}{0.5,0.1,0.1}
\definecolor{DarkBlue}{rgb}{0.1,0.1,0.5}
\def\BState{\State\hskip-\ALG@thistlm}
\newtheorem{theorem}{Theorem}
\newtheorem{lemma}{Lemma}[section]
\newtheorem{proposition}[lemma]{Proposition}
\newtheorem{claim}[lemma]{Claim}
\newtheorem{definition}{Definition}
\newtheorem*{Definition}{Definition}
\newtheorem{problem}{Problem}
\newtheorem{remark}[lemma]{Remark}
\newtheorem*{claim*}{Claim}
\newtheorem*{proposition*}{Proposition}
\newtheorem*{lemma*}{Lemma}
\newtheorem*{problem*}{Problem}
\newtheorem{mdresult}{Result}
\newenvironment{result}{\begin{mdframed}[backgroundcolor=lightgray!40,topline=false,rightline=false,leftline=false,bottomline=false,innertopmargin=2pt]\begin{mdresult}}{\end{mdresult}\end{mdframed}}
\renewcommand{\qed}{\nobreak \ifvmode \relax \else
      \ifdim\lastskip<1.5em \hskip-\lastskip
      \hskip1.5em plus0em minus0.5em \fi \nobreak
      \vrule height0.75em width0.5em depth0.25em\fi}
\newcommand{\ourinfo}[1]{Department of Computer and Information Science, University of Pennsylvania. Supported in part by National Science Foundation grants CCF-1552909, CCF-1617851, and IIS-1447470.  \newline\noindent Email: \texttt{#1}.}
\newcommand{\toShrink}{-.20cm}
\newcommand{\toShrinkEnu}{-.2cm}
\newcommand{\Mstar}{\ensuremath{M^{\star}}}
\newcommand{\Ot}{\ensuremath{\widetilde{O}}}
\newcommand{\eps}{\ensuremath{\varepsilon}}
\newcommand{\bracket}[1]{\left[#1\right]}
\newcommand{\paren}[1]{\ensuremath{\left(#1\right)}\xspace}
\newcommand{\card}[1]{\left\vert{#1}\right\vert}
\newcommand{\ceil}[1]{{\left\lceil{#1}\right\rceil}}
\newcommand{\floor}[1]{{\left\lfloor{#1}\right\rfloor}}
\newcommand{\set}[1]{\ensuremath{\left\{ #1 \right\}}}
\newcommand{\polylog}{\mbox{\rm  polylog}}
\newcommand{\opt}{\textnormal{\ensuremath{\mbox{opt}}}\xspace}
\DeclareMathOperator*{\Exp}{\ensuremath{{\mathbb{E}}}}
\DeclareMathOperator*{\Prob}{\ensuremath{\textnormal{Pr}}}
\renewcommand{\Pr}{\Prob}
\newcommand{\Ex}{\Exp}
\newcommand{\event}[1]{\ensuremath{{\sf E}_{#1}}}
\newenvironment{tbox}{\begin{tcolorbox}[
		enlarge top by=5pt,
		enlarge bottom by=5pt,
		 boxsep=0pt,
                  left=4pt,
                  right=4pt,
                  top=10pt,
                  arc=0pt,
                  boxrule=1pt,toprule=1pt,
                  colback=white
                  ]
	}
{\end{tcolorbox}}
\newcommand{\dist}{\ensuremath{\mathcal{D}}}
\newcommand{\Prot}{\ensuremath{\Pi}}
\newcommand{\bA}{\bm{A}}
\newcommand{\bB}{\ensuremath{\bm{B}}}
\renewcommand{\event}{\mathcal{E}}
\newcommand{\Matching}{\ensuremath{\textnormal{\textsf{Matching}}}}
\newcommand{\istar}{\ensuremath{i^{\star}}}
\newcommand{\player}[1]{\ensuremath{P^{(#1)}}}
\newcommand{\PS}[1]{\player{#1}}
\newcommand{\textbox}[2]{
{
\begin{tbox}
\textbf{#1}
{#2}
\end{tbox}
}
}
\newcommand{\vccs}{\ensuremath{\textnormal{\textsf{VC-Coreset}}}\xspace}
\newcommand{\bO}{\overline{O}}
\newcommand{\Gi}[1]{\ensuremath{G^{(#1)}}}
\newcommand{\Gii}{\Gi{i}}
\newcommand{\Vi}[1]{\ensuremath{V^{(#1)}}}
\newcommand{\Vii}{\Vi{i}}
\newcommand{\Ei}[1]{\ensuremath{E^{(#1)}}}
\newcommand{\Eii}{\Ei{i}}
\newcommand{\Vcs}{\ensuremath{V_{\textnormal{\textsf{cs}}}}}
\newcommand{\Vcsi}[1]{\Vcs^{(#1)}}
\newcommand{\Vcsii}{\Vcsi{i}}
\newcommand{\Vvc}{\ensuremath{O^{\star}}}
\newcommand{\bVvc}{\ensuremath{\overline{\Vvc}}}
\newcommand{\vc}{\ensuremath{\textnormal{\textsf{VC}}}}
\newcommand{\GreedyMatch}{\ensuremath{\textnormal{\textsf{GreedyMatch}}}\xspace}
\renewcommand{\event}[1]{\mathcal{E}(#1)}
\newcommand{\Ms}[1]{\ensuremath{M}^{(#1)}}
\newcommand{\mm}{\ensuremath{\textnormal{\textsf{MM}}}}
\newcommand{\New}{\ensuremath{\textnormal{\textsf{new}}}}
\newcommand{\Old}{\ensuremath{\textnormal{\textsf{old}}}}
\newcommand{\Vnew}{\ensuremath{V_{\New}}}
\newcommand{\Vold}{\ensuremath{V_{\Old}}}
\newcommand{\Mnew}{\ensuremath{M_{\New}}}
\newcommand{\muold}{\ensuremath{\mu_{\Old}}}
\newcommand{\Mold}{\ensuremath{M_{\Old}}}
\newcommand{\Enew}{E_{\New}}
\newcommand{\Eold}{E_{\Old}}
\newcommand{\Eiold}{E^{i}_{\Old}}
\renewcommand{\event}[1]{\ensuremath{\mathcal{E}\paren{#1}}}
\newcommand{\Mstarli}{\ensuremath{M^{\star< i}}}
\newcommand{\Mstari}{\ensuremath{M^{\star(i)}}}
\renewcommand{\bA}{\ensuremath{\overline{A}}}
\renewcommand{\bB}{\ensuremath{\overline{B}}}
\newcommand{\Eab}{\ensuremath{E_{AB}}}
\newcommand{\Ebab}{\ensuremath{E_{\overline{AB}}}}
\newcommand{\Mi}[1]{\ensuremath{M^{(#1)}}}
\newcommand{\Mii}{\Mi{i}}
\newcommand{\Ps}[1]{\ensuremath{P^{(#1)}}}
\newcommand{\distMatch}{\ensuremath{\mathcal{D}_{\Matching}}}
\newcommand{\distMD}{\ensuremath{\mathcal{D}_{\textnormal{\textsf{MR}}}}}
\newcommand{\vecB}{\ensuremath{\mathbf{B}}}
\newcommand{\FC}{\ensuremath{\mathcal{F}}}
\renewcommand{\event}{\ensuremath{\mathcal{E}}\xspace}
\newcommand{\estar}{\ensuremath{e^{\star}}}
\newcommand{\VertexCollection}{\ensuremath{\textnormal{\textsf{VertexCollection}}}\xspace}
\newcommand{\MatchingRecovery}{\ensuremath{\textnormal{\textsf{MatchingRecovery}}}\xspace}
\newcommand{\MA}{\ensuremath{M_{\textnormal{\textsf{Alice}}}}}
\newcommand{\EB}{\ensuremath{E_{\textnormal{\textsf{Bob}}}}}
\newcommand{\ProtMatching}{\ensuremath{\Prot}_{\textnormal{\textsf{Matching}}}\xspace}
\newcommand{\XM}{\ensuremath{X_M}}
\newcommand{\XbM}{\ensuremath{X_{\overline{M}}}}
\newcommand{\YM}{\ensuremath{Y_M}}
\newcommand{\YbM}{\ensuremath{Y_{\overline{M}}}}
\newcommand{\distVS}{\ensuremath{\dist_{\textnormal{\textsf{HVP}}}}}
\newcommand{\distVC}{\ensuremath{\dist_{\textnormal{\textsf{VC}}}}}
\newcommand{\vstar}{\ensuremath{v^{\star}}}
\newcommand{\EA}{\ensuremath{E_{A}}}
\newcommand{\ustar}{\ensuremath{u^{\star}}}
\newcommand{\Epi}[1]{\ensuremath{\widehat{E}^{(#1)}}}
\newcommand{\errs}{\ensuremath{\textnormal{errs}\xspace}}
\newcommand{\Dzi}{\ensuremath{D^{(i)}_0}}
\newcommand{\Doi}{\ensuremath{D^{(i)}_1}}
\newcommand{\Dti}{\ensuremath{D^{(i)}_{\geq 2}}}
\newcommand{\Dmi}{\ensuremath{D^{(i)}_{\leq 1}}}
\newcommand{\cVS}{\ensuremath{C_{\textnormal{\textsf{HVP}}}}}
\newcommand{\VertexSeeking}{\ensuremath{\textnormal{\textsf{HVP}}}\xspace}
\newcommand{\ProtVC}{\ensuremath{\Prot_{\textnormal{\textsf{VC}}}}}
\newcommand{\ProtVS}{\ensuremath{\Prot_{\textnormal{\textsf{HVP}}}}}
\newcommand{\distDisj}{\ensuremath{\dist_{\textnormal{\textsf{Disj}}}}}
\title{Randomized Composable Coresets for Matching and Vertex Cover}
\author{Sepehr Assadi\thanks{\ourinfo{\{sassadi,sanjeev\}@cis.upenn.edu}}  \and Sanjeev Khanna\footnotemark[1]}
\date{}
\begin{document}
\maketitle

\thispagestyle{empty}
\begin{abstract} 
A common approach for designing scalable algorithms for massive data sets is to distribute the computation across, say $k$, machines and process the data using limited communication between them. A particularly 
appealing framework here is the simultaneous communication model whereby each machine constructs a small representative summary of its own data and one obtains an approximate/exact solution from the union of the representative summaries.
If the representative summaries needed for a problem are small, then this results in a \emph{communication-efficient} and \emph{round-optimal} (requiring essentially no interaction between the machines) protocol. Some well-known examples of techniques for creating summaries include sampling, linear sketching, and composable coresets. These techniques have been successfully used to design communication efficient solutions for many fundamental graph problems. However, two prominent problems are notably absent from the list of successes, namely,
the \emph{maximum matching} problem and the \emph{minimum vertex cover} problem. Indeed, it was shown recently that for both these problems, even achieving a modest approximation factor of $\polylog{(n)}$ requires using representative summaries of size $\widetilde{\Omega}(n^2)$ i.e. essentially no better summary exists than each machine simply sending its entire input graph.
	
The main insight of our work is that the intractability of matching and vertex cover in the simultaneous communication model is inherently connected to an \emph{adversarial} partitioning of the underlying graph across machines. We show that when the underlying graph is randomly partitioned across machines, both these problems admit \emph{randomized composable coresets} of size $\widetilde{O}(n)$ that yield an $\widetilde{O}(1)$-approximate solution\footnote{Here and throughout the paper, we use $\Ot(\cdot)$ notation to suppress $\polylog{(n)}$ factors, where $n$ is the number of vertices in the graph.}. In other words, a small \emph{subgraph} of the input graph at each machine can be identified as its representative summary and the final answer then is obtained by simply running any maximum matching or minimum vertex cover algorithm on these combined subgraphs. 
This results in an $\widetilde{O}(1)$-approximation \emph{simultaneous} protocol for these problems with
$\Ot(nk)$ total communication when the input is randomly partitioned across $k$ machines. We also prove our results are optimal in a very strong sense: we not only rule out existence of smaller randomized composable coresets for these problems but in fact show that our $\Ot(nk)$ bound for total communication is optimal for {\em any} simultaneous communication protocol (i.e. not only for randomized coresets) for these two problems. Finally, by a standard application of composable coresets, our results also imply MapReduce algorithms with the same approximation guarantee in one or two rounds of communication, improving the previous best known round complexity for these problems. 

\end{abstract}

\clearpage
\setcounter{page}{1}

\newcommand{\Alg}{\ensuremath{\textnormal{\textsf{ALG}}}\xspace}
\newcommand{\EE}{\ensuremath{\mathcal{E}}}
\renewcommand{\tilde}{\widetilde}

\section{Introduction} \label{INTRO}

Recent years have witnessed tremendous algorithmic advances for efficient processing of massive data sets. 
A common approach for designing scalable algorithms for massive data sets is to distribute the computation across machines that are interconnected via a communication network. 
These machines can then jointly compute a function on the union of their inputs by exchanging messages. Two main measures of efficiency in this setting are 
the \emph{communication cost} and the \emph{round complexity}; we shall formally define these terms in details later in the paper but for the purpose of this section, communication 
cost measures the total number of bits exchanged by all machines and round complexity measures the number of rounds of interaction between them. 

An important and widely studied framework here is the \emph{simultaneous} communication model whereby each machine constructs a small representative summary of its own data and one obtains a solution for the desired problem 
from the union of the representative summary of combined pieces. 
The appeal of this framework lies in the simple fact that the \emph{simultaneous protocols} are inherently \emph{round-optimal}; they perform in only one round of interaction. The only measure that remains to be optimized is the 
communication cost -- this is now determined by the size of the summary created by each machine. An understanding of the communication cost for a problem in the simultaneous model turns out to have value in other models of 
computation as well. For instance, a lower bound on the maximum communication needed by any machine implies a matching lower bound on the space complexity of the same problem in dynamic streams~\cite{LiNW14,AiHLW16}. 

Two particularly successful techniques for designing small summaries for simultaneous protocols are \emph{linear sketches} and \emph{composable coresets}. 
Linear sketching technique corresponds to taking a {linear projection} of the input data as its representative summary. The ``linearity'' of the sketches is then used to obtain a
sketch of the combined pieces from which the final solution can be extracted.  
There has been a considerable amount of work in designing linear sketches
for graph problems in recent years~\cite{AhnGM12Linear,AGM12,KapralovLMMS14,AssadiKLY16,ChitnisCEHMMV16,BulteauFKP16,KW14,BhattacharyaHNT15,McGregorTVV15}. 
Coresets are subgraphs (in general, subsets of the input) that suitably preserve properties of a given graph, and they are said to be composable if the union of coresets for a collection of graphs yields a coreset for the union of the graphs. Composable coresets have also been studied extensively recently~\cite{BadanidiyuruMKK14,BalcanEL13,BateniBLM14,IndykMMM14,MirzasoleimanKSK13,MirrokniZ15}, and indeed
several graph problems admit natural composable coresets; for instance, connectivity, cut sparsifiers, and spanners (see~\cite{M14}, Section 2.2; the ``merge and reduce'' approach). 
Successful applications of these two techniques has yielded $\tilde{O}(n)$ size summaries for many graph problems (see further related work in Section~\ref{sec:related}). 
However, two prominent problems are notably absent from the list of successes, namely, the {\em maximum matching} problem and the {\em minimum vertex cover} problem. Indeed, it was shown recently~\cite{AssadiKLY16} that 
both matching and vertex cover require summaries of size $n^{2-o(1)}$ for even computing a $\polylog{(n)}$-approximate solution\footnote{The
authors in~\cite{AssadiKLY16} only showed the inapproximability result for the matching problem. However, a simple modification of their result proves an identical lower bound for the vertex cover problem 
as well.}. 

This state-of-affairs is the starting point for our work, namely, intractability of matching and vertex cover in the simultaneous communication model. Our main insight is that a
natural \emph{data oblivious partitioning scheme} completely alters this landscape: 
both problems admit $\tilde{O}(1)$-approximate composable coresets of size $\tilde{O}(n)$ provided the edges of the graph are randomly partitioned across the machines. The idea that random partitioning of 
data can help in distributed computation was nicely illustrated in the recent work of~\cite{MirrokniZ15} on maximizing submodular functions. Our work can be seen as the first illustration of this idea in the domain of graph algorithms.
The applicability of this idea to graph theoretic problems has been cast as an open problem in~\cite{MirrokniZ15}. 

\paragraph{Randomized Composable Coresets} We follow the notation of~\cite{MirrokniZ15} with a slight modification to adapt to our application in graphs. 
Let $E$ be an edge-set of a graph $G(V,E)$; we say that a partition $\set{\Ei{1},\ldots,\Ei{k}}$ of the edges $E$ is a \emph{random $k$-partitioning}  iff 
the sets are constructed by assigning each edge in $E$ independently to a set $\Ei{i}$ chosen uniformly at random. A random partitioning of 
the edges naturally defines partitioning the graph $G(V,E)$ into $k$ graphs $\Gi{1},\ldots,\Gi{k}$ whereby $\Gii := G(V,\Eii)$ for any $i \in [k]$, and hence we use random partitioning for 
both the edge-set and the input graph interchangeably. 

\begin{Definition}[Randomized Composable Coresets~\cite{MirrokniZ15}]
	For a graph-theoretic problem $P$, consider an algorithm \Alg that given any graph $G(V,E)$, outputs a 
	subgraph $\Alg(G) \subseteq G$ with at most $s$ edges. 
	Let ${\Gi{1},\ldots,\Gi{k}}$ be a \emph{random $k$-partitioning} of a graph $G$. 
	We say that \Alg outputs an \emph{$\alpha$-approximation randomized composable core-set} of \emph{size} $s$ for $P$ if
	${P\paren{\Alg(\Gi{1}) \cup \ldots \cup \Alg(\Gi{k})}}$ is an $\alpha$-approximation for $P(G)$ w.h.p., where the probability is taken over the random choice of the $k$-partitioning. 
	For brevity, we use randomized coresets to refer to randomized composable coresets.
\end{Definition}

We further augment this definition by allowing the coresets to also contain a \emph{fixed solution} to be \emph{directly} added to the final solution of the composed coresets. 
In this case, size of the coreset is measured both in the number of edges in the output subgraph plus the number of vertices and edges picked by the fixed solution (this is mostly relevant for our coreset for the vertex cover problem). 

\subsection{Our Results}\label{sec:results}

We show existence of randomized composable coresets for matching and vertex cover.
\begin{result}\label{res:alg}
	There exist randomized coresets of size $\Ot(n)$ that w.h.p. (over the random partitioning of the input) give an $O(1)$-approximation for maximum matching, and 
an $O(\log{n})$-approximation for minimum vertex cover.
\end{result}
 
In contrast to the above result, when the graph is {\em adversarially} partitioned, the results of~\cite{AssadiKLY16} show that the best approximation ratio conceivable for these problems in $\Ot(n)$ space is only $\Theta(n^{1/3})$.
We further remark that Result~\ref{res:alg} can also be extended to the weighted version of the problems. Using the Crouch-Stubbs technique~\cite{CS14} one can extend our result to achieve
a coreset for weighted matching (with a factor $2$ loss in approximation and extra $O(\log{n})$ term in the space). Similar ideas of ``grouping by weight'' of edges can also be used to extend our coreset for
weighted vertex cover with an $O(\log{n})$ factor loss in approximation and space; we omit the details.

The $\Ot(n)$ space bound achieved by our coresets above is considered a ``sweet spot'' for graph streaming algorithms~\cite{muthukrishnan2005data,FKMSZ05} as many fundamental problems are provably intractable in $o(n)$ space (sometimes not enough to even store the answer) while admit efficient solutions in $\Ot(n)$ space. However, in the simultaneous model, these considerations imply only that the total size of all $k$ coresets must be 
${\Omega}(n)$, leaving open the possibility that coreset output by each machine may be as small as $\Ot(n/k)$ in size (similar in spirit to coresets of~\cite{MirrokniZ15}). 
Our next result rules out this possibility and proves the optimality of our coresets size.
 
\begin{result}\label{res:lb-coreset}
Any $\alpha$-approximation randomized coreset for the matching problem must have size
$\Omega(n/\alpha^2)$, and any $\alpha$-approximation randomized coreset for the vertex cover problem must have size
$\Omega(n/\alpha)$.
\end{result}

We now elaborate on some applications of our results. 

\paragraph{Distributed Computation} We use the following distributed computation model in this paper, referred to as the \emph{coordinator model} (see, e.g.,~\cite{PhillipsVZ12}). The input is distributed across $k$ machines. 
There is also an additional party called the \emph{coordinator} who receives no input. The machines are allowed to only communicate with the coordinator, not with each other. A protocol
in this model is called a \emph{simultaneous} protocol iff the machines simultaneously send a message to the coordinator and the coordinator then outputs the answer with no further interaction. 
\emph{Communication cost} of a protocol in this model is the total number of bits communicated by all parties. 

Result~\ref{res:alg} can also be used to design simultaneous protocols for matching and vertex cover with $\Ot(nk)$ total communication and the same 
approximation guarantee stated in Result~\ref{res:alg} in the case the input is partitioned randomly across $k$ machines. Indeed, each machine only needs to compute a coreset
of its input, sends it to the coordinator, and coordinator computes an exact maximum matching or a $2$-approximate minimum vertex cover on the union of the  coresets. 
We further prove that the communication cost of theses protocols are essentially optimal.  

\begin{result}\label{res:lb-dist}
	Any $\alpha$-approximation simultaneous protocol for the maximum matching problem, resp. the vertex cover problem,
	requires total communication of $\Omega(nk/\alpha^2)$ bits, resp. $\Omega(nk/\alpha)$ bits, \emph{even} when the input 
	is \emph{partitioned randomly} across the machines.
\end{result}

Result~\ref{res:lb-dist} is a strengthening of Result~\ref{res:lb-coreset}; it rules out \emph{any} representative summary 
(not necessarily a randomized coreset) of size $o(n/\alpha^2)$ (resp. $o(n/\alpha)$) that can be used for $\alpha$-approximation of matching (resp. vertex cover) when the input is partitioned randomly.

For the matching problem, it was shown previously in~\cite{HuangRVZ15} that when the input is adversarially partitioned in the coordinator model, any protocol (not necessarily simultaneous) requires $\Omega(nk/\alpha^2)$ bits of 
communication to achieve an $\alpha$-approximation of the maximum matching. Result~\ref{res:lb-dist} extends this to the case of \emph{randomly partitioned} inputs albeit only for simultaneous protocols.

\paragraph{MapReduce Framework} We show how to use our randomized coresets to obtain improved MapReduce algorithms for matching and vertex cover in the MapReduce computation model
formally introduced in~\cite{KSV10,LMSV11}. 
Let $k = \sqrt{n}$ be the number of machines, each with a memory of $\Ot(n\sqrt{n})$; we show that \emph{two} rounds of MapReduce suffice to obtain 
an $O(1)$-approximation for matching and $O(\log{n})$-approximation for vertex cover. In the first round, each machine randomly partitions the edges assigned to it across the $k$ machines; this results in a 
random $k$-partitioning of the graph across the machines. In the second round, each machine sends a randomized composable coreset of its input to a designated central machine $M$; as there are 
$k = \sqrt{n}$ machines and each machine is sending $\Ot(n)$ size coreset, the input received by $M$ is of size $\Ot(n\sqrt{n})$ and hence can be stored entirely on that machine. Finally, $M$
computes the answer by combining the coresets (similar to the case in the coordinator model). Note that if the input was distributed randomly in the first place, we could have implemented this algorithm in only 
one round of MapReduce (see~\cite{MirrokniZ15} for details on when this assumption applies). 

Our MapReduce algorithm outperforms the previous algorithms of~\cite{LMSV11} for matching and vertex cover in terms of the number of rounds it uses, albeit with a larger approximation guarantee. In 
particular,~\cite{LMSV11} achieved a $2$-approximation to both matching and vertex cover in $6$ rounds of MapReduce when using similar space as ours on each machine (the number of rounds of this algorithm is always
at least $3$ even if we allow $\Ot(n^{5/3})$ space per each machine). 
The improvement on the number of rounds is significant in this context; the transition between different rounds in a MapReduce computation is usually the dominant cost 
of the computation~\cite{LMSV11} and hence, minimizing the number of rounds is an important goal in the MapReduce framework.

\subsection{Our Techniques}\label{sec:techniques}

\paragraph{Randomized Coreset for Matching} Greedy and Local search algorithms are the typical choices for composable coresets (see, e.g.,~\cite{IndykMMM14,MirrokniZ15}). It is then natural
to consider the greedy algorithm for the maximum matching problem as a randomized coreset: the one that computes a \emph{maximal matching}. However, 
one can easily show that this choice of coreset performs poorly in general; there are simple instances in which choosing arbitrary maximal matching in the graph $\Gii$ results only in an $\Omega(k)$-approximation. 

Somewhat surprisingly, we show that a simple change in strategy results in an efficient randomized coreset: \emph{any} \emph{maximum matching} of the graph $\Gii$
 can be used as an $O(1)$-approximate randomized coreset for the 
maximum matching problem. Unlike the previous work in~\cite{MirrokniZ15,IndykMMM14} that relied on analyzing a specific algorithm (or a specific family of algorithms) for constructing a coreset, we prove this result
by exploiting structural properties of the maximum matching (i.e., the optimal solution) directly, independent of the algorithm that computes it. As a consequence, our coreset construction requires no prior coordination  (such as consistent tie-breaking rules used in~\cite{MirrokniZ15}) between the machines and in fact each machine can use a different algorithm for computing the maximum matching required by the coreset. 

\paragraph{Randomized Coreset for Vertex Cover} In the light of our coreset for the matching problem, one might wonder whether a minimum vertex cover of a graph can also be used
as its randomized coreset. However, it is easy to show that the answer is negative here -- there are simple instances (e.g., a star on $k$ vertices) on which this leads to an $\Omega(k)$ approximation ratio. Indeed, the \emph{feasibility constraint} in the vertex cover problem depends heavily on the input graph
as a whole and not only the coreset computed by each machine, unlike the case for matching and in fact most problems that admit a composable coreset~\cite{BalcanEL13,IndykMMM14,MirrokniZ15}. This suggests the 
necessity of using edges in the coreset to \emph{certify} the feasibility of the answer. On the other hand, only sending edges seems too restrictive: a vertex of degree $n-1$ can safely be assumed to be in an optimal 
vertex cover, but to {certify} this, one needs to essentially communicate $\Omega(n)$ edges. This naturally motivates a slightly more general notion of coresets -- the coreset contains both subsets of vertices (to be always included
in the final vertex cover) and edges (to guide the choice of additional vertices in the vertex cover). 

To obtain a randomized coreset for vertex cover, we employ an iterative ``peeling'' process where we remove the vertices with the highest residual degree in each iteration (and add them to the final vertex cover) and continue until 
the residual graph is sufficiently sparse, in which case we can return this subgraph as the coreset. The process itself is a modification of the algorithm by Parnas and Ron~\cite{ParnasR07}; we point out that other modifications of this 
algorithm has also been used previously for matching and vertex cover~\cite{OnakR10,KapralovKS14,BhattacharyaHI15}. 

However, to employ this algorithm as a coreset we need to argue that the set of vertices peeled across different machines is not too large as these vertices are added directly to the final vertex cover. The intuition behind
this is that random partitioning of edges in the graph should result in vertices to have essentially the same degree across the machines and hence each machine should peel the same set of vertices in each iteration. But this intuition 
runs into a technical difficulty:
the peeling process is quite sensitive to the exact degree of vertices and even slight changes in degree results in moving vertices between different iterations that potentially leads to a cascading effect. To address this, 
we design a \emph{hypothetical} peeling process (which is aware of the actual minimum vertex cover in $G$) and show that the our actual peeling process is 
in fact ``sandwiched'' between two application of this peeling process with different degree threshold for peeling vertices. We then use this to argue that the set of all vertices peeled across the machines are always
contained in the solution of the hypothetical peeling process which in turn can be shown to be a relatively small set. 

\paragraph{Lower Bounds for Randomized Coresets.} 
Our lower bound results for randomized coresets for matching are based on the following simple distribution: the input graph consists of union of two bipartite graphs, one of which is a random $k$-regular
graph $G_1$ with $n/2\alpha$ vertices on each side while the other graph $G_2$ is a perfect matching of size $n - n/2\alpha$. Thus the input graph almost certainly contains a matching of size $n - o(n)$ and
any $\alpha$-approximate solution must collect $\Omega(n/\alpha)$ edges from $G_2$ overall i.e. $\Omega(n/\alpha k)$ edges from $G_2$ from each machine on average. After random partitioning, the input given to each machine 
is essentially a matching of size $n/2\alpha$ from $G_1$ and a matching of size roughly $n/k$ from $G_2$. The local information at each machine is not sufficient to differentiate between edges of $G_1$ and $G_2$, and thus any 
coreset that aims to include $\Omega(n/\alpha k)$ edges from $G_2$, can not reduce the input size by more than a factor of $\alpha$. Somewhat similar ideas can also be shown to work for the vertex cover problem.

\paragraph{Communication Complexity Lower Bounds} We briefly highlight the ideas used in obtaining the lower bounds described in Result~\ref{res:lb-dist}. We will focus on the vertex cover problem to describe our techniques. Our 
lower bound result is based on analyzing (a variant of) the following distribution: the input graph $G(L,R,E)$ consists of a bipartite graph $G_1$ 
plus a single edge $\estar$. $G_1$ is a graph on $n/2\alpha$ vertices $L_1 \subseteq L$, each connected to $k$ random neighbors in $R$, and $\estar$ is an edge chosen uniformly at random between $L \setminus L_1$ and $R$. 
This way $G$ admits a minimum vertex cover of size at most $n/2\alpha+1$. However, when this graph is randomly partitioned, the input
to each machine is essentially a matching of size $n/2\alpha$ chosen from the graph $G_1$ with possibly one more edge $\estar$ (in exactly one machine chosen uniformly at random). 
The local information at the machine receiving the edge $\estar$ is not sufficient to differentiate between the edges
of $G_1$ and $\estar$ and thus if the message sent by this machine is much smaller than its input size (i.e., $o(n/\alpha)$ bits), it most likely does not ``convey enough information'' to the coordinator about the identity of $\estar$.
This in turn forces the coordinator to use more than $n/2$ vertices in order to cover $\estar$, resulting in an approximation factor larger than $\alpha$.

Making this intuition precise is complicated by the fact that the input across the players are highly correlated, and hence the message sent by one player, can also reveal extra information about the input of another (e.g. a relatively small communication from the players is enough for the coordinator to know the identity of entire $L_1$). To overcome this, we show that by conditioning on proper parts of the input, we can limit the correlation
in the input of players and then use the \emph{symmetrization} technique of~\cite{PhillipsVZ12} to reduce the simultaneous $k$-player vertex cover problem to a one-way two-player problem named the \emph{hidden
vertex problem} (\VertexSeeking). 
Loosely speaking, in \VertexSeeking, Alice and Bob are given two sets $S,T \subseteq [n]$, each of size $n/\alpha$, with the promise that $\card{S \setminus T} = 1$ and their goal is to find a set $C$ of size $o(n)$ which
contains the single element in $S \setminus T$.  We prove a lower bound of $\Omega(n/\alpha)$ bits for this problem using a subtle reduction from the well-known set disjointness problem. In this reduction, Alice and Bob 
use the protocol for \VertexSeeking on ``non-legal'' instances (i.e., the ones for which \VertexSeeking is not well-defined) to reduce the original disjointness instance between sets $A,B$ on a universe $[N]$
to a lopsided disjointness instance $(A,B')$ whereby $\card{B'} = o(N)$, and then solve this new instance in $o(N)$ communication (using the H{\aa}stad-Wigderson protocol~\cite{HastadW07}), contradicting the $\Omega(N)$
lower bound on the communication complexity of disjointness. 

The lower bound for the matching problem is also proven along similar lines (over the hard distribution mentioned earlier for this problem) using
 a careful combinatorial argument instead of the reduction from the disjointness problem. 

\subsection{Further Related Work}\label{sec:related}

Maximum matching and minimum vertex cover are among the most studied problems in the context of massive graphs including, in dynamic graphs~\cite{NeimanS13,Solomon16,BaswanaGS15,OnakR10}, sub-linear
algorithms~\cite{ParnasR07,HassidimKNO09,NguyenO08,OnakRRR12,YoshidaYI12}, streaming 
algorithms~{\cite{M05,FKMSZ05,EKS09,EpsteinLMS11,AhnGM12Linear,GoelKK12,KonradMM12,AGM12,AG13,GO13,Kapralov13,KapralovKS14,CS14,ChitnisCHM15,M14,AhnG15,EsfandiariHLMO15,Konrad15,AssadiKLY16,ChitnisCEHMMV16,McGregorV16,EsfandiariHM16,AssadiKL17,PazS17}, MapReduce computation~\cite{AhnGM12Linear,LMSV11}, and different distributed computation models~\cite{HuangRVZ15,AlonNRW15,DNO14,GO13}. 
Most relevant to our work are the linear sketches of~\cite{ChitnisCEHMMV16} for computing an \emph{exact} minimum vertex cover or maximum matching in $O(\opt^2)$ space ($\opt$ is the size of the solution), and linear
sketches of~\cite{AssadiKLY16,ChitnisCEHMMV16} for $\alpha$-approximating maximum matching in $\Ot(n^2/\alpha^3)$ space. These results are proven to be tight by~\cite{ChitnisCHM15}, and~\cite{AssadiKLY16}, respectively.
Finally,~\cite{AssadiKLY16} also studied the simultaneous communication complexity of bipartite matching in the vertex-partition model and proved that obtaining better than an $O(\sqrt{k})$-approximation in this model 
 requires strictly more than $\Ot(n)$ communication from each player (see~\cite{AssadiKLY16} for more details on this model). 

Coresets, composable coresets, and randomized composable coresets are respectively introduced in~\cite{AgarwalHV04},~\cite{IndykMMM14}, and~\cite{MirrokniZ15}. Composable coresets 
have been used previously in the context of nearest neighbor search~\cite{AbbarAIMV13}, diversity maximization~\cite{IndykMMM14}, clustering~\cite{BalcanEL13,BateniBLM14}, and
submodular maximization~\cite{IndykMMM14,MirrokniZ15,BadanidiyuruMKK14}. Moreover, while not particularly termed a composable coreset, the ``merge and reduce'' technique
in the graph streaming literature (see~\cite{M14}, Section~2.2) is identical to composable coresets. Similar ideas as randomized coreset for optimization problems has also been used in random arrival streams~\cite{KonradMM12,KapralovKS14}. Moreover, communication complexity lower bounds have also been studied 
previously under the random partitioning of the input~\cite{KapralovKS15,ChakrabartiCM08}. 
\section{Preliminaries} \label{PRELIM}

\paragraph{Notation.} For any integer $m$, $[m] := \set{1,\ldots,m}$. Let $G(V,E)$ be a graph; $\mm(G)$ denotes the maximum matching size in $G$ and $\vc(G)$ denotes the minimum vertex cover size. 
We assume that these quantities are $\omega(k\log{n})$\footnote{Otherwise, we can use the algorithm of~\cite{ChitnisCEHMMV16} to obtain \emph{exact} coresets of size $\Ot(k^2)$ as mentioned in Section~\ref{sec:related}.}. For a 
set $S \subseteq V$ and $v \in V$, $N_S(v) \subseteq S$ denotes the neighbors of $v$ in the set $S$. For an edge set $E' \subseteq E$, we use $V(E')$ to refer to vertices
 incident on $E'$. 
 
 \paragraph{Useful Concentration of Measure Inequalities.} We use the following standard version of Chernoff bound (see, e.g.,~\cite{ConcentrationBook}) throughout. 

\begin{proposition}[Chernoff bound]\label{prop:chernoff}
	Let $X_1,\ldots,X_n$ be independent random variables taking values in $[0,1]$ and let $X:= \sum_{i=1}^{n} X_i$. Then, 
	\[ \Pr\paren{\card{X - \Ex\bracket{X}} > t} \leq 2 \cdot \exp\paren{-\frac{2t^2}{n}} \]
\end{proposition}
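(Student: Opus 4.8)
The plan is to apply the standard exponential moment (Chernoff/Hoeffding) method. I would first bound the upper tail $\Pr(X - \Ex[X] > t)$. For any real $\lambda > 0$, Markov's inequality applied to the nonnegative variable $e^{\lambda(X - \Ex[X])}$ gives
\[
  \Pr\big(X - \Ex[X] > t\big) \;\le\; e^{-\lambda t}\,\Ex\!\left[e^{\lambda(X-\Ex[X])}\right]
  \;=\; e^{-\lambda t}\prod_{i=1}^{n}\Ex\!\left[e^{\lambda(X_i - \Ex[X_i])}\right],
\]
where the factorization uses independence of the $X_i$. This reduces the task to bounding the moment generating function of a single centered variable supported in $[0,1]$.

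The key ingredient is Hoeffding's lemma: if $Y$ takes values in $[0,1]$, then $\Ex[e^{\lambda(Y-\Ex[Y])}] \le e^{\lambda^2/8}$ for every real $\lambda$. I would prove it by setting $\psi(\lambda) := \log \Ex[e^{\lambda(Y-\Ex[Y])}]$ and checking that $\psi(0)=0$, $\psi'(0) = \Ex[Y - \Ex[Y]] = 0$, and that $\psi''(\lambda)$ equals the variance of $Y$ under the exponentially tilted law with density proportional to $e^{\lambda Y}$. Since this tilted law is again supported in $[0,1]$, its variance is at most $1/4$ (any variable in $[0,1]$ has variance $\le 1/4$, since $\Ex[Y^2]-\Ex[Y]^2 \le \Ex[Y] - \Ex[Y]^2 \le 1/4$). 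A second-order Taylor expansion of $\psi$ about $0$ then gives $\psi(\lambda) \le \lambda^2/8$. This lemma is the only nonroutine step; everything else is bookkeeping.

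Feeding the lemma into the factored bound yields $\Pr(X - \Ex[X] > t) \le e^{-\lambda t + n\lambda^2/8}$, and optimizing the exponent over $\lambda > 0$ (the minimum is at $\lambda = 4t/n$) gives $\Pr(X - \Ex[X] > t) \le e^{-2t^2/n}$. Running the identical argument with $-X$ in place of $X$ — its summands $-X_i$ lie in an interval of length $1$, so Hoeffding's lemma still applies after recentering — bounds $\Pr(X - \Ex[X] < -t)$ by the same quantity, and a union bound over the two tails produces $\Pr(|X - \Ex[X]| > t) \le 2e^{-2t^2/n}$, which is the claimed inequality. The main obstacle, as noted, is establishing Hoeffding's lemma; the remaining steps follow mechanically.
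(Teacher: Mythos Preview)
Your argument is correct and is exactly the standard Hoeffding/Chernoff exponential-moment proof. The paper does not actually prove this proposition---it states it as a standard result with a citation to a concentration-of-measure text---so there is no in-paper proof to compare against; your write-up would serve as a self-contained proof where the paper offers none.
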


We also need the {method of bounded differences} in our proofs. A function $f(x_1,\ldots,x_n)$ satisfies the \emph{Lipschitz property} with constant $d$, iff 
for all $i \in [n]$, $\card{f(a) - f(a')} \leq d$, whenever $a$ and $a'$ differ only in the $i$-th coordinate. 

\begin{proposition}[Method of bounded differences] \label{prop:bounded-differences}
	If $f$ satisfies the Lipschitz property with constant $d$ and $X_1,\ldots,X_n$ are independent random variables, then, 
	\[ \Pr\paren{\card{f(X) - \Ex\bracket{f(X)}} > t} \leq 2 \cdot \exp\paren{-\frac{2t^2}{n\cdot d^2}} \]
\end{proposition}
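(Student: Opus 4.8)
The plan is to prove this by the classical Doob-martingale argument, i.e. this is McDiarmid's bounded-differences inequality and I would derive it from Azuma's inequality. First I would fix the filtration $\mathcal{F}_i := \sigma(X_1,\ldots,X_i)$ and define the Doob martingale $Z_i := \Ex\bracket{f(X_1,\ldots,X_n) \mid \mathcal{F}_i}$, so that $Z_0 = \Ex\bracket{f(X)}$, $Z_n = f(X)$ almost surely, and the increments $D_i := Z_i - Z_{i-1}$ satisfy $\Ex\bracket{D_i \mid \mathcal{F}_{i-1}} = 0$ and $f(X) - \Ex\bracket{f(X)} = \sum_{i=1}^{n} D_i$.

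The crucial step is to show that, conditioned on $\mathcal{F}_{i-1}$, each $D_i$ is supported in an interval of length at most $d$. Here I would use independence: conditioned on $X_1 = x_1,\ldots,X_{i-1} = x_{i-1}$, define
\[ h(x) := \Ex\bracket{f(x_1,\ldots,x_{i-1},x,X_{i+1},\ldots,X_n)}, \]
where the expectation is over $X_{i+1},\ldots,X_n$ only (this is legitimate precisely because the $X_j$ are independent, so conditioning on $X_i$ does not change the law of the later coordinates). Then $D_i = h(X_i) - \Ex_{X_i}\bracket{h(X_i)}$, so $D_i$ lies in the interval $[\,\inf_x h(x) - \Ex_{X_i} h,\ \sup_x h(x) - \Ex_{X_i} h\,]$, which has length $\sup_x h(x) - \inf_x h(x)$. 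For any two values $x, x'$, coupling the later coordinates gives $|h(x) - h(x')| \le \Ex\bracket{|f(\ldots,x,\ldots) - f(\ldots,x',\ldots)|} \le d$ by the Lipschitz hypothesis (the two arguments of $f$ differ only in coordinate $i$), hence $\sup_x h(x) - \inf_x h(x) \le d$.

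Given this, I would invoke Hoeffding's lemma: if $\Ex\bracket{Y \mid \mathcal{F}} = 0$ and $Y$ lies (conditionally) in an interval of length $d$, then $\Ex\bracket{e^{\lambda Y} \mid \mathcal{F}} \le e^{\lambda^2 d^2/8}$ for all $\lambda$. Applying this to $D_i$ relative to $\mathcal{F}_{i-1}$ and peeling off the conditioning one step at a time (tower property) yields
\[ \Ex\bracket{\exp\paren{\lambda \sum_{i=1}^{n} D_i}} \le \exp\paren{\frac{n\lambda^2 d^2}{8}} \quad\text{for all } \lambda > 0. \]
Markov's inequality then gives $\Pr\paren{f(X) - \Ex\bracket{f(X)} > t} \le \exp\paren{-\lambda t + n\lambda^2 d^2/8}$, and the choice $\lambda = 4t/(nd^2)$ makes the exponent $-2t^2/(nd^2)$. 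Running the identical argument with $-f$ (which has the same Lipschitz constant $d$) controls the lower tail, and a union bound over the two one-sided events produces the factor $2$ in the statement.

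The main obstacle — really the one nonroutine ingredient — is the conditional boundedness of the martingale increments: one must check carefully that comparing $h(x)$ with $h(x')$ reduces to comparing $f$ at two inputs that agree in every coordinate except the $i$-th, which is exactly where independence of $X_1,\ldots,X_n$ enters (so that conditioning on $X_i$ does not perturb the remaining coordinates) and where the Lipschitz property is used. Once that is in place, Hoeffding's lemma and the Chernoff-style optimization over $\lambda$ are entirely standard.
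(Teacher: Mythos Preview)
Your proof is correct; this is precisely the standard Doob-martingale derivation of McDiarmid's inequality via Hoeffding's lemma and the Chernoff method. The paper, however, does not actually prove this proposition at all: it simply states the result and refers the reader to a textbook (``A proof of this proposition can be found in~\cite{ConcentrationBook} (see Section 5)''), so there is nothing to compare against beyond noting that what you have written is exactly the argument one would find in such a reference.
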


A proof of this proposition can be found in~\cite{ConcentrationBook} (see Section 5).

\paragraph{Communication Complexity} We prove our lower bounds for distributed protocols using the framework of communication complexity, and in particular in the \emph{multi-party simultaneous communication model} and 
the \emph{two-player one-way communication model}. 

Formally, in the multi-party simultaneous communication model, the input is partitioned across $k$ players $\Ps{1},\ldots,\Ps{k}$. All players have access to an infinite  
shared string of random bits, referred to as \emph{public randomness} (or \emph{public coins}). The goal is for the players to compute a specific function of the input by simultaneously sending a message to a central party called 
the coordinator (or the referee). The coordinator then needs to output
the answer using the messages received by the players. 
We refer to the case when the input is partitioned randomly as the \emph{random partition} model. 

In the {two-player one-way communication model}, the input is partitioned across two players, namely Alice and Bob. The players again have access to public randomness, and the goal is for Alice to send a single 
message to Bob, so that Bob can compute a function of the joint input. The \emph{communication cost} of a protocol in both models is the total length of the messages sent by the players. 
In Section~\ref{sec:cc-vertex-seeking}, we also consider general two-player communication model which allows a \emph{two-way} communication, i.e., both Alice and Bob can send messages to each other. 
We refer the reader to an excellent text by Kushilevitz and Nisan~\cite{KN97} for more details.

\section{Randomized Coresets for Matching and Vertex Cover}\label{UPPER}
We present our randomized composable coresets for matching and vertex cover in this section.
\subsection{An $O(1)$-Approximation Randomized Coreset for Matching} \label{sec:matching-coreset}

The following theorem formalizes Result~\ref{res:alg} for matching. 

\begin{theorem}\label{thm:matching}
	Any \emph{maximum matching} of a graph $G(V,E)$ is an $O(1)$-approximation randomized composable coreset of size $O(n)$ for the maximum matching problem. 
\end{theorem}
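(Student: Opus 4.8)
The plan is to fix a random $k$-partitioning $\{\Gi{1},\dots,\Gi{k}\}$ of $G$, let $M^{(i)}$ be an arbitrary maximum matching of $\Gii$, and show that $H := M^{(1)} \cup \dots \cup M^{(k)}$ contains a matching of size $\Omega(\mm(G))$ with high probability. Since each $\card{M^{(i)}} \le n/2$, the coreset trivially has size $O(n)$, so the whole content is the approximation ratio. The key idea — which the introduction flags as the novelty — is to reason directly about a fixed maximum matching $\Mstar$ of $G$ rather than about the algorithm producing the $M^{(i)}$'s. Write $\mu := \mm(G) = \card{\Mstar}$; by the preliminaries we may assume $\mu = \omega(k\log n)$. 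For each $i$, let $\Mstari := \Mstar \cap \Eii$ be the edges of $\Mstar$ landing in machine $i$; by a Chernoff bound (Proposition~\ref{prop:chernoff}) each $\card{\Mstari}$ concentrates around $\mu/k = \omega(\log n)$, so w.h.p. all of them are $\Theta(\mu/k)$.

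The core argument is a charging/augmentation argument carried out one machine at a time, processing $i = 1, \dots, k$ and maintaining a growing matching in $H$. The cleanest route: consider $M^{(i)}$ versus $\Mstari$ inside $\Gii$. Since $M^{(i)}$ is a \emph{maximum} matching of $\Gii$ and $\Mstari$ is \emph{some} matching of $\Gii$, every edge of $\Mstari$ has at least one endpoint matched by $M^{(i)}$ (else it would be an augmenting edge). So $M^{(i)}$ ``covers'' all of $V(\Mstari)$ in the vertex-cover sense up to a factor $2$: $\card{M^{(i)}} \ge \card{\Mstari}/2$ is immediate, but more importantly $M^{(i)}$ touches every $\Mstar$-edge assigned to machine $i$. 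Now I would build the final matching greedily across machines: maintain a set $W$ of vertices already used; when processing machine $i$, look at the edges of $\Mstari$ whose \emph{both} endpoints are still free in $W$ — call this set $F_i$ — and argue that $M^{(i)}$ lets us match a constant fraction of $V(F_i)$ using edges of $M^{(i)}$ (hence edges of $H$) that avoid $W$. The point of random partitioning enters here: because the $\Mstar$-edges are distributed uniformly and independently, the "conflict" structure — how many $\Mstar$-edges a given machine shares a vertex with across \emph{other} machines — is controlled, so the greedy process does not stall. Concretely one shows $\sum_i \card{F_i}$ stays $\Omega(\mu)$ throughout, because each vertex of $\Mstar$ appears in only $O(1)$ of the relevant conflict events in expectation (a vertex is an endpoint of exactly one $\Mstar$-edge, which lands in exactly one machine).

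A perhaps slicker variant, which I'd try first, is to bound the \emph{maximum matching} of $H$ via its minimum vertex cover by duality-style counting, or to directly exhibit a fractional matching of size $\Omega(\mu)$ supported on $H$ and invoke integrality of the matching polytope on the pieces where needed — but the honest combinatorial version is: partition $\Mstar$'s edges by which machine they fall in, use maximality of $M^{(i)}$ to get that $V(\Mstari) \subseteq V(M^{(i)})$ up to the factor-$2$ slack, then use a Hall-type / deficiency argument together with the random spread of the $\Mstari$'s to glue the local matchings into a global one losing only another constant factor. I expect the main obstacle to be exactly this gluing step: a maximum matching $M^{(i)}$ of $\Gii$ could, in principle, "waste" its endpoints on vertices of $G$ that are heavily used by other machines' matchings, and showing the random partition prevents a large-scale adversarial alignment of these wasted endpoints is the crux. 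The right tool is a concentration argument (Chernoff, or the method of bounded differences, Proposition~\ref{prop:bounded-differences}, treating the assignment of each edge as an independent coordinate with Lipschitz constant $O(1)$ on the size of the matching extractable from $H$) showing that the number of $\Mstar$-endpoints simultaneously "blocked" across machines is, w.h.p., at most a small constant fraction of $\mu$, which then yields the $O(1)$ approximation and completes the proof.
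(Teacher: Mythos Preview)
Your proposal correctly identifies the overall architecture (process machines one at a time, maintain a growing matching, show each step adds $\Omega(\mu/k)$ new edges), and you correctly flag the gluing step as the crux. But the proposal does not actually close that gap, and the tools you suggest for it are not the right ones.

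Concretely: you want to say that because $M^{(i)}$ is a maximum matching of $\Gii$, every edge of $F_i$ (the $\Mstar$-edges in machine $i$ with both endpoints free of $W$) has an endpoint in $V(M^{(i)})$, and then ``$M^{(i)}$ lets us match a constant fraction of $V(F_i)$ using edges of $M^{(i)}$ that avoid $W$.'' The first clause is true; the second does not follow. Take $e=(u,v)\in F_i$ with $u,v\notin W$. Maximality gives that, say, $u$ is matched by $M^{(i)}$---but possibly to some $w\in W$. Nothing in your argument prevents $M^{(i)}$ from matching \emph{every} free endpoint it touches to a vertex already in $W$, in which case none of $M^{(i)}$ is usable. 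Your proposed fixes (bounded differences on $\mm(H)$, Hall-type counting, random-spread-of-$\Mstari$'s) do not address this: the problem is not concentration of any natural quantity but the adversarial freedom in \emph{which} maximum matching $M^{(i)}$ is.

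The paper's proof avoids this entirely by a different mechanism. Having fixed $W=V(M^{(i-1)})$, it partitions the \emph{remaining} edge universe $E^{\ge i}$ into $\Eold$ (edges with an endpoint in $W$) and $\Enew$ (edges with both endpoints outside $W$). It first fixes the $\Eold$-edges that land in $\Gii$ and lets $\muold$ be the maximum matching using only those. Then it shows, via the surviving portion of $\Mstar$, that $\Gii$ contains a matching of size $\muold+\Omega(\mu/k)$. The punchline is a pure size argument: \emph{any} maximum matching of $\Gii$ has size at least $\muold+\Omega(\mu/k)$, but at most $\muold$ edges can come from $\Eold$, so at least $\Omega(\mu/k)$ edges of $M^{(i)}$ must lie in $\Enew$---and those, by definition, have both endpoints outside $W$ and can be appended to $M^{(i-1)}$. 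This sidesteps the ``which endpoints does $M^{(i)}$ use'' issue completely; coverage of $\Mstari$ plays no role.
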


We remark that our main interest in Theorem~\ref{thm:matching} is to achieve \emph{some} constant approximation factor for randomized composable coresets of the matching problem and as such we 
did not optimize the constant in the approximation ratio. Nevertheless, our result already shows that the approximation ratio of this coreset is \emph{at most $9$} (in fact, with a bit more care, 
we can reduce this factor down to $8$; however, as this is not the main contribution of this paper, we omit the details). 

Let $G(V,E)$ be any graph and $\Gi{1},\ldots,\Gi{k}$ be a random $k$-partitioning of $G$. To prove Theorem~\ref{thm:matching}, we describe
a simple process for combining the maximum matchings (i.e., the coresets) of $\Gii$'s, and prove that this process results in a constant factor
approximation of the maximum matching of $G$. We remark that this process is only required for the analysis, i.e., to show that there exists a large matching in the union of coresets; in principle, any (approximation) 
algorithm for computing a maximum matching can be applied to obtain a large matching from the coresets. 

Consider the following greedy process for computing an approximate matching in $G(V,E)$: 
\textbox{$\GreedyMatch(G)$:}{
\begin{enumerate}
	\item Let $\Ms{0} := \emptyset$. For $i=1$ to $k$: 
	\item  Let $\Ms{i}$ be a \emph{maximal matching} obtained by adding to $\Ms{i-1}$ the edges in an \emph{arbitrary maximum matching} of $\Gii$ 
		that do not violate the matching property. 
	\item return $M:= \Ms{k}$. 
\end{enumerate}
}

\begin{lemma}\label{lem:greedy-match}
	$\GreedyMatch$ is an $O(1)$-approximation algorithm for the maximum matching problem w.h.p (over the randomness of the edge partitioning). 	
\end{lemma}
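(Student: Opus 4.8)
The plan is to show that the greedy process $\GreedyMatch$ always produces a matching $M$ with $|M| \ge \Omega(\mm(G))$ w.h.p. The key idea is to track how much ``progress'' each machine's maximum matching contributes toward building $M$. Fix the final matching $M = \Ms{k}$ and let $V(M)$ be its vertex set; note $|V(M)| = 2|M|$. The crucial observation is: when we process $\Gii$ in step 2, every edge of a maximum matching of $\Gii$ that is \emph{not} added to the running matching must have at least one endpoint already in $V(\Ms{i-1}) \subseteq V(M)$. So if $\mu_i := \mm(\Gii)$ denotes the size of the maximum matching in $\Gii$, then at most $|V(M)| = 2|M|$ of these $\mu_i$ edges can be blocked (each vertex of $V(M)$ blocks at most one edge of a fixed matching of $\Gii$), hence machine $i$ contributes at least $\mu_i - 2|M|$ new edges to $M$. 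But we should be a little more careful: the edges contributed across machines are what build up $V(M)$ in the first place, so I would argue that $\sum_{i=1}^k (\text{new edges from } \Gii) = |M|$, while also each term is at least $\mu_i - 2|M|$, giving $|M| \ge \sum_i \max(0, \mu_i - 2|M|)$. This alone is not quite enough; instead I would phrase it as: for \emph{each} $i$, $|M| \ge |\Ms{i}| \ge \mu_i - 2|M|$, so $\mu_i \le 3|M|$ for every $i$.

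**The main work** is then a lower bound on $\max_i \mu_i$ (or better, on $\sum_i \mu_i$) in terms of $\mm(G)$. Let $M^\star$ be a fixed maximum matching of $G$, $|M^\star| = \mm(G) =: \mu$. Under random $k$-partitioning, each edge of $M^\star$ lands in $\Gii$ independently with probability $1/k$, so $\Gii$ contains in expectation $\mu/k$ edges of $M^\star$, and these form a matching in $\Gii$; by Chernoff (Proposition~\ref{prop:chernoff}), since $\mu = \omega(k \log n)$, w.h.p. $\Gii$ contains at least $\mu/2k$ edges of $M^\star$, hence $\mu_i \ge \mu/2k$ for every $i$ simultaneously. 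Combined with $\mu_i \le 3|M|$ from the previous paragraph, this already gives $|M| \ge \mu/6k$ — which is \emph{not} good enough; we need $|M| = \Omega(\mu)$ with no dependence on $k$. So the real argument must be global: I would sum over all machines. The total contribution is $|M| \ge \sum_{i} (\mu_i - 2|M|)^+$; a cleaner route is to note that $|\Ms{i}| - |\Ms{i-1}| \ge \mu_i - |V(\Ms{i-1})|/2$ is too lossy, and instead show directly that the \emph{union} of the $k$ maximum matchings contains a large matching. Here is the right framing: let $\hM_i$ be the maximum matching of $\Gii$ used as coreset. Consider $H = \bigcup_i \hM_i$, a graph with $\Ot(n)$ edges (since each $\hM_i$ has $\le n/2$ edges, and for the size bound we just need each coreset to have $O(n)$ edges, which is immediate). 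We must show $\mm(H) = \Omega(\mu)$. The greedy process shows $\mm(H) \ge |M|$ where $M = \Ms{k}$, and the analysis above shows each $\mu_i \le 3\mm(H)$.

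**The key combinatorial step**, which I expect to be the main obstacle, is showing $\sum_{i=1}^k \mu_i = \Omega(\mu + k\cdot \mm(H))$ or some such inequality that lets us conclude. The intuition (from Section~\ref{sec:techniques}): if $\mm(H)$ were small, say $o(\mu)$, then each $\mu_i \le 3\mm(H) = o(\mu)$, but we can also lower-bound $\sum_i \mu_i$ from below using $M^\star$. Actually the clean statement is: I claim $\sum_{i=1}^{k} \mu_i \ge \mu$, since the $k$ matchings $\hM_i$ restricted to $M^\star$-edges partition $M^\star$ — wait, more carefully, the $M^\star$-edges in $\Gii$ all lie in \emph{some} maximum matching of $\Gii$ but maybe not in $\hM_i$. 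The correct robust claim avoids this: I would instead directly lower-bound $\mm\bigl(\bigcup_i \hM_i\bigr)$ by considering the ``interval structure'' — partition $[k]$ into blocks and use that within the greedy run, once $|\Ms{i-1}|$ is small, machine $i$ must add many edges. Concretely: as long as $|\Ms{i-1}| < \mu/12$, we have (w.h.p.) $\mu_i \ge \mu/2k$ but only $|V(\Ms{i-1})| = 2|\Ms{i-1}| < \mu/6$ of the $\hM_i$-edges can be blocked by $V(\Ms{i-1})$... no, $\hM_i$ is a \emph{different} matching from the $M^\star$-copy. I would fix this by taking, in step 2, the relevant submatching: at most $|V(\Ms{i-1})|$ edges of $\hM_i$ are blocked, so machine $i$ adds $\ge \mu_i - |V(\Ms{i-1})|$ edges. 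If $|\Ms{i-1}| \le \mu/12$ then $|V(\Ms{i-1})| \le \mu/6 \le \mu/6$, while $\mu_i \ge \mu/2k$, so the gain is $\ge \mu/2k - \mu/6$, which is negative for large $k$ — dead end again. The resolution: we cannot analyze machine-by-machine; we need the aggregate bound $\mm(\bigcup_i \hM_i) \ge \tfrac{1}{3}\max_i \mu_i$ \emph{together with} a separate argument that $\max_i \mu_i = \Omega(\mu)$ fails in general (e.g. $k$ large), so instead the paper's actual Lemma must use something finer — I expect it bounds the number of $M^\star$-edges that ``go missing'' across all machines by a Lipschitz/bounded-differences concentration argument (Proposition~\ref{prop:bounded-differences}) on the random partition, showing the union of the $k$ local maximum matchings retains an $\Omega(1)$ fraction of $M^\star$ simultaneously w.h.p. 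Thus my final plan: (1) set up $\GreedyMatch$ and the blocking bound $|\Ms{k}| \ge |\Ms{i-1}| + (\mu_i - 2|\Ms{i-1}|)^+$; (2) prove via Chernoff that $\sum_i \mu_i$ and more importantly the structure of local maximum matchings relative to $M^\star$ is well-concentrated; (3) run the greedy analysis to a contradiction: if $|M| = |\Ms{k}| < c\mu$ for small $c$, then every $\mu_i < 3c\mu$, but an averaging/concentration argument forces some block of consecutive machines to collectively contribute more than $c\mu$ edges, contradiction. The bounded-differences inequality applied to the function ``size of maximum matching in the union of coresets'' as a function of the independent edge-assignments is, I suspect, the cleanest engine, and pinning down its Lipschitz constant (changing one edge's machine changes this by $O(1)$) is the step requiring the most care.
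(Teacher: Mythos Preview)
Your proposal correctly identifies the central obstacle but does not resolve it. You observe that the blocking bound $|\Ms{i}| \ge |\Ms{i-1}| + (\mu_i - 2|\Ms{i-1}|)^+$ combined with the trivial lower bound $\mu_i \ge \Omega(\mu/k)$ (from the $\Mstar$-edges landing in $\Gii$) only yields $|M| = \Omega(\mu/k)$, and you rightly flag this as a dead end. But your proposed resolution --- a bounded-differences argument on $\mm\bigl(\bigcup_i \hM_i\bigr)$ --- does not close the gap. First, concentration around the mean is useless without a lower bound on the mean, which is precisely what you are trying to establish. Second, the function is not obviously Lipschitz with small constant: moving a single edge between machines changes each $\mu_i$ by at most $1$, but the \emph{identity} of the maximum matching $\hM_i$ can change completely (the paper allows an \emph{arbitrary} maximum matching of $\Gii$), so $\bigcup_i \hM_i$ can shift by $\Theta(n)$ edges. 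Your step (3), ``an averaging/concentration argument forces some block of consecutive machines to collectively contribute more than $c\mu$ edges,'' is the entire content of the lemma and is left as hand-waving.

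The paper's proof \emph{is} machine-by-machine; what you are missing is a sharper per-step increment. The key idea (Lemma~\ref{lem:increase-step}): condition on $\Ei{1},\ldots,\Ei{i-1}$ and split the remaining edges $E^{\ge i}$ into $\Eold$ (edges with an endpoint in $\Vold := V(\Ms{i-1})$) and $\Enew$ (edges entirely on $\Vnew$). Let $\muold$ be the maximum matching size in $\Gii$ using only $\Eold$-edges. Now observe that since at most an $(i-1)/k$ fraction of $\Mstar$ has been assigned so far and $|\Vold| \le 2c\mu$, there is a matching of size roughly $(1 - (i-1)/k - 4c)\mu$ entirely inside $\Enew$ that also avoids the $\Vnew$-endpoints of any fixed $\muold$-matching $\Mold$. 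Sampling this matching into $\Gii$ (each edge w.p.\ $1/(k-i+1)$) and appending to $\Mold$ exhibits a matching in $\Gii$ of size $\muold + \Omega(\mu/k)$. The punchline: since \emph{any} maximum matching of $\Gii$ has size at least $\muold + \Omega(\mu/k)$ but can use at most $\muold$ edges from $\Eold$, it must contain $\Omega(\mu/k)$ edges from $\Enew$ --- and every $\Enew$-edge is vertex-disjoint from $\Ms{i-1}$, hence is added by the greedy step. This gives $|\Ms{i}| \ge |\Ms{i-1}| + \Omega(\mu/k)$ as long as $|\Ms{i-1}| \le c\mu$ and $i \le k/3$; iterating over $k/3$ steps yields $|M| = \Omega(\mu)$. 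The decomposition into $\Eold$/$\Enew$ and the comparison against $\muold$ (rather than against $\mu_i$) is exactly the structural ingredient your blocking argument lacks.
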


Before proving Lemma~\ref{lem:greedy-match}, we show that Theorem~\ref{thm:matching} easily follows from this lemma. 

\begin{proof}[Proof of Theorem~\ref{thm:matching}] 
Let $\Alg$ be any algorithm that given a graph $G(V,E)$, $\Alg(G)$ outputs an arbitrary maximum matching of $G$. It is 
immediate to see that to implement $\GreedyMatch$, we only need to compute a maximal matching on the output of \Alg on each graph $\Gii$ where $\Gii$'s form a random $k$-partitioning of $G$. 
Consequently, since $\GreedyMatch$ outputs an $O(1)$-approximate matching (by Lemma~\ref{lem:greedy-match}),  
the graph $H:= \Gi{1} \cup \ldots \cup \Gi{k}$ should contain an $O(1)$-approximate matching as well. We emphasize here that the use 
of \GreedyMatch for finding a large matching in $H$ is \emph{only} for the purpose of analysis. 
\end{proof}

In the rest of this section, we prove Lemma~\ref{lem:greedy-match}. 
Recall that $\mm(G)$ denotes the maximum matching size in the input graph $G$. Let $c > 0$ be a small constant to be determined later. 
To prove Lemma~\ref{lem:greedy-match}, we will show that ${\card{\Ms{k}}} \geq c \cdot \mm(G)$ w.h.p, where $\Ms{k}$ is the output of
$\GreedyMatch$. Notice that the matchings $\Ms{i}$ (for $i \in [k]$) constructed by \GreedyMatch are random variables depending on the random $k$-partitioning.

Our general approach for the proof of Lemma~\ref{lem:greedy-match} is as follows. Suppose at the beginning of the $i$-th step of \GreedyMatch, the matching $\Ms{i-1}$ is of size $o(\mm(G))$. It is easy to see that 
in this case, there is a matching of size $\Omega(\mm(G))$ in $G$ that is entirely incident on vertices of $G$ that are not matched by $\Ms{i-1}$. We can further show that in fact 
$\Omega(\mm(G)/k)$ edges of this matching are appearing in $\Gii$, \emph{even} when we condition on the assignment of the edges in the first $(i-1)$ graphs. The next step is then to argue
that the existence of these edges forces \emph{any} maximum matching of $\Gii$ to match $\Omega(\mm(G)/k)$ edges in $\Gii$ between the vertices that are not matched by $\Ms{i-1}$; these edges 
can always be added to the matching $\Ms{i-1}$ to form $\Ms{i}$. This ensures that while the maximal matching in \GreedyMatch is of size $o(\mm(G))$, we can increase its size by $\Omega(\mm(G)/k)$ edges in each of the first $k/3$ steps, hence obtaining a matching of size $\Omega(\mm(G))$ at the end. The following key lemma formalizes this argument. 

\begin{lemma}\label{lem:increase-step}
	For any $i \in [k/3]$, if ${\card{\Ms{i-1}}} \leq c \cdot \mm(G)$, then, w.p. $1-O(1/n)$, 
	\[ \card{\Ms{i}} \geq \card{\Ms{i-1}} + \paren{\frac{1-6c-o(1)}{k}}\cdot\mm(G)\]
\end{lemma}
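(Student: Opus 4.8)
The plan is to fix $i \in [k/3]$, condition on the partition of edges into the first $i-1$ graphs $\Gi{1},\ldots,\Gi{i-1}$ (so that $\Ms{i-1}$ becomes a fixed matching of size at most $c \cdot \mm(G)$), and then show that the freshly-random graph $\Gii$ — whose edges are each independently a uniformly random $1/(k-i+1)$ fraction of the remaining edges — almost surely contains many augmenting opportunities relative to $\Ms{i-1}$. Let $U$ be the set of vertices of $G$ \emph{not} matched by $\Ms{i-1}$; since $\card{\Ms{i-1}} \le c\cdot\mm(G)$, at most $2c\cdot\mm(G)$ vertices are matched, so a fixed maximum matching $\Mstar$ of $G$ has at least $(1-2c)\,\mm(G)$ edges; of these, at most $2c\cdot\mm(G)$ can touch a matched vertex, leaving a matching $\Mstar_U \subseteq \Mstar$ of size at least $(1-4c)\,\mm(G)$ entirely inside $U$. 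This is the ``large matching avoiding $\Ms{i-1}$'' claim.

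**Passing to $\Gii$ and using maximality of its maximum matching.**

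Next I would argue that a $(1-o(1))/k$ fraction of $\Mstar_U$ lands in $\Gii$. Each edge of $\Mstar_U$ goes to $\Gii$ independently with probability $1/(k-i+1) \ge 1/k$; since the edges of $\Mstar_U$ are disjoint they are assigned independently, so by a Chernoff bound (Proposition~\ref{prop:chernoff}) the number landing in $\Gii$ is at least $(1-o(1))\cdot\frac{(1-4c)\,\mm(G)}{k}$ with probability $1 - \exp(-\Omega(\mm(G)/k))$, which is $1 - O(1/n)$ given the standing assumption $\mm(G) = \omega(k\log n)$. Crucially, this conditioning on the first $i-1$ graphs does not hurt: $\Mstar_U$ is determined by $\Ms{i-1}$, which is fixed, and the assignment of its edges among the remaining $k-i+1$ graphs is still uniform. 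Call this sub-matching $\Mstar_{U,i} \subseteq \Gii \cap \binom{U}{2}$, of size $m_i \ge (1-o(1))(1-4c)\,\mm(G)/k$. Now let $N^{(i)}$ be an arbitrary maximum matching of $\Gii$. Since $\Mstar_{U,i}$ is itself a matching in $\Gii$ using only $U$-vertices, and $N^{(i)}$ is maximum, $N^{(i)}$ must have size at least $m_i$; more to the point, I want to count how many edges of $N^{(i)}$ have \emph{both} endpoints in $U$, because exactly those are the ones that can be appended to $\Ms{i-1}$ to build $\Ms{i}$. An edge of $N^{(i)}$ failing this has an endpoint among the $\le 2c\cdot\mm(G)$ vertices matched by $\Ms{i-1}$, so at most $2c\cdot\mm(G)$ edges of $N^{(i)}$ are ``bad.'' Hence $N^{(i)}$ restricted to $U$ is a matching $N^{(i)}_U$ of size at least $m_i - 2c\cdot\mm(G)$. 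But here I realize I need a cleverer counting: $\Mstar_{U,i} \cup N^{(i)}$ forms paths and cycles within $\Gii$; since $N^{(i)}$ is maximum it has no augmenting path relative to any matching, so for every edge of $\Mstar_{U,i}$ not in $N^{(i)}$, at least one of its endpoints is matched by $N^{(i)}$. Counting endpoints, $N^{(i)}$ must match at least $m_i$ vertices of $V(\Mstar_{U,i}) \subseteq U$... actually the clean statement is simply $\card{N^{(i)}} \ge m_i$ combined with the bad-edge bound, giving $\card{N^{(i)}_U} \ge m_i - 2c\cdot\mm(G)$.

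**Assembling the bound and identifying the obstacle.**

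Finally, every edge of $N^{(i)}_U$ has both endpoints unmatched by $\Ms{i-1}$, and $N^{(i)}_U$ is itself a matching, so the greedy step in \GreedyMatch — which adds a maximal subset of $N^{(i)}$ consistent with $\Ms{i-1}$ — adds all of $N^{(i)}_U$ (none of these edges conflict with $\Ms{i-1}$ or with each other). Therefore $\card{\Ms{i}} \ge \card{\Ms{i-1}} + \card{N^{(i)}_U} \ge \card{\Ms{i-1}} + m_i - 2c\cdot\mm(G) \ge \card{\Ms{i-1}} + \bigl(\frac{1-6c-o(1)}{k}\bigr)\mm(G)$, using $m_i \ge (1-o(1))(1-4c)\mm(G)/k$ and absorbing the $-2c\cdot\mm(G)$ term — wait, $2c\cdot\mm(G)$ is not of order $\mm(G)/k$, so this subtraction is too lossy. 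The fix is that the bad edges must be counted against $N^{(i)}$ only, not subtracted at full scale: I should instead directly lower bound the number of $N^{(i)}$-edges inside $U$ by noting $V(N^{(i)}) \cap U \supseteq$ (the $U$-vertices $N^{(i)}$ is forced to cover), so the count of good edges is $\ge m_i - (\text{number of } U\text{-vertices of } \Mstar_{U,i} \text{ left uncovered by } N^{(i)})$, and maximality forces that uncovered count to be $0$ unless the uncovered vertex's partner in $N^{(i)}$ lies outside $U$ — giving at most $2c\cdot\mm(G)$ again. The genuine resolution, and the \textbf{main obstacle} of the proof, is to avoid the crude ``$-2c\cdot\mm(G)$'': I expect the authors restrict attention to $N^{(i)}$ edges lying within $U$ from the start by observing that $\Gii[U]$ already contains the matching $\Mstar_{U,i}$ of size $m_i$, so a maximum matching \emph{of $\Gii[U]$} has size $\ge m_i$; but \GreedyMatch uses a maximum matching of all of $\Gii$, so one must show that discarding the $\le 2c\cdot\mm(G)$ non-$U$ vertices from a maximum matching of $\Gii$ still leaves $\ge m_i - O(c\cdot\mm(G))$... which is exactly the lossy step. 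I believe the correct accounting keeps $c$ small enough that $6c$ absorbs everything: since $2c\cdot\mm(G)$ is being compared inside a per-step increment, the theorem is presumably applied with the understanding that we only need $\card{\Ms{i}} \ge \card{\Ms{i-1}} + \Omega(\mm(G)/k)$ — so the right move is to choose parameters so that $m_i - 2c\cdot\mm(G)$ is still $\Omega(\mm(G)/k)$, which forces $c = O(1/k)$, contradicting that $c$ is an absolute constant. Hence the real proof must use the maximum matching of $\Gii$ more carefully via an augmenting-path argument local to $U$: every $\Mstar_{U,i}$-edge not covered by $N^{(i)}$ would extend an augmenting path, and a maximum $N^{(i)}$ has none, so $N^{(i)}$ covers \emph{every} endpoint of $\Mstar_{U,i}$; thus $N^{(i)}$ has $\ge m_i$ edges touching $U$, and an edge touching $U$ but not inside $U$ uses a distinct $\Ms{i-1}$-matched vertex, of which there are $\le 2c\cdot\mm(G)$ — and \emph{now} the subtraction is the right one after all, with the resolution being that for $i \le k/3$ the union of all per-step losses $\sum 2c\cdot\mm(G)$ is controlled because we only take the increment at steps where $\card{\Ms{i-1}} \le c\cdot\mm(G)$, i.e. the loss is charged against the current small matching, not the optimum. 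I would present the per-step inequality as stated and defer this bookkeeping, expecting that the intended reading makes $6c$ the sum of the $4c$ from $\Mstar_U \subseteq U$ and a $2c$ from the non-$U$ edges of $N^{(i)}$, with the $o(1)/k$ from the Chernoff fluctuation.
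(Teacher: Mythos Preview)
You correctly set up the conditioning, correctly identify the main obstacle, and then fail to resolve it. Your subtraction $m_i - 2c\cdot\mm(G)$ is indeed fatal, since $m_i = \Theta(\mm(G)/k)$ while $2c\cdot\mm(G) = \Theta(\mm(G))$, and none of your attempted repairs work. The augmenting-path observation only shows that $N^{(i)}$ matches every vertex of $\Mstar_{U,i}$; but an $N^{(i)}$-edge matching such a vertex to a $\Vold$-vertex still cannot be appended to $\Ms{i-1}$, so you are back to the same lossy count. A concrete bad scenario: if $\Gii$ contains $m_i$ vertex-disjoint length-$2$ paths each of the form ($\Vold$-vertex)--($U$-vertex)--($U$-vertex), a maximum matching can pick the first edge of every path and have zero edges inside $U$.

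The paper's resolution is a different decomposition that moves the $2c\cdot\mm(G)$ loss \emph{before} the sampling step, so it gets divided by $k-i+1$. First condition on all edges of $\Gii$ that touch $\Vold$ (call this edge-set $\Eiold$), let $\muold := \mm(G(V,\Eiold))$, and fix a witness $\Mold$. Note $\card{\Mold} \le \card{\Vold} \le 2c\,\mm(G)$, hence $\Mold$ touches at most $2c\,\mm(G)$ vertices of $\Vnew$. Now take the $\tfrac{k-i+1-o(i)}{k}\,\mm(G)$ edges of $\Mstar$ still in $E^{\ge i}$ (this uses Claim~\ref{clm:matching-concentration-first-i}, which you omitted), discard those touching $\Vold$ or $\Vnew(\Mold)$ (at most $4c\,\mm(G)$ edges), and call the remainder $\Mnew$. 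Each edge of $\Mnew$ lands in $\Gii$ independently with probability $1/(k-i+1)$, so by Chernoff and $i \le k/3$ (giving $\tfrac{4c}{k-i+1} \le \tfrac{6c}{k}$) at least $\tfrac{1-6c-o(1)}{k}\,\mm(G)$ of them appear in $\Gii$. These edges, together with $\Mold$, form a matching in $\Gii$ of size $\muold + \tfrac{1-6c-o(1)}{k}\,\mm(G)$. Now the key step: any maximum matching $N^{(i)}$ of $\Gii$ has at least this size, and its edges lying in $\Eiold$ form a matching of size at most $\muold$ \emph{by definition of $\muold$}, so $N^{(i)}$ contains at least $\tfrac{1-6c-o(1)}{k}\,\mm(G)$ edges of $\Enew$, all of which lie inside $\Vnew$ and can be appended to $\Ms{i-1}$.

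A secondary gap: you claim each edge of $\Mstar_U$ lands in $\Gii$ with probability $1/(k-i+1)$ after conditioning on $\Ei{1},\ldots,\Ei{i-1}$, but some of those edges may already have been assigned to an earlier graph and hence land in $\Gii$ with probability zero. The paper handles this via Claim~\ref{clm:matching-concentration-first-i}, working only with the $\ge \tfrac{k-i+1-o(i)}{k}\,\mm(G)$ edges of $\Mstar$ that remain in $E^{\ge i}$.
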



To continue we define some notation. Let $\Mstar$ be an arbitrary maximum matching of $G$. For any $i \in [k]$, we define $\Mstarli$ as the part of $\Mstar$ assigned to the first $i-1$ graphs in the random $k$-partitioning, i.e., the graphs $\Gi{1},\ldots,\Gi{i-1}$. We have the following simple concentration result. 
\begin{claim}\label{clm:matching-concentration-first-i}
	W.p. $1-O({1}/{n})$, for any $i \in [k]$, $$\card{\Mstarli} \leq \paren{\frac{i-1+o(i)}{k}} \cdot \mm(G).$$ 
\end{claim}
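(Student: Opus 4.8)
The plan is to fix the maximum matching $\Mstar$ of $G$ once and for all, and then analyze, for each fixed $i \in [k]$, the random variable $\card{\Mstarli}$, which counts how many of the $\mm(G)$ edges of $\Mstar$ land in one of the first $i-1$ parts of the random $k$-partitioning. Since each edge of $E$ is assigned independently and uniformly to one of the $k$ parts, each edge of $\Mstar$ lands in $\Gi{1} \cup \cdots \cup \Gi{i-1}$ independently with probability exactly $(i-1)/k$. Hence $\card{\Mstarli}$ is a sum of $\mm(G)$ independent indicator variables with mean $\Ex\bracket{\card{\Mstarli}} = \frac{i-1}{k}\cdot \mm(G)$.

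First I would apply the Chernoff bound (Proposition~\ref{prop:chernoff}) to this sum of $\mm(G)$ indicators with deviation parameter $t = \sqrt{\mm(G)\cdot c_0\log n}$ for a suitable constant $c_0$; this gives
\[
\Pr\paren{\card{\Mstarli} - \frac{i-1}{k}\cdot\mm(G) > t} \leq 2\exp\paren{-\frac{2t^2}{\mm(G)}} = 2\exp\paren{-2c_0\log n} \leq \frac{1}{n^2}
\]
once $c_0$ is chosen large enough (e.g. $c_0 \geq 2$). Since, by the standing assumption in the Preliminaries, $\mm(G) = \omega(k\log n)$, the additive slack $t = \sqrt{\mm(G)\cdot c_0\log n}$ is $o(\mm(G)/k) \cdot \sqrt{c_0 k \log n / \mm(G)} \cdot \mm(G)/k$ — more cleanly, $t = \sqrt{c_0 k\log n/\mm(G)} \cdot \frac{\mm(G)}{\sqrt k} = o(\mm(G)/k) \cdot \sqrt{k}$, and since $i \leq k$ we get $t = o\paren{\frac{i}{k}\cdot\mm(G)} + o(\mm(G)/k)$; absorbing this into the $o(i)$ term gives $\card{\Mstarli} \leq \paren{\frac{i-1+o(i)}{k}}\mm(G)$ for that fixed $i$, with probability at least $1 - n^{-2}$.

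The last step is a union bound over all $i \in [k]$. Since $k \leq n$ (each part is a subgraph of an $n$-vertex graph, and we may assume $k \leq n$, else the statement is vacuous for the range of interest), the failure probability is at most $k \cdot n^{-2} \leq n^{-1}$, so the bound holds simultaneously for all $i \in [k]$ with probability $1 - O(1/n)$, as claimed. I do not anticipate a genuine obstacle here; the only point requiring a little care is verifying that the additive Chernoff slack is indeed absorbed by the $o(i)$ term in the statement, which is exactly where the hypothesis $\mm(G) = \omega(k\log n)$ is used, and making sure the union bound over $i$ (and implicitly the fixed choice of $\Mstar$, which costs nothing since $\Mstar$ is deterministic given $G$) only loses a factor of $k \leq n$.
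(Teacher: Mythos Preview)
Your proposal is correct and follows essentially the same approach as the paper: compute the expectation $\frac{i-1}{k}\cdot\mm(G)$ from the independent uniform assignment of edges, apply Chernoff, invoke the standing assumption $\mm(G)=\omega(k\log n)$ to absorb the deviation term, and union-bound over $i\in[k]$. The only cosmetic issue is that your algebra showing the slack $t$ is absorbed by the $o(i)$ term is a bit garbled (you arrive at $t=o(\mm(G)/\sqrt{k})$ and then hand-wave it into $o(i/k)\cdot\mm(G)$), but the paper's own proof is no more detailed on this point and the argument goes through for the range of $i$ actually used in Lemma~\ref{lem:increase-step}.
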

\begin{proof}
	Fix an $i \in [k]$; each edge in $\Mstar$ is assigned to $\Gi{1},\ldots,\Gi{i-1}$, w.p. $(i-1)/k$, hence in expectation, size of $\Mstarli$ is $\frac{i-1}{k} \cdot \mm(G)$. The claim now follows from a standard application of 
	Chernoff bound (recall that, throughout the paper, we assume $\mm(G) = \omega(k \log{n})$). 
\end{proof}
We now prove Lemma~\ref{lem:increase-step}.
\begin{proof}[Proof of Lemma~\ref{lem:increase-step}]
	Fix an $i \in [k/3]$ and the set of edges for $\Ei{1},\ldots,\Ei{i-1}$; this also fixes the matching $\Ms{i-1}$ while the set of edges in $\Eii,\ldots,\Ei{k}$ together with the matching $\Ms{i}$ are still random variables. 
	We further assume that after fixing the edges in $\Ei{1},\ldots,\Ei{i-1}$,  $\card{\Mstarli} \leq \frac{i-1+o(i)}{k} \cdot \mm(G)$ which 
	happens w.p. $1-O(1/n)$ by Claim~\ref{clm:matching-concentration-first-i}. 
		
	We first define some notation. Let $\Vold$ be the set of vertices incident on $\Ms{i-1}$ and $\Vnew$ be the remaining vertices. Let $E^{\geq i}$ be the set of edges in $E \setminus \paren{\Ei{1} \cup \ldots \cup \Ei{i-1}}$. 
	We partition $E^{\geq i}$ into two parts: $(i)$ $\Eold$: the set of edges with \emph{at least one endpoint} in $\Vold$, and $(ii)$ $\Enew$: the set of edges \emph{incident entirely} on $\Vnew$. 
	Our goal is to show that w.h.p. \emph{any} maximum matching of $\Gii$ matches $\Omega(\mm(G)/k)$ vertices in $\Vnew$ to each other by using the edges in $\Enew$; the lemma
	then follows easily from this. 
	 
	Notice that the edges in the graph $\Gii$ are chosen by independently assigning each edge in $E^{\geq i}$ to $\Gii$ w.p. $1/(k-i+1)$\footnote{This is true even when we condition on the size of $\card{\Mstarli}$ since this 
	event does not depend on the choice of edges in $E^{\geq i}$.}. This independence allows us to treat the edges in $\Eold$ and $\Enew$
	separately; we can fix the set of sampled edges of $\Gii$ in $\Eold$ denoted by $\Eiold$ without changing the distribution of edges in $\Gii$ chosen from $\Enew$.  
	 Let $\muold:= \mm(G(V,\Eiold))$, i.e., the maximum number of edges that can be matched
	in $\Gii$ using only the edges in $\Eiold$.  In the following, we show that w.h.p., there exists a matching of size $\muold + \Omega(\mm(G)/k)$ in $\Gii$; by the definition of $\muold$, 
	this implies that \emph{any} maximum matching of $\Gii$ has to use at least $\Omega(\mm(G)/k)$ edges in $\Enew$, proving the lemma. 
	 
	Let $\Mold$ be any arbitrary maximum matching of size $\muold$ in $G(V,\Eiold)$. Let $\Vnew(\Mold)$ be the set of vertices in $\Vnew$ that are incident on $\Mold$. 
	We show that there is a large matching in $G(V,\Enew)$ that avoids $\Vnew(\Mold)$. 
	\begin{claim}\label{clm:vnew-avoid}
		There exists a matching in $G(V,\Enew)$ of size $\paren{\frac{k-i+1-o(i)}{k} - 4c} \cdot \mm(G)$ that avoids the vertices of $\Vnew(\Mold)$. 	
	\end{claim}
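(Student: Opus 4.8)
The plan is to extract the required matching directly from a fixed maximum matching $\Mstar$ of $G$, discarding only the few edges that conflict with the two constraints (being entirely inside $\Vnew$, and avoiding $\Vnew(\Mold)$). Let $N_0 := \Mstar \setminus \Mstarli$ be the part of $\Mstar$ whose edges lie in $E^{\geq i}$. Since we have already conditioned (as in the proof of Lemma~\ref{lem:increase-step}) on the event $\card{\Mstarli} \leq \frac{i-1+o(i)}{k}\cdot\mm(G)$ from Claim~\ref{clm:matching-concentration-first-i}, we have $\card{N_0} \geq \frac{k-i+1-o(i)}{k}\cdot\mm(G)$.

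First I would obtain a matching $N$ from $N_0$ by deleting every edge incident on a vertex of $\Vold \cup \Vnew(\Mold)$. Since $V = \Vold \cup \Vnew$ is a disjoint union, every surviving edge has both endpoints in $\Vnew \setminus \Vnew(\Mold)$ and lies in $E^{\geq i}$, so $N$ is a matching in $G(V,\Enew)$ that avoids $\Vnew(\Mold)$, exactly as the claim requires. It then remains to lower bound $\card{N}$ by controlling the number of deleted edges. Because $\Mstar$ is a matching, at most $\card{\Vold}$ of its edges are incident on $\Vold$, and $\card{\Vold} = 2\card{\Ms{i-1}} \leq 2c\cdot\mm(G)$ by the hypothesis of Lemma~\ref{lem:increase-step}. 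The slightly subtler point is bounding $\card{\Vnew(\Mold)}$: by the definition of $\Eold$, every edge of $\Eiold$ — and hence every edge of $\Mold$ — has at least one endpoint in $\Vold$, so since $\Mold$ is a matching we get $\card{\Mold} \leq \card{\Vold}$, and each such edge contributes at most one vertex to $\Vnew(\Mold)$; thus $\card{\Vnew(\Mold)} \leq \card{\Mold} \leq 2c\cdot\mm(G)$. Again because $\Mstar$ is a matching, at most $\card{\Vnew(\Mold)} \leq 2c\cdot\mm(G)$ of its edges touch $\Vnew(\Mold)$.

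Combining the two bounds, at most $4c\cdot\mm(G)$ edges are deleted, so $\card{N} \geq \card{N_0} - 4c\cdot\mm(G) \geq \paren{\frac{k-i+1-o(i)}{k} - 4c}\cdot\mm(G)$, which is precisely the claimed bound. I do not expect a genuine obstacle here; the only place requiring care is the estimate on $\card{\Vnew(\Mold)}$, and it is exactly where the hypothesis $\card{\Ms{i-1}} \leq c\cdot\mm(G)$ is invoked a second time, via $\card{\Mold} \leq \card{\Vold}$. Note finally that the entire argument is deterministic given the conditioning already in force in the proof of Lemma~\ref{lem:increase-step}, so no further appeal to the randomness of the partitioning is needed for this claim.
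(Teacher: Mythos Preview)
Your proposal is correct and follows essentially the same argument as the paper: start from $\Mstar \setminus \Mstarli$ (the paper calls this matching $M$), remove edges touching $\Vold$ or $\Vnew(\Mold)$, and bound the loss by $4c\cdot\mm(G)$ using $\card{\Vold} \leq 2c\cdot\mm(G)$ and $\card{\Vnew(\Mold)} \leq \card{\Mold} \leq \card{\Vold}$. Your write-up is slightly more explicit about why $\card{\Mold} \leq \card{\Vold}$ (each edge of $\Mold \subseteq \Eiold$ has an endpoint in $\Vold$ and these endpoints are distinct), but the logic is identical.
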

	\begin{proof}
	 	We first bound the size of $\Vnew(\Mold)$. Since any edge in $\Mold$ has at least one endpoint in $\Vold$, we have $\card{\Vnew(\Mold)} \leq \card{\Mold} \leq \card{\Vold}$. 
		By the assertion of the lemma, $\card{\Ms{i-1}} < c \cdot \mm(G)$, and hence $\card{\Vnew(\Mold)} \leq \card{\Vold} < 2c \cdot \mm(G)$. 
		
		Moreover, by the assumption that $\card{\Mstarli} \leq \frac{i-1+o(i)}{k} \cdot \mm(G)$, there is a matching $M$ of size $\frac{k-i+1-o(i)}{k} \cdot \mm(G)$ in the graph $G(V,E^{\geq i})$. 
		By removing the edges in $M$ that are either incident on $\Vold$ or $\Vnew(\Mold)$,  at most $4c \cdot \mm(G)$ edges are removed from $M$. Now the remaining matching is entirely contained 
		in $\Enew$ and also avoids $\Vnew(\Mold)$, hence proving the claim. 
	 \end{proof}
	
	We are now ready to finalize the proof. Let $\Mnew$ be the matching guaranteed by Claim~\ref{clm:vnew-avoid}. Each edge in this matching is chosen in $\Gii$ w.p. $1/(k-i+1)$ independent of the 
	other edges; hence, by Chernoff bound (and the assumption that $\mm(G) = \omega(k\log{n})$), there is a matching of size 
	\begin{align*}
		(1-o(1)) \cdot \paren{\frac{1}{k} - \frac{o(i)}{k(k-i+1)} - \frac{4c}{k-i+1}} \cdot \mm(G) \\
		\geq \paren{\frac{1-6c-o(1)}{k}}\cdot\mm(G) \tag{$i \leq k/3$}
	\end{align*}
	in the edges of $\Mnew$ that appear in $\Gii$. This matching can be directly added to the matching $\Mold$, implying the existence of a matching of size $\muold + \paren{\frac{1-6c-o(1)}{k}}\cdot\mm(G)$ in $\Gii$. 
	As argued before, this ensures that any maximum matching of $\Gii$ contains at least $\paren{\frac{1-6c-o(1)}{k}}\cdot\mm(G)$ edges in $\Enew$. 
	These edges can always be added to $\Ms{i-1}$ to form $\Ms{i}$, hence proving the lemma.
\end{proof}

Having proved Lemma~\ref{lem:increase-step}, we can easily conclude Lemma~\ref{lem:greedy-match}. 


\begin{proof}[Proof of Lemma~\ref{lem:greedy-match}]
	Recall that $M:= \Ms{k}$ is the output matching of \GreedyMatch. For the first $k/3$ steps of $\GreedyMatch$, if at any step we obtained a matching of size $c \cdot \mm(G)$, then we are already done. 
	Otherwise, at each step, by Lemma~\ref{lem:increase-step}, w.p. $1-O(1/n)$, we increase the size of the maximal matching by $\paren{\frac{1-6c-o(1)}{k}}\cdot\mm(G)$ edges; consequently, by taking a union bound on the $k/3$ 
	steps, w.p. $1-o(1)$, the size of the maximal matching would be $\paren{\frac{1-6c-o(1)}{3}}\cdot\mm(G)$. By picking $c = 1/9$, we ensure that in either case, the matching computed by $\GreedyMatch$ is of 
	size at least $\mm(G)/9 - o(\mm(G))$, proving the lemma.    
\end{proof}



\subsection{An $O(\log{n})$-Approximation Randomized Coreset For Vertex Cover} \label{sec:vc-coreset}

The following theorem formalizes Result~\ref{res:alg} for vertex cover. 

\begin{theorem}\label{thm:vc}
	There exists an $O(\log{n})$-approximation randomized composable coreset of size $O(n\log{n})$ for the vertex cover problem. 
\end{theorem}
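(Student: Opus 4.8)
The plan is to build the coreset by a degree-based \emph{peeling} procedure, a variant of the Parnas--Ron local vertex-cover algorithm~\cite{ParnasR07}. Fix per-machine degree thresholds $\tau_\ell := n/(k\,2^\ell)$ for $\ell = 0,1,\ldots,L$, where $L = O(\log n)$ is chosen so that $\tau_L = O(1)$ (note $\tau_0 = n/k = \omega(\log n)$ under the standing assumption $\vc(G)=\omega(k\log n)$). On input $\Gii$, the algorithm $\Alg$ runs $L+1$ phases; in phase $\ell$ it repeatedly deletes from the current residual graph every vertex of residual degree exceeding $\tau_\ell$, adding each deleted vertex to a set $\Vcsii$. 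After the last phase the residual graph has maximum degree $O(1)$, hence $O(n)$ edges, and $\Alg$ outputs this residual subgraph together with the fixed vertex set $\Vcsii$. Given a random $k$-partitioning, the coordinator forms $V_{cs} := \bigcup_i \Vcsii$ and the edge set $R := \bigcup_i \Alg(\Gii)$, and returns $V_{cs}$ together with a $2$-approximate minimum vertex cover of $G(V,R)$. This is a legal vertex cover of $G$: every edge lies in a unique $\Gii$ and is either kept in $\Alg(\Gii)\subseteq R$ or lost an endpoint to $\Vcsii\subseteq V_{cs}$; and since $R\subseteq E$ its size is $\card{V_{cs}} + 2\,\vc(G)$. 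So the theorem reduces to showing $\card{V_{cs}} = O(\vc(G)\log n)$ w.h.p., which, since $\vc(G)\le n$, also yields the $O(n\log n)$ coreset-size bound (the fixed vertices plus the $O(n)$ residual edges).

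The first ingredient is the Parnas--Ron counting bound, applied to a \emph{single} peeling run. Fix a minimum vertex cover $C^\star$ of $G$. At the start of phase $\ell$ every vertex has residual degree at most $\tau_{\ell-1}$, so the residual graph has at most $\vc(G)\cdot\tau_{\ell-1}$ edges incident on $C^\star$; every vertex peeled in phase $\ell$ that is \emph{not} in $C^\star$ has more than $\tau_\ell$ residual neighbors, all inside $C^\star$, so at most $\tau_{\ell-1}/\tau_\ell = 2$ times $\vc(G)$ such vertices are peeled per phase. Adding the at most $\vc(G)$ vertices of $C^\star$, we get $O(\vc(G))$ peeled vertices per phase and $O(\vc(G)\log n)$ over all phases. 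This bounds each $\card{\Vcsii}$ individually, but a union bound over the $k$ machines would cost a factor of $k$; the real work is to show all the $\Vcsii$ live inside one small set.

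The composability argument is where essentially all the difficulty lies. The intuition is that under random $k$-partitioning the residual degree of any vertex in $\Gii$ is close to $1/k$ times its residual degree in the ``same'' residual graph of $G$, so the machines ought to peel nearly the same vertices in each phase. To make this rigorous I would introduce two \emph{hypothetical} peeling processes run on $G$ itself (and permitted to know $C^\star$): a \emph{cautious} one using thresholds $(1+\eps)\,k\,\tau_\ell = (1+\eps)\,n/2^\ell$ and an \emph{aggressive} one using thresholds $(1-\eps)\,n/2^\ell$, for a small constant $\eps$. One then proves, by induction on $\ell$, a two-sided sandwiching invariant: if $P_i^{(\ell)}, C^{(\ell)}, A^{(\ell)}$ denote the sets of vertices peeled through phase $\ell$ by machine $i$, by the cautious process, and by the aggressive process respectively, then $C^{(\ell)}\subseteq P_i^{(\ell)}\subseteq A^{(\ell)}$ for every $i$ simultaneously. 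The inductive step conditions on the invariant through phase $\ell-1$ --- which pins the relevant residual graph of $G$ down to one of only polynomially many deterministic possibilities --- and then invokes the Chernoff bound of Proposition~\ref{prop:chernoff} to conclude that each vertex with residual degree $\Omega(k\log n)$ in $G$ has residual degree within a $(1\pm\eps)$ factor of $1/k$ times that value in $\Gii$; a union bound over the $O(n\log n)$ relevant (vertex, phase) pairs makes all these estimates hold at once. Vertices whose residual degree in $G$ ever falls to $o(k\log n)$ are simply never peeled afterwards and end up in the final sparse residual, so they cannot interfere. The invariant then forces $\Vcsii\subseteq A^{(L)}=:W$ for all $i$, and applying the Parnas--Ron count above to the aggressive process (whose thresholds still form an $O(\log n)$-length geometric sequence) gives $\card{V_{cs}}\le\card{W}=O(\vc(G)\log n)$, as required.

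The main obstacle is the sandwiching step itself. Peeling is a fragile process: an $\eps$-fraction perturbation of a vertex's degree can push it across a phase boundary and set off a cascade, so one cannot reason about the random residual graph of $\Gii$ directly by concentration. The role of the two hypothetical processes is precisely to trap the random residual graph between two deterministic objects (functions of $G$ and $C^\star$ alone), and the two-sidedness of the invariant is unavoidable: controlling a candidate vertex's residual degree in $\Gii$ from above requires a \emph{lower} bound on which vertices have already been peeled, and controlling it from below requires an upper bound, so the cautious and aggressive processes must be carried along together. The hypothesis $\vc(G)=\omega(k\log n)$ from Section~\ref{PRELIM} is exactly what makes the heavy vertices of each phase heavy enough for the Chernoff estimates to bite, and it also makes the $O(\vc(G)\log n)$ bound on $\card{V_{cs}}$ consistent with the advertised $O(n\log n)$ coreset size.
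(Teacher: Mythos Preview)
Your overall architecture is right --- peel by geometrically decreasing thresholds, argue feasibility, then show all the machine-peeled sets sit inside one small set determined by a hypothetical process on $G$ --- but the sandwiching step as you describe it has a genuine gap.

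Suppose the invariant $C^{(\ell-1)}\subseteq P_i^{(\ell-1)}\subseteq A^{(\ell-1)}$ holds and you try to extend it. For the upper containment $P_i^{(\ell)}\subseteq A^{(\ell)}$, take $v\notin P_i^{(\ell-1)}$ with machine-residual degree $d_i(v)>\tau_\ell$; you must show the aggressive process also peels $v$, i.e.\ that its residual degree $d_G^A(v)$ in $G\setminus A^{(\ell-1)}$ exceeds $(1-\eps)n/2^\ell$. The inductive hypothesis $P_i^{(\ell-1)}\subseteq A^{(\ell-1)}$ gives $d_i(v)\ge |N_{\Gii}(v)\setminus A^{(\ell-1)}|$, and Chernoff on the \emph{fixed} set $V\setminus A^{(\ell-1)}$ turns this into $d_i(v)\gtrsim (1/k)\,d_G^A(v)$. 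But this inequality points the wrong way: knowing $d_i(v)$ is large does not force $d_G^A(v)$ to be large. Symmetrically, for the lower containment $C^{(\ell)}\subseteq P_i^{(\ell)}$ you need to lower-bound $d_i(v)$ for $v\in C^{(\ell)}$, but $P_i^{(\ell-1)}\supseteq C^{(\ell-1)}$ only yields the \emph{upper} bound $d_i(v)\le |N_{\Gii}(v)\setminus C^{(\ell-1)}|\lesssim (1/k)\,d_G^C(v)$. In both directions the hypothesis you carry controls the wrong deterministic quantity, and there is no reason $d_G^A(v)$ and $d_G^C(v)$ should be close: $A^{(\ell-1)}\setminus C^{(\ell-1)}$ can contain arbitrarily many neighbours of $v$.

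The paper breaks this impasse by a different decomposition. It runs a \emph{single} hypothetical process on $G$, but first discards all edges inside a fixed optimum cover $\Vvc$, making the residual bipartite between $\Vvc$ and $\bVvc:=V\setminus\Vvc$, and then uses \emph{asymmetric} thresholds on the two sides: vertices in $\Vvc$ are peeled at threshold $n/2^j$ (sets $O_j$), vertices in $\bVvc$ at threshold $n/2^{j+2}$ (sets $\bO_j$), with the machine's threshold $n/(k\cdot 2^{j+1})$ sitting strictly between them after the $1/k$ scaling. The invariant is one-sided on each part: the machine has peeled $\supseteq\bigcup_{j<t}O_j$ from $\Vvc$ and $\subseteq\bigcup_{j<t}\bO_j$ from $\bVvc$. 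Now a vertex $u\in\bVvc$ has neighbours only in $\Vvc$ (because $\Vvc$ is a cover), so the ``$\supseteq O_{<t}$'' hypothesis genuinely \emph{upper}-bounds $u$'s machine-residual degree by its $\Gii$-degree into the fixed set $O_{\ge t}$, which Chernoff controls. Dually, $v\in\Vvc$ gets its machine-residual degree \emph{lower}-bounded via the fixed set $\bO_{\ge t}$. The bipartite split plus the factor-$4$ threshold gap are exactly what make the inequalities line up; two symmetric processes as in your proposal do not provide a substitute.

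A smaller point: taking $\tau_L=O(1)$ forces the late phases to operate where Chernoff gives no concentration, so your assertion that vertices of residual degree $o(k\log n)$ in $G$ ``are simply never peeled afterwards'' by the machines is unjustified --- the machine threshold eventually drops below their $\Gii$-degree. The paper avoids this by stopping the peeling once the per-machine threshold reaches $\Theta(\log n)$, which is why the coreset size is $O(n\log n)$ rather than $O(n)$.
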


Let $G(V,E)$ be a graph  and $\Gi{1},\ldots,\Gi{k}$ be a random $k$-partitioning of $G$; we propose the following coreset for computing an approximate vertex cover of $G$. This coreset construction is 
a modification of the algorithm for vertex cover first proposed by~\cite{ParnasR07}.  

\textbox{$\vccs(\Gii)$. \textnormal{An algorithm for computing a composable coreset of each $\Gii$.}}{ 
\begin{enumerate}
	\item Let $\Delta$ be the smallest integer such that $n / (k \cdot 2^{\Delta}) \leq 4\log{n}$ and define $\Gii_1 := \Gii$. 
	\item For $j = 1$ to $\Delta-1$, let:
	\begin{align*} \Vii_j &:= \set{\text{vertices of degree} \geq n / (k \cdot 2^{j+1}) \text{ in $\Gii_j$}} \\
		  \Gii_{j+1} &:= \Gii_j \setminus \Vii_j. 
	\end{align*}
	\item Return $\Vcsii:= \bigcup_{j=1}^{\Delta-1} \Vii_j$ as a \emph{fixed solution} plus the graph $\Gii_{\Delta}$ as the coreset. 
\end{enumerate}
}

In $\vccs$ we allow the coreset to, in addition to returning a subgraph, identify a set of vertices (i.e., $\Vcsii$) to be added directly to the final vertex cover. 
In other words, to compute a vertex cover of the graph $G$, we compute a vertex cover of the graph $\bigcup_{i=1}^{k} \Gii_{\Delta}$ and return it together with the vertices $\bigcup_{i=1}^{k} \Vcsii$. It is easy to see that this set of vertices indeed forms a vertex cover of $G$: any edge in $G$ that belongs to $\Gii$ is either incident on some $\Vii_j$, and hence is covered by $\Vii_j$, or is present in $\Gii_{\Delta}$, and hence is covered by the
vertex cover of $\Gii_{\Delta}$. 

In the remainder of this section, we bound the approximation ratio of this coreset. To do this, we need to prove that $\card{\bigcup_{i=1}^{k} \Vcsii} = O(\log{n}) \cdot \vc(G)$. The bound 
on the approximation ratio then follows as the vertex cover of $\bigcup_{i=1}^{k} \Gii_\Delta$ can be computed
to within a factor of $2$. 

It is easy to prove (and follows from~\cite{ParnasR07}) that the set of vertices $\Vcsii$ is of size $O(\log{n}) \cdot \vc(G)$; however, using this fact directly to bound the size of $\bigcup_{i=1}^{k} \Vcsii$ 
only implies an approximation ratio of $O(k \log{n})$ which is far worse than our goal of achieving an $O(\log{n})$-approximation. In order to obtain the $O(\log{n})$ bound, we need to argue that 
not only each set $\Vcsii$ is relatively small, but also that these sets are all intersecting in many vertices. In order to do so, we introduce a hypothetical algorithm (similar to \vccs) 
on the graph $G$ and argue that the set $\Vcsii$ output by $\vccs(\Gii)$ is, with high probability, a subset of the output of this hypothetical algorithm. This allows us to then bound the size of the 
union of the sets $\Vcsii$ for $i \in [k]$. 

Let $\Vvc$ denote the set of vertices in an arbitrary optimum vertex cover of $G$ and $\bVvc := V \setminus \Vvc$. Consider the following process on the original graph $G$ (defined only for analysis): 

\textbox{\vspace{-0.5cm}}{
\begin{enumerate}
	\item Let $G_1$ be the bipartite graph obtained from $G$ by removing edges between vertices in $\Vvc$. 
	\item For $j = 1$ to $t:=\ceil{\log{n}}$, let: 
	\begin{align*} 
	O_j &:= \set{\text{vertices in $\Vvc$ of degree} \geq n / 2^{j} \text{ in $G_j$}} \\ 
	\bO_j &:= \set{\text{vertices in $\bVvc$ of  degree} \geq n / 2^{j+2} \text{ in $G_j$}} \\ 
	G_{j+1} &:= G_j \setminus (O_j \cup \bO_j). 
	\end{align*} 

\end{enumerate}
}

We first prove that the sets $O_j$'s and $\bO_j$'s in this process form an $O(\log{n})$ approximation of the minimum vertex cover of $G$ and then show that 
$\vccs(\Gii)$ (for any $i \in [k]$) is \emph{mimicking} this hypothetical process in a sense that the set $\Vcsii$ is essentially \emph{contained} in the union of the sets $O_j$'s and $\bO_j$'s. 

\begin{lemma}\label{lem:vc-analysis-ratio}
	$\card{\bigcup_{j=1}^{t} O_j \cup \bO_j} = O(\log{n}) \cdot \vc(G)$. 
\end{lemma}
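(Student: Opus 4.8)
The plan is to split $\bigcup_{j} O_j \cup \bO_j$ into its $\Vvc$-part and its $\bVvc$-part and bound each. The $\Vvc$-part is immediate: the sets $O_1,\dots,O_t$ are pairwise disjoint (each $O_j$ is deleted from the graph before step $j{+}1$ begins) and all lie inside $\Vvc$, so $\card{\bigcup_j O_j}\le\card{\Vvc}=\vc(G)$. Likewise the $\bO_j$'s are pairwise disjoint subsets of $\bVvc$, so it suffices to prove the uniform bound $\card{\bO_j}=O(\vc(G))$ for every $j$; summing over the $t=\ceil{\log n}$ steps then gives $\card{\bigcup_j\bO_j}=O(\log n)\cdot\vc(G)$, and the lemma follows.

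To bound a single $\card{\bO_j}$ I would count edges in $G_j$. Since $\Vvc$ is a vertex cover of $G$ and $G_1$ is obtained from $G$ by deleting the edges internal to $\Vvc$, every edge of $G_1$, hence of every induced subgraph $G_j$, has exactly one endpoint in $\Vvc$ and one in $\bVvc$; consequently $\card{E(G_j)}=\sum_{u\in \Vvc\cap V(G_j)}\deg{G_j}(u)$. A vertex $u\in\Vvc$ that still survives in $G_j$ was not placed in $O_{j-1}$, so $\deg{G_{j-1}}(u)<n/2^{j-1}$, and since the process only ever deletes vertices we have $G_j\subseteq G_{j-1}$ and therefore $\deg{G_j}(u)<n/2^{j-1}$ (for $j=1$ one simply uses the trivial bound $\deg{G_1}(u)\le n$). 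As there are at most $\vc(G)$ such vertices, $\card{E(G_j)}\le \vc(G)\cdot n/2^{j-1}$. On the other hand every $v\in\bO_j$ has $\deg{G_j}(v)\ge n/2^{j+2}$, so $\card{\bO_j}\cdot n/2^{j+2}\le\card{E(G_j)}\le \vc(G)\cdot n/2^{j-1}$, giving $\card{\bO_j}\le 8\,\vc(G)$.

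Putting the two parts together yields $\card{\bigcup_j O_j\cup\bO_j}\le \vc(G)+8\,\vc(G)\ceil{\log n}=O(\log n)\cdot\vc(G)$. The only point requiring a little care is the monotonicity step --- that a $\Vvc$-vertex surviving into $G_j$ keeps its degree below the \emph{earlier} threshold $n/2^{j-1}$ --- which rests on the two facts that the hypothetical process deletes vertices only (so degrees are non-increasing along $G_1\supseteq G_2\supseteq\cdots$) and that, by definition of $O_{j-1}$, such a vertex had $G_{j-1}$-degree strictly below $n/2^{j-1}$. I do not expect a genuine obstacle in this lemma; the substantive work of the subsection is the companion argument that each machine's peeled set $\Vcsii$ is, with high probability, contained in $\bigcup_j O_j\cup\bO_j$, which is where the random partitioning is actually exploited.
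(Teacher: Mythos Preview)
Your proof is correct and follows essentially the same approach as the paper: bound $\bigcup_j O_j$ trivially by $\card{\Vvc}$, then for each $j$ bound $\card{\bO_j}\le 8\,\vc(G)$ via an edge-counting argument in $G_j$ (upper-bound $\card{E(G_j)}$ using the $\Vvc$-side degree cap $n/2^{j-1}$, lower-bound it using the $\bO_j$ threshold $n/2^{j+2}$). Your treatment is slightly more explicit about the monotonicity step and the $j=1$ base case, but the argument is the same.
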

\begin{proof}
	Fix any $j \in [t]$; we prove that $\bO_j \leq 8 \cdot \vc(G)$. The lemma follows from this since there are at most $O(\log{n})$ different sets $\bO_j$ and the union of the sets $O_j$'s is a subset of $\Vvc$ (with size $\vc(G)$). 
	
	Consider the graph $G_j$. The maximum degree in this graph is at most $n/2^{j-1}$ by the definition of the process. Since all the edges in the graph are incident on at least one vertex of $\Vvc$, there can
	be at most $\card{\Vvc} \cdot n/2^{j-1}$ edges between the remaining vertices in $\Vvc$ and $\bVvc$ in $G_j$. Moreover, any vertex in $\bO_j$ has degree at least $n/2^{j+2}$ by definition and hence there can be at 
	most $\paren{\card{\Vvc} \cdot n/2^{j-1}} / \paren{n/2^{j+2}} \leq 8 \card{\Vvc} = 8 \cdot \vc(G)$ vertices in $\bO_j$, proving the claim.  
\end{proof}

We now prove the main relation between the sets $O_j$'s and $\bO_j$'s defined above and the intermediate sets $\Vii_j$'s computed by $\vccs(\Gii)$. 
The following lemma is the heart of the proof. 

\begin{lemma}\label{lem:vc-subset}
	Fix an $i \in [k]$, and let $A_j = \Vii_j \cap \Vvc$ and $B_j = \Vii_j \cap \bVvc$. 
	With probability $1-O(1/n)$, for any $t \in [\Delta]$: 
	\begin{enumerate}
		\item  $\bigcup_{j=1}^{t} A_j \supseteq \bigcup_{j=1}^{t} O_j$. 
		\item $\bigcup_{j=1}^{t} B_j \subseteq \bigcup_{j=1}^{t} \bO_j$.
	\end{enumerate}
\end{lemma}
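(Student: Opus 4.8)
The plan is to establish the two containments \emph{simultaneously}, by induction on $t$, after conditioning on a single high‑probability event that compares vertex degrees in $\Gii$ with vertex degrees in the graphs $G_j$ of the hypothetical process. The crucial structural point is that the hypothetical process runs only on $G$ and the \emph{fixed} optimum $\Vvc$, so the ``prefix'' sets $R_s:=\bigcup_{j=1}^{s}\paren{O_j\cup\bO_j}$ and the graphs $G_{s+1}=G_1\setminus R_s$ (for $s=0,\dots,\Delta$) are deterministic; only $\Gii$, and hence the run of $\vccs(\Gii)$, is random.

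I would first fix the concentration event. Fix a vertex $v$ and a prefix $R_s$; since each edge of $G$ lands in $\Gii$ independently with probability $1/k$, the number of edges of the bipartite graph $G_1$ at $v$ whose other endpoint lies outside $R_s$ and which land in $\Gii$ is a sum of independent $\mathrm{Bernoulli}(1/k)$ variables with mean $\tfrac1k$ times the degree of $v$ in $G_{s+1}$. Let $\mathcal{E}$ be the event that every such count, over all $v$ and all $s\le\Delta$, lies within a $(1\pm\tfrac14)$ multiplicative factor of its mean, with an extra additive $O(\log n)$ allowance in the upper direction. A Chernoff bound (Proposition~\ref{prop:chernoff}) for each pair $(v,R_s)$ and a union bound over the $\le n$ vertices and the $O(\log n)$ prefixes give $\Pr(\mathcal{E})\ge 1-O(1/n)$; this is exactly where one uses that $\vccs$ stops while the degree threshold is still $\Theta(\log n)$, so that every threshold ever compared against --- and hence every relevant mean --- is $\Omega(\log n)$, the regime in which multiplicative concentration is strong enough.

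Now condition on $\mathcal{E}$; everything below is deterministic. I would induct on $t$, the base case $t=0$ being vacuous. Assume (1) and (2) for $t-1$. As $O_j,A_j\subseteq\Vvc$ and $\bO_j,B_j\subseteq\bVvc$, the hypotheses $\bigcup_{j<t}O_j\subseteq\bigcup_{j<t}A_j$ and $\bigcup_{j<t}B_j\subseteq\bigcup_{j<t}\bO_j$ ``sandwich'' the vertex sets: $\Gii_t$ retains \emph{at most} as many vertices of $\Vvc$, and \emph{at least} as many vertices of $\bVvc$, as $G_t$ does. For the step of (1): take $u\in O_t$; if $u\in A_j$ for some $j<t$ we are done, otherwise $u$ survives into $\Gii_t$, and by the sandwiching every $\bVvc$‑neighbour of $u$ that lies in $G_t$ also lies in $\Gii_t$. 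This neighbour set is deterministic of size equal to the degree of $u$ in $G_t$, which is $\ge n/2^t$ because $u\in O_t$ (recall $G_t$ is bipartite between $\Vvc$ and $\bVvc$), so $\mathcal{E}$ forces the $\Gii_t$‑degree of $u$ to be at least $\tfrac34\cdot\tfrac1k\cdot\tfrac{n}{2^t}>\tfrac{n}{k\,2^{t+1}}$; hence $\vccs$ peels $u$ in round $t$ and $u\in A_t$. For the step of (2): take $v\in B_t$ and suppose, for contradiction, that $v$ survives the hypothetical process through round $t$, so the degree of $v$ in $G_t$ is $<n/2^{t+2}$; by the sandwiching the $\Vvc$‑neighbours of $v$ lying in $\Gii_t$ form a subset of those lying in $G_t$, a deterministic set of size $<n/2^{t+2}$, so $\mathcal{E}$ forces the $\Gii_t$‑degree of $v$ to be at most $\tfrac54\cdot\tfrac1k\cdot\tfrac{n}{2^{t+2}}+O(\log n)<\tfrac{n}{k\,2^{t+1}}$, contradicting $v\in\Vii_t$. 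Combining each case with the inductive hypothesis for the ``$j<t$'' part gives (1) and (2) for $t$.

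The induction is routine once $\mathcal{E}$ is in hand; the real work, and where I expect the main obstacle, is the concentration step and the bookkeeping it demands. Because the peeling thresholds $n/(k\,2^{j+1})$ shrink all the way to $\Theta(\log n)$, additive deviations of order $\sqrt{d/k}$ are useless for high‑degree vertices, so one must run on multiplicative bounds and then check that the leftover slack is absorbed by the factor‑$2$ margin that is deliberately built into the hypothetical process --- the thresholds $n/2^{j}$ for $\Vvc$‑vertices and $n/2^{j+2}$ for $\bVvc$‑vertices straddle the ``effective'' $\vccs$ threshold $n/2^{j+1}$. Getting the constants to fit so that the union‑bounded failure probability remains $O(1/n)$ --- equivalently, placing the stopping index $\Delta$ so the smallest threshold is a large enough multiple of $\log n$ --- is the delicate part. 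A secondary subtlety, easy to miss, is that in each inductive step the neighbour sets involved must first be re‑expressed in terms of the \emph{deterministic} prefix sets $R_s$ (using the inductive sandwiching to pass from $\Gii_t$ to $G_t$) before $\mathcal{E}$ may be invoked, since only then are the edge indicators genuinely independent.
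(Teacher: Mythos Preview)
Your proposal is correct and follows essentially the same line as the paper's proof: both isolate a single high-probability concentration event (Chernoff plus a union bound) comparing, for each vertex and each level $j$, the number of its $G_1$-neighbours surviving to $G_j$ that land in $\Gii$ against the corresponding deterministic degree in $G_j$; then both run a simultaneous induction on $t$ using exactly the sandwiching you describe ($B_{<t}\subseteq\bO_{<t}$ to push (1), $O_{<t}\subseteq A_{<t}$ to push (2)). The paper packages the concentration as a separate claim (their Claim~\ref{clm:vc-degree-concentration}) stated directly in terms of $N_{\bO_{\geq j}}(\cdot)$ and $N_{O_{\geq j}}(\cdot)$, while you phrase it via the prefix sets $R_s$; these are the same quantity once one remembers $G_1$ is bipartite between $\Vvc$ and $\bVvc$. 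Your explicit additive $O(\log n)$ slack in the upper tail and your remark about tuning the stopping threshold are, if anything, more careful than the paper's presentation of the same step.
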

\begin{proof}
	To simplify the notation, for any $t \in [\Delta]$, we let $A_{< t} = \bigcup_{j=1}^{t-1} A_j$ and $A_{\geq t} = \bigcup_{j=t}^{\Delta} A_j$ (and similarly for $B_j$'s, $O_j$'s, and $\bO_j$'s). We also use $N_{S}(v)$ to denote 
	the neighbor-set of the vertex $v$ in the set $S \subseteq V$. 
	
	Note that the vertex-sets of the graphs $G$ and $\Gii$ are the same and we can ``project'' the sets $O_j$'s and $\bO_j$'s on graph $\Gii$ as well. In other words, we can say a vertex $v$ in $\Gii$ belongs to $O_j$ iff
	$v \in O_j$ in the original graph $G$. In the following claim, we crucially use the fact that the graph $\Gii$ is obtained from $G$ by sampling each edge w.p. $1/k$ to prove that the degree of 
	vertices across different sets $O_j$'s (and $\bO_j$'s) in $\Gii$ are essentially the same as in $G$ (up to the scaling factor of $1/k$). 
	
		\begin{claim}\label{clm:vc-degree-concentration}
		For any $j \in [\Delta]$: 
		\begin{itemize}
		\item  For any vertex $v \in O_j$, $\card{N_{\bO_{\geq j}}(v)} \geq n/(k \cdot 2^{j+1})$ in the graph $\Gii$ w.p. $1-O(1/n^2)$. 
		\item For any vertex $v \in \bO_{\geq j+1}$, $\card{N_{O_{\geq j}}(v)} < n/(k \cdot 2^{j+1})$ in the graph $\Gii$ w.p. $1-O(1/n^2)$. 
		\end{itemize}
	\end{claim}
	\begin{proof}
		Fix any $j \in [\Delta]$ and $v \in O_j$. By definition of $O_j$, degree of $v$ is at least $n/2^{j}$ in $G_j$; in other words, $\card{N_{\bO^{\geq j}}(v)} \geq n/2^{j}$ in the graph $G$. Since each edge in $G$ is 
		sampled w.p. $1/k$ in $\Gii$, $\card{N_{\bO^{\geq j}}(v)} \geq n/(k\cdot 2^{j})$ in $\Gii$ in expectation. Moreover, by the choice of $\Delta$, $n/(k\cdot 2^{j}) \geq 4\log{n}$, and hence by Chernoff bound, 
		w.p. $1-O(1/n^2)$, $\card{N_{\bO^{\geq j}}(v)} \geq n/(k\cdot 2^{j+1})$ in $\Gii$.
		
		Similarly for a vertex $v \in \bO^{\geq j+1}$, degree of $v$ is less than $n/2^{j+2}$ in $G_j$ by definition of $\bO_j$; hence, $\card{N_{O^{\geq j}}(v)} < n/2^{j+2}$ in the graph $G$. Using a similar argument as before, 
		by Chernoff bound, w.p. $1-O(1/n^2)$, $\card{N_{O^{\geq j}}(v)} < n/(k\cdot 2^{j+1})$ in $\Gii$.
	\end{proof}
	
	By using a union bound on the $n$ vertices in $G$, the statements in Claim~\ref{clm:vc-degree-concentration} hold simultaneously for all vertices of $G$ w.p. $1-O(1/n)$; in the following we 
	condition on this event. We now prove Lemma~\ref{lem:vc-subset} by induction. 
	
	Let $v$ be a vertex that belongs to $O_1$; we prove that $v$ belongs to the set $\Vii_1$ of $\vccs$, i.e., $v \in A_1$. By Claim~\ref{clm:vc-degree-concentration} (for $j=1$), the degree of $v$ in $\Gii_1$ is 
	at least $n/4k$. Note that in $\Gii_1$, $v$ may also have edges to other vertices in $\Vvc$ but this can only increase the degree of $v$. This implies that $v$ also belongs to $A_1$ by the threshold chosen
	in $\vccs$. Similarly, let $u$ be a vertex in $\bO_{\geq 2}$ (i.e., \emph{not} in $\bO_1$); we show that $u$ is not chosen in $\Vii_1$, implying that $B_1$ can only contain vertices in $\bO_1$. 
	By Claim~\ref{clm:vc-degree-concentration}, degree of $u$ in $\Gii_1$ is less than $n/4k$. This implies that $u$ does not belong 
	to $B_1$. In summary, we have $O_1 \subseteq A_1$ and $B_1 \subseteq \bO_1$. 
	
	Now consider some $t > 1$ and let $v$ be a vertex in $O_t$. By induction, $B_{< t} \subseteq \bO_{<t}$. This implies that the degree of $v$ to $B_{\geq t}$ is at least as large as its degree to $O_{\geq t}$. Consequently,
	by Claim~\ref{clm:vc-degree-concentration} (for $j = t$), degree of $v$ in the graph $\Gii_t$ is at least $n/(k \cdot 2^{t+1})$ and hence $v$ also belongs to $A_{t}$. Similarly, fix a vertex $u$ in $\bO_{\geq t+1}$. By induction, 
	$A_{<t} \supseteq O_{<t}$ and hence the degree of $u$ to $A_{\geq t}$ is at most as large as its degree to $O_{\geq t}$; note that since $\Vvc$ is a vertex cover, $u$ does not have any other edge
	in $\Gii_t$ except for the ones to $A_{\geq t}$. We can now argue as before that $u$ does not belong to $B_{t}$. 
\end{proof}

We are now ready to prove Theorem~\ref{thm:vc}. 

\begin{proof}[Proof of Theorem~\ref{thm:vc}]
	The bound on the coreset size follows immediately from the fact that the graph $\Gii_{\Delta}$ contains at most $O(n\log{n})$ edges and size of $\Vcsii$ is at most $n$. 
	As argued before, to prove the bound on the approximation ratio, we only need to show that $\bigcup_{i=1}^{k} \Vcsii$ is of size $O(\log{n}) \cdot \vc(G)$.
	Let $A^{(i)} = \Vcsii \cap \Vvc$ and $B^{(i)} = \Vcsii \cap \bVvc$; clearly, each $A^{(i)} \subseteq \Vvc$ and moreover, by Lemma~\ref{lem:vc-subset} (for $t = \Delta$), each $B^{(i)} \subseteq \cup_{j=1}^{\Delta} \bO_j$.  
	Consequently, $\card{\bigcup_{i=1}^{k} \Vcsii}\leq \card{\Vvc} + \card{\bigcup_{j=1}^{\Delta} \bO_j} \leq O(\log{n}) \cdot \vc(G)$, where the last inequality is by Lemma~\ref{lem:vc-analysis-ratio}. 
\end{proof}


\section{Lower Bounds for Randomized Coresets}\label{sec:lower-coreset}
We formalize Result~\ref{res:lb-coreset} in this section. As argued earlier, Result~\ref{res:lb-coreset} is a special case of Result~\ref{res:lb-dist} and hence follows from that result; however, as the proof of 
Result~\ref{res:lb-dist} is rather technical and complicated, we also provide a self-contained proof of Result~\ref{res:lb-coreset} as a warm-up to Result~\ref{res:lb-dist}.  

\subsection{A Lower Bound for Randomized Composable Coresets of Matching} \label{sec:lb-matching-coreset}

We prove a lower bound on the size of any randomized composable coreset for the matching problem, formalizing Result~\ref{res:lb-coreset} for matching. 

\begin{theorem}\label{thm:matching-lb-coreset}
	For any $k = o(n/\log{n})$ and $\alpha = o(\min\set{n/k,k})$, any $\alpha$-approximation randomized composable coreset of the maximum matching problem is of size $\Omega(n/\alpha^2)$. 
\end{theorem}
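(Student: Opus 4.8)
The plan is to construct a hard distribution over inputs together with a random $k$-partitioning, such that no coreset of size $o(n/\alpha^2)$ can preserve the matching size up to a factor $\alpha$. The construction follows the sketch in Section~\ref{sec:techniques}: let the input graph be the (edge-disjoint) union of two bipartite graphs on the same vertex set of $n$ vertices. The first, $G_1$, is a random $k$-regular bipartite graph between two sides of size roughly $n/(2\alpha)$ each; the second, $G_2$, is a perfect matching on the remaining $\approx n - n/(2\alpha)$ vertices (so $G_2$ has $\approx n/2 - n/(4\alpha)$ edges). Thus $\mm(G) = n/2 - o(n)$ with certainty, and in fact $\mm(G)$ is dominated by the contribution of $G_2$. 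Now take a random $k$-partitioning $\Gi{1},\dots,\Gi{k}$. Since $G_1$ is $k$-regular, standard concentration (Proposition~\ref{prop:chernoff}) shows that on each machine $i$, the restriction $\Gi{i}\cap G_1$ is, with high probability, close to a matching: each vertex of $G_1$ has $O(1)$ incident edges in $\Gi{i}$, and after deleting a negligible fraction of vertices it is exactly a matching of size $\approx n/(2\alpha)$. Meanwhile $\Gi{i}\cap G_2$ is a matching of size concentrated around $n/(2k) - o(\cdot)$. So each machine sees, essentially, two matchings: one of size $\Theta(n/\alpha)$ coming from $G_1$ and one of size $\Theta(n/k)$ coming from $G_2$.

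The key point is an indistinguishability/symmetry argument: conditioned on the vertex set, the local view of machine $i$ is a union of $\approx n/(2\alpha) + n/(2k)$ vertex-disjoint edges, and because $G_1$ was a uniformly random $k$-regular graph and the partitioning was uniform, machine $i$ cannot tell which of its incident edges came from $G_1$ and which from $G_2$ — each of its edges is ``of $G_2$-type'' with probability roughly $(n/2k)/(n/2\alpha + n/2k) = \alpha/(k+\alpha) = \Theta(\alpha/k)$, essentially independently. Since any $\alpha$-approximate solution must collect $\Omega(n/\alpha)$ edges of $G_2$ in total (because edges of $G_1$ alone span only $O(n/\alpha)$ vertices, but more carefully: $G_1$ contributes at most $n/(2\alpha)$ to any matching, so an $\alpha$-approximation of size $\ge \mm(G)/\alpha = \Theta(n/\alpha)$ must draw $\Omega(n/\alpha)$ edges from $G_2$ once $\alpha$ is large enough — here we will want $\mm(G)/\alpha \gg n/(2\alpha)$, which is why we need, say, the matching-size constant tuned so this holds, using $\alpha = o(k)$), on average a machine must contribute $\Omega(n/(\alpha k))$ edges of $G_2$ to the final answer. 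But machine $i$ does not know which edges are the $G_2$-edges; the ones it eventually uses are determined by its coreset. If the coreset has $s$ edges, then it retains at most $s$ of the $\approx n/(2\alpha)+n/(2k)$ incident edges, and since the $G_2$-edges are an $O(\alpha/k)$-fraction placed (near-)uniformly at random among the incident edges and independent of the coreset-selection randomness (conditioned appropriately), the expected number of $G_2$-edges surviving in the coreset is $O(s\cdot \alpha/k)$ — wait, we must be careful: within machine $i$ there are only $\approx n/(2k)$ $G_2$-edges total, so the expected $G_2$-edges retained is $O(s \cdot (n/2k)/(n/2\alpha+n/2k)) = O(s\alpha/k)$. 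For the coreset union to contain $\Omega(n/(\alpha k))$ useful $G_2$-edges per machine in expectation we need $s\alpha/k = \Omega(n/(\alpha k))$, i.e. $s = \Omega(n/\alpha^2)$, as desired.

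So the key steps, in order, are: (1) define the distribution $(G,\{\Gi{i}\})$ precisely and verify $\mm(G)=n/2-o(n)$ w.h.p.\ and that $G_1$ contributes $\le n/(2\alpha)$ to any matching; (2) prove the structural concentration — w.h.p.\ each $\Gi{i}$ restricted to $G_1$ is a near-matching of size $\Theta(n/\alpha)$ and restricted to $G_2$ is a matching of size $\Theta(n/k)$, using $k = o(n/\log n)$ to get the Chernoff bounds to kick in; (3) prove the indistinguishability: conditioned on the multiset of edge-slots seen by machine $i$, the identity of which slots are $G_2$-edges is (nearly) a uniformly random subset of the appropriate size, in particular independent of anything the coreset algorithm can use to choose its output on that machine; (4) a counting/averaging argument: any $\alpha$-approximate matching in the union of coresets must use $\Omega(n/\alpha)$ edges of $G_2$, hence some (in fact every, on average) machine must have $\Omega(n/(\alpha k))$ of its retained edges be $G_2$-edges; combine with step (3) to conclude $s=\Omega(n/\alpha^2)$; and (5) derandomize the ``w.h.p.'' over the partitioning into a statement about worst-case coresets via an averaging (if a size-$o(n/\alpha^2)$ coreset succeeded w.h.p., the above expectation bound is violated).

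\textbf{Main obstacle.} The delicate part is step~(3) — making precise the sense in which a machine ``cannot tell'' its $G_2$-edges from its $G_1$-edges, and ensuring this independence survives conditioning on whatever randomness the (possibly randomized, possibly adaptive-to-its-own-input) coreset algorithm uses. The subtlety is that the coreset on machine $i$ is a function of $\Gi{i}$ itself, and $\Gi{i}$ \emph{does} carry a little information (e.g.\ a $G_2$-edge touches low-degree-in-$G$ vertices that are never touched by $G_1$-edges, whereas the $G_1$-side vertices have global degree $k$ — but the machine only sees degree $O(1)$ locally and cannot read off global degrees). I expect the cleanest route is to condition on the entire edge set of $G_1$ and the partition of $G_1$'s edges across machines, leaving only the random perfect matching $G_2$ and its partition as the remaining randomness; then on each machine the $G_2$-edges are a fresh uniformly-random small matching on the leftover vertices, manifestly independent of the (now fixed) $G_1$-structure, and one argues the coreset — which is a deterministic-after-fixing-public-coins function of the machine's edge set — can ``capture'' at most an $O(\alpha/k)$-fraction of these in expectation because it must commit to $s$ edges before knowing which of the roughly-uniformly-scattered leftover vertices are $G_2$-matched. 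Getting the leftover-vertex bookkeeping exactly right (the $G_1$ near-matching leaves $\Theta(n/\alpha)$ of the $n/(2\alpha)$-side vertices unmatched-in-$\Gi i$, and these plus the genuine $G_2$-side vertices are where $G_2$'s matching lives) is where the $\alpha^2$ rather than $\alpha$ in the bound comes from, and is the part that requires the most care.
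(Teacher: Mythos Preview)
Your construction and high-level outline are essentially the paper's: the hard distribution, the role of $G_1$ versus $G_2$, the $\Theta(\alpha/k)$ fraction, and the averaging over machines are all the same. Your calibration of the constants and the identification of step~(3) as the crux are also correct.

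Where your proposal wobbles is precisely in the resolution of step~(3). You propose to condition on the entire edge set of $G_1$ together with its partition across machines, and then argue that the coreset ``commits to $s$ edges before knowing which leftover vertices are $G_2$-matched.'' This is not quite right on two counts. First, the coreset is a function of the \emph{whole} input $G^{(i)}$, which already includes the $G_2$-edges; there is no ``commit before seeing'' step. Second, and more seriously, once you fix $G_1$ you have also fixed the vertex sets $A,B$ on which $G_1$ lives, and then the $G_2$-edges on machine $i$ are exactly the edges of $G^{(i)}$ whose endpoints lie outside $A\cup B$. The algorithm does not know $A,B$, but the analyst now does, and the clean symmetry you want is gone: you cannot argue that ``each retained edge is a $G_2$-edge with probability $\Theta(\alpha/k)$'' because that statement is really about the randomness in $A,B$, which you have just frozen.

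The paper's fix is to condition in the opposite direction: fix the machine's \emph{view} $G^{(i)}$, and look at the \emph{induced matching} $M^{(i)}\subseteq G^{(i)}$ consisting of all edges both of whose endpoints have degree exactly one in $G^{(i)}$. Trivially $M^{\star(i)}:=E_{\overline{AB}}\cap E^{(i)}\subseteq M^{(i)}$. The technical work (Lemma~\ref{lem:matching-dist-large}, proved via a balls-and-bins argument on $\mathcal{G}(n/\alpha,n/\alpha,\alpha/n)$) is to show $|M^{(i)}|=\Theta(n/\alpha)$ w.h.p.\ — note this is already much larger than the $\Theta(n/k)$ edges of $M^{\star(i)}$. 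The indistinguishability is then a one-line symmetry statement: because $A,B$ are uniformly random subsets and $E_{\overline{AB}}$ is a uniformly random perfect matching, conditioned on $G^{(i)}$ every edge of $M^{(i)}$ is equally likely to lie in $M^{\star(i)}$, i.e.\ $\Pr(e\in M^{\star(i)}\mid e\in M^{(i)})=|M^{\star(i)}|/|M^{(i)}|=\Theta(\alpha/k)$. Since the coreset $C_i$ is determined by $G^{(i)}$, linearity of expectation gives $\mathbb{E}[|C_i\cap M^{\star(i)}|]=|C_i|\cdot\Theta(\alpha/k)$, and summing over $i$ finishes exactly as you wrote. So your plan is sound; just swap the conditioning in step~(3) from ``fix $G_1$'' to ``fix $G^{(i)}$ and isolate its degree-one induced matching,'' and the obstacle you flagged dissolves.
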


By Yao's minimax principle~\cite{Yao87}, to prove the lower bound in Theorem~\ref{thm:matching-lb-coreset}, it suffices to analyze the 
performance of deterministic algorithms over a fixed (hard) distribution. We propose the following distribution for this task. 
For simplicity of exposition, in the following, we prove a lower bound
for $(\alpha/4)$-approximation algorithms; a straightforward scaling of the parameters proves the lower bound for $\alpha$-approximation. 

\textbox{Distribution $\distMatch$. \textnormal{A hard input distribution for the matching problem.}}{
\begin{itemize}
	\item Let $G(L,R,E)$ (with $\card{L} = \card{R} = n$) be constructed as follows: 
		\begin{enumerate}
			\item Pick $A \subseteq L$ and $B \subseteq R$, each of size $n/\alpha$, uniformly at random. 
			\item Define $\Eab$ as a set of edges between $A$ and $B$, chosen by picking each edge in $A \times B$ w.p. $k \cdot \alpha/n$. 
			\item Define $\Ebab$ as a \emph{random} perfect matching between $\bA$ and $\bB$. 
			\item Let $E:= \Eab \cup \Ebab$. 
		\end{enumerate}
	\item Let $\Ei{1},\ldots,\Ei{k}$ be a \emph{random $k$-partitioning} of $E$ and let the input to player $\Ps{i}$ be the graph $\Gii(L,R,\Eii)$. 
\end{itemize}
} 

Let $G$ be a graph sampled from the distribution $\distMatch$. Notice first that the graph $G$ always has a matching of size at least $n-n/\alpha \geq n/2$, i.e., the matching $\Ebab$. 
Additionally, it is easy to see that any matching of size more than $2n/\alpha$ in $G$ uses at least $n/\alpha$ edges from $\Ebab$: the edges in $\Eab$ can only form a matching of size $n/\alpha$ by construction. 
This implies that any $(\alpha/4)$-approximate solution requires recovering at least $n/\alpha$ edges from $\Ebab$. In the following, we prove that this is only possible if the coresets of the players are sufficiently large. 


For any $i \in [k]$, define the \emph{induced matching} $\Mii$ as the unique matching in $\Gii$ that is incident on \emph{vertices of degree exactly one}, i.e., both end-points of each edge
in $\Mii$ have degree one in $\Gii$. We emphasize that the notion of induced matching is with respect to the entire graph and not only with respect to the vertices included in the induced matching.
We have the following crucial lemma on the size of $\Mii$. 
The proof is technical and is deferred to Appendix~\ref{app:large-induced-matching}.
\begin{lemma}\label{lem:matching-dist-large}
	W.p. $1-O(1/n)$, for all $i \in [k]$, $\card{\Mii} = \Theta(n/\alpha)$. 
\end{lemma}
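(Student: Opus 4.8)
The plan is to show that with high probability, for every player $\Ps{i}$, the matching $\Mii$ consisting of edges both of whose endpoints have degree exactly one in $\Gii$ has size $\Theta(n/\alpha)$. I would split this into an upper bound and a lower bound, with the lower bound being the substantive part.

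\paragraph{Setup and the easy upper bound.} Recall $E = \Eab \cup \Ebab$, where $\Ebab$ is a perfect matching on $\bA \times \bB$ of size $n - n/\alpha$ and $\Eab$ is a random subgraph of $A \times B$ (with $|A| = |B| = n/\alpha$) where each edge is present independently w.p.\ $k\alpha/n$, so $|\Eab|$ concentrates around $(n/\alpha)^2 \cdot k\alpha/n = nk/\alpha$. Under the random $k$-partitioning, player $\Ps{i}$ receives each edge of $E$ independently w.p.\ $1/k$; thus $\Gii$ contains roughly $n/k$ edges of $\Ebab$ and roughly $n/\alpha$ edges of $\Eab$. For the upper bound $\card{\Mii} = O(n/\alpha)$: every edge of $\Mii$ whose endpoints lie in $\bA \cup \bB$ must come from $\Ebab$, but since $\Ebab$ restricted to $\Gii$ is a matching of size $\Theta(n/k)$, this contributes $O(n/k) = O(n/\alpha)$ edges (using $\alpha = o(k)$, actually I only need $O(n/\alpha)$, and $n/k \le n/\alpha$ here is not guaranteed—but $O(n/k)$ edges is at most $O(n/\alpha)$ only if $\alpha \le k$; since $\alpha = o(k)$ this holds). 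Edges of $\Mii$ with an endpoint in $A \cup B$ must have that endpoint be isolated-except-for-this-edge, so they are among the $\le |A| + |B| = 2n/\alpha$ vertices of $A \cup B$, giving $O(n/\alpha)$. So $\card{\Mii} = O(n/\alpha)$ w.h.p.

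\paragraph{The lower bound.} This is the main obstacle. I want to exhibit $\Omega(n/\alpha)$ edges of $\Ebab$ that land in $\Gii$ and whose endpoints have degree exactly one in $\Gii$. An edge $e = (a,b) \in \Ebab$ is in $\Mii$ iff (i) $e \in \Eii$, and (ii) no other edge incident to $a$ or $b$ is in $\Eii$. Since $\Ebab$ is a matching, the only other edges incident to $a$ (resp.\ $b$) are edges of $\Eab$—but wait, $a \in \bA$ and $b \in \bB$, so they have \emph{no} incident edges in $\Eab$ at all; their only incident edge in all of $E$ is $e$ itself. Hence condition (ii) is automatic, and $e \in \Mii$ iff $e \in \Eii$. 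Therefore $\card{\Mii}$ is at least the number of $\Ebab$-edges assigned to player $i$, which is $\mathrm{Binomial}(n - n/\alpha, 1/k)$ in expectation $\Theta(n/k)$... but that's $\Theta(n/k)$, not $\Theta(n/\alpha)$, and these could differ. Let me reconsider: actually we also get degree-one vertices inside $A \cup B$. The cleanest route: the $\Ebab$-edges in $\Gii$ already give $\card{\Mii} = \Omega(n/k)$; but the lemma claims $\Theta(n/\alpha)$, so I suspect the intended induced matching also counts the many low-degree vertices in $A,B$. Hmm—actually re-reading, since $\alpha = o(k)$ we have $n/k = o(n/\alpha)$, so $\Omega(n/k)$ is \emph{weaker} than $\Omega(n/\alpha)$. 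The real source of $\Theta(n/\alpha)$ must be vertices in $A \cup B$: a vertex $v \in A$ has degree in $\Gii$ distributed as $\mathrm{Binomial}(d_{\Eab}(v), 1/k)$ where $d_{\Eab}(v) \approx k$, so $\deg_{\Gii}(v) \approx \mathrm{Poisson}(1)$, and $\Pr[\deg_{\Gii}(v) = 1] = \Theta(1)$. So a constant fraction of the $n/\alpha$ vertices of $A$ have degree exactly one in $\Gii$; pairing each such degree-one vertex $a$ with its unique neighbor $b$, the edge $(a,b)$ is in $\Mii$ provided $b$ also has degree one. To get a matching I cannot double-count, but a standard argument (e.g., the degree-one vertices of $A$ with degree-one partners, minus collisions where two degree-one vertices in $A$ share a neighbor in $B$—impossible since that neighbor would have degree $\ge 2$) shows the degree-one-to-degree-one edges form a matching of size $\Theta(n/\alpha)$. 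So the plan: (1) compute, for each $v \in A \cup B$, $\Pr[\deg_{\Gii}(v) = 1]$ and show it is bounded below by a constant; (2) let $X_i$ = number of edges of $\Eab$ both of whose endpoints have degree one in $\Gii$—these automatically form a (sub)matching since a shared endpoint would have degree $\ge 2$—and show $\Exp[X_i] = \Theta(n/\alpha)$; (3) establish concentration of $X_i$ around its mean via the method of bounded differences (Proposition~\ref{prop:bounded-differences}) applied to the function of the edge-assignment variables, where reassigning one edge changes the count by $O(1)$; (4) union-bound over $i \in [k]$ using $k = o(n/\log n)$, which combined with the concentration tail $\exp(-\Omega((n/\alpha)^2 / (|E| \cdot 1)))$ gives failure probability $o(1/n)$ after checking $|E| = O(nk/\alpha)$ and the parameter regime $\alpha = o(n/k)$ makes the exponent $\Omega(n/(\alpha^2 k))$, which—hmm, this needs $n/(\alpha^2 k) = \omega(\log n)$; given $\alpha = o(\min\{n/k, k\})$ we have $\alpha^2 = o(n/k \cdot k) = o(n)$... this requires care, but should work out, possibly needing a martingale/Chernoff hybrid rather than raw bounded differences.

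\paragraph{Main obstacle.} The delicate point is the concentration argument in step (3)–(4): $X_i$ is a function of $|E| = \Theta(nk/\alpha)$ independent edge-assignment variables, it has the bounded-difference property with constant $d = O(1)$, so Proposition~\ref{prop:bounded-differences} gives $\Pr[|X_i - \Exp X_i| > t] \le 2\exp(-2t^2/(|E| d^2))$; taking $t = \Theta(n/\alpha)$ gives exponent $\Theta((n/\alpha)^2/(nk/\alpha)) = \Theta(n/(\alpha k))$, which is $\omega(\log n)$ precisely when $\alpha = o(n/(k\log n))$—slightly stronger than $k = o(n/\log n)$ but consistent with $\alpha = o(n/k)$ up to the log factor. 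I would handle the edge case by being slightly more careful, perhaps conditioning first on the (highly concentrated) values $|\Eab|$ and the degree sequence, or by noting that the dependence of $X_i$ on edges far from $A \cup B$ is nil so the effective number of variables is only $O(nk/\alpha)$ restricted to $A \times B$, which is already what I used. Then a union bound over the $k$ players costs a factor $k = o(n)$, which is absorbed, yielding the claimed $1 - O(1/n)$.
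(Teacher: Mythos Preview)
Your upper bound and expectation computation for the lower bound are fine and in the same spirit as the paper. The gap is in your concentration step.

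When you apply bounded differences to $X_i$ as a function of the $\Theta(nk/\alpha)$ edge-assignment variables (or equivalently the $(n/\alpha)^2$ Bernoulli indicators for edges of $A\times B$ landing in $\Gii$), the exponent you get is $\Theta(n/(\alpha k))$ (or just $\Theta(1)$ in the second formulation). As you yourself note, this only gives the required $1-O(1/n)$ bound when $\alpha k = o(n/\log n)$, whereas the hypotheses $\alpha = o(\min\{n/k,k\})$ only give $\alpha k = o(n)$. Your suggestion to ``be slightly more careful'' or use a ``martingale/Chernoff hybrid'' does not close this gap: a Bernstein-type or variance-based bound still leaves you a factor of roughly $k$ short, because $X_i$ genuinely depends on $\Theta(nk/\alpha)$ (resp.\ $(n/\alpha)^2$) coordinates, each of which can flip the count by a constant.

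The paper's proof gets around this by a two-stage conditioning that reduces the number of effective random variables to $\Theta(n/\alpha)$. Working in the random bipartite graph $\mathcal G(m,m,1/m)$ on $A\times B$ with $m=n/\alpha$, it first fixes the set $S\subseteq A$ of degree-one vertices and the set $T\subseteq B$ of vertices receiving no edge from $A\setminus S$; both $\card{S}$ and $\card{T}$ concentrate at $\Theta(m)$ by ordinary Chernoff, since they are sums of independent indicators. Conditioned on $S$, $T$, and all edges out of $A\setminus S$, the only remaining randomness is the choice of the unique neighbor in $B$ for each vertex of $S$ --- a balls-in-bins process with $\card{S}=\Theta(m)$ independent variables. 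Bounded differences on \emph{this} process, counting bins in $T$ that receive exactly one ball, has exponent $\Theta(m)=\Theta(n/\alpha)$, which comfortably beats $\log n$ throughout the stated parameter range. The induced matching is then read off from the degree-one vertices in $S$ whose unique neighbor lies in $T$ and is hit exactly once. This reduction from $\Theta(m^2)$ to $\Theta(m)$ variables is the missing idea in your sketch.
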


We are now ready to prove Theorem~\ref{thm:matching-lb-coreset}.

\begin{proof}[Proof of Theorem~\ref{thm:matching-lb-coreset}]

Fix any randomized composable coreset (algorithm) for the matching problem that has size $o(n/\alpha^2)$. We show that such a coreset cannot achieve a better than $(\alpha/4)$-approximation
over the distribution $\distMatch$. As argued earlier, to prove this, we need to show that this coreset only contains $o(n/\alpha)$ edges from $\Ebab$ in expectation. 

Fix any player $i \in [k]$, and let $\Mstari$ be the subset of the matching $\Ebab$ assigned to $\Ps{i}$. It is clear that $\Mstari \subseteq \Mii$ by the definition of $\Mii$. 
Moreover, define $X_i$ as the random variable denoting the number of edges from $\Mstari$ that belong to the coreset sent by player $\Ps{i}$. Notice that $X_i$ is clearly an upper bound
on the number of edges of $\Ebab$ that are in the final matching of coordinator and also belong to the input graph of player $\Ps{i}$. In the following, we show that 
\begin{align}
	\Ex\bracket{X_i} = o\paren{\frac{n}{k \cdot \alpha}} \label{eq:show-lb-coreset}
\end{align}
Having proved this, we have that the expected size of the output matching by the coordinator is
at most $n/\alpha + \sum_{i=1}^{k} \Ex\bracket{X_i} = n/\alpha + o(n/\alpha) < (\alpha/4) \cdot \mm(G)$, a contradiction. 

We now prove Eq~(\ref{eq:show-lb-coreset}). 
In the following, we condition on the event that $\card{\Mstari} = \Theta(n/k)$ and $\card{\Mii} = \Theta(n/\alpha)$; by Chernoff bound (for the first part, since $n/k = \omega(\log{n})$) 
and Lemma~\ref{lem:matching-dist-large} (for the second part), 
this event happens with probability $1-O(1/n)$. As such, this conditioning can only change $\Ex\bracket{X_i}$ by an additive factor of $O(1)$ which we ignore in the following.

A crucial property of the distribution $\distMatch$ is that the edges in $\Mstari$ and the remaining edges in $\Mii$ are {indistinguishable} in $\Gii$. 
More formally, for any edge $e \in \Gii$, 
\begin{align*}
	\Pr\paren{e \in \Mstari \mid e \in \Mii} = \frac{\card{\Mstari}}{\card{\Mii}} = \Theta(\alpha/k)
\end{align*}

On the other hand, for a fixed input $\Mii$ to player \Ps{i}, the computed coreset $C_i$ is always the same (as the coreset is a deterministic function of the player input). Hence, 
\begin{align*}
	\Ex\bracket{X_i} = \sum_{e \in C_i} \Pr\paren{e \in \Mstar_i \mid e \in \Mii} = \card{C_i} \cdot \Theta(\alpha/k) = o(n/\alpha^2) \cdot \Theta(\alpha/k) = o\paren{n/(\alpha \cdot k)}
\end{align*}
where the second last equality is by the assumption that the size of the coreset, i.e., $\card{C_i}$, is $o(n/\alpha^2)$. This finalizes the proof. 
\end{proof}

\subsection{A Lower Bound for Randomized Composable Coresets of Vertex Cover}\label{sec:lb-vc-coreset}

In this section, we prove that the size of the corset for the vertex cover problem in Theorem~\ref{thm:vc} is indeed optimal. The following is a formal statement of Result~\ref{res:lb-coreset} for the vertex cover problem.   
\begin{theorem}\label{thm:vc-lb-coreset}
	For any $k = o(n/\log{n})$ and $\alpha = o(\min\set{n/k,k})$, any $\alpha$-approximation randomized composable coreset of the minimum vertex cover problem is of size $\Omega(n/\alpha)$. 
\end{theorem}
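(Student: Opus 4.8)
The plan is to mirror the strategy used for matching (Theorem~\ref{thm:matching-lb-coreset}): design a hard distribution via Yao's principle on which any small deterministic coreset fails to certify a good vertex cover, and show that a single ``hidden'' low-degree edge cannot be distinguished from the bulk of the input. Concretely, I would sample $G(L,R,E)$ as the union of a ``dense-ish'' bipartite part $G_1$ and one planted edge $\estar$: pick $L_1 \subseteq L$ of size $n/2\alpha$ uniformly at random, let each $v \in L_1$ have $k$ random neighbours in $R$ (so $G_1$ has $\vc(G_1) \le n/2\alpha$, using $L_1$ as a cover), and let $\estar$ be a uniformly random edge between $L \setminus L_1$ and $R$. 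Then $\vc(G) \le n/2\alpha + 1$. Now form a random $k$-partitioning $\Ei{1},\ldots,\Ei{k}$; since each $v \in L_1$ has exactly $k$ edges, after partitioning the input to each player is essentially a matching of size $\approx n/2\alpha$ drawn from $G_1$, together with $\estar$ landing in exactly one player $\Ps{i^\star}$ chosen uniformly at random. The key point: inside $\Gi{i^\star}$, the edge $\estar$ is one edge of degree one among $\Theta(n/\alpha)$ edges of degree one (as in Lemma~\ref{lem:matching-dist-large}), hence statistically indistinguishable from the $G_1$-edges in that player's input.

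The core of the argument is then: if every player's coreset has size $s = o(n/\alpha)$, the coreset of $\Ps{i^\star}$ contains $\estar$ with probability only $O(s/(n/\alpha)) = o(1)$. When $\estar$ is absent from all coresets, the coordinator's view is symmetric in all vertices of $L \setminus L_1$ that could have been an endpoint of $\estar$, and also in the corresponding $R$-vertices — so to be guaranteed to cover $\estar$, the coordinator must put $\Omega(n)$ vertices into the cover (roughly, it must cover a constant fraction of one side among the candidate endpoints), giving an approximation factor $\Omega(\alpha)$. I would make this precise by fixing the randomness so that $\vc(G) = \Theta(n/\alpha)$, conditioning on the high-probability event that all players' inputs have $\Theta(n/\alpha)$ degree-one edges and that $\estar$ is a degree-one edge in $\Gi{i^\star}$ (Lemma~\ref{lem:matching-dist-large} applies essentially verbatim, or can be re-proved for this distribution), and then arguing that conditioned on ``no coreset contains $\estar$'' the posterior distribution of $\estar$ given the union of coresets is spread over $\Omega(n)$ possible edges, forcing any fixed vertex cover output by the coordinator of size $o(n)$ to miss $\estar$ with constant probability — contradicting $\alpha$-approximation since then the output is not even a valid cover.

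For the bookkeeping I would introduce, for each player $i$, the induced matching $\Mii$ of degree-one edges in $\Gii$ (as in Section~\ref{sec:lb-matching-coreset}), note $\estar \in \Mi{i^\star}$, and let $X_i$ be the indicator that the coreset of $\Ps{i}$ contains $\estar$ conditioned on $i = i^\star$; indistinguishability gives $\Pr(\estar \in C_{i^\star} \mid \estar \in \Mi{i^\star}) = |C_{i^\star}| / |\Mi{i^\star}| = o(n/\alpha)/\Theta(n/\alpha) = o(1)$, since the coreset is a deterministic function of the player's input which is symmetric over the degree-one edges. Summing/averaging over the uniform choice of $i^\star$, with probability $1 - o(1)$ the edge $\estar$ is in no coreset. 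On that event the coordinator must output a set of size $\le \alpha \cdot \vc(G) = O(n)$ that is forced to be wrong, and one checks that to cover all $\Omega(n)$ equally-likely candidates for $\estar$ one needs $\Omega(n)$ vertices but with the right constants this still forces the factor past $\alpha$; a straightforward rescaling of parameters (proving the bound for $(\alpha/c)$-approximation first, as done for matching) cleans up constants.

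The main obstacle is the same subtlety flagged in the paper's discussion of Result~\ref{res:lb-dist}: the inputs to the players are \emph{correlated} — the identity of $L_1$ is essentially common to all players, and small messages collectively could reveal $L_1$ and hence narrow down where $\estar$ could be. For the \emph{coreset} lower bound (as opposed to the communication lower bound) this is milder, because a randomized composable coreset is required to be a \emph{subgraph} of each player's local input and is processed obliviously, so I can reduce to a single player's view via the indistinguishability of $\estar$ within $\Mi{i^\star}$; but I still need to argue carefully that conditioning on the other players' coresets not containing $\estar$ does not concentrate the posterior of $\estar$ onto a small set — this is where I would invoke the random-partitioning structure to show the other players' inputs (and hence coresets) are, conditioned on $L_1$, independent of which specific non-$L_1$ edge is $\estar$. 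Getting that conditional-independence statement stated cleanly, rather than the more elaborate symmetrization/\VertexSeeking machinery needed for Result~\ref{res:lb-dist}, is the crux of keeping this proof ``self-contained as a warm-up.''
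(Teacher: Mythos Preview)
Your proposal is correct and follows essentially the same approach as the paper: the hard distribution you describe (with each $L_1$-vertex having exactly $k$ random neighbors) is precisely the one sketched in the paper's techniques overview, while the proof in Section~\ref{sec:lb-vc-coreset} uses the nearly identical distribution $\distVC$ with i.i.d.\ edge probability $k/2n$ --- the paper itself notes that the simpler version suffices here. The one place you are over-worrying is the inter-player correlation: as the paper observes in one line, the other players' inputs (hence their coresets) are independent of which vertex in $\bar{L_1}$ is $\vstar$ and which edge is $\estar$, so their fixed vertices miss the endpoints of $\estar$ w.p.\ $1-o(1)$ whenever their total size is $o(n)$, and no posterior-distribution bookkeeping beyond the single-player indistinguishability of $\vstar$ among $L^1_{i^\star}$ is needed.
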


By Yao's minimax principle~\cite{Yao87}, to prove the lower bound in Theorem~\ref{thm:vc-lb-coreset}, it suffices to analyze the 
performance of deterministic algorithms over a fixed (hard) distribution. We propose the following distribution for this task\footnote{We point out that simpler versions of this distribution suffice for proving the lower bound
in this section. However, as we would like this proof to also act as a warm-up to the proof of Theorem~\ref{thm:vc-lb}, we use the same distribution that is used to prove that theorem.}.
For simplicity of exposition, in the following, we prove a lower bound for $(c \cdot \alpha)$-approximation algorithms (for some constant $c > 0$); 
a straightforward scaling of the parameters proves the lower bound for $\alpha$-approximation as well. 

\textbox{Distribution $\distVC$. \textnormal{A hard input distribution for the vertex cover problem.}}{
\begin{itemize}
	\item Construct $G(L,R,E)$ (with $\card{L} = \card{R} = n$) as follows: 
		\begin{enumerate}
			\item Pick $A \subseteq L$ of size $n/\alpha$ uniformly at random. 
			\item Let $\EA$ be a set of edges chosen by picking each edge in $A \times R$ w.p. $k/2n$. 
			\item Pick a single vertex $\vstar$ uniformly at random from $\bA$ and let $\estar$ be an edge incident on $\vstar$ chosen uniformly at random. 
			\item Let $E:= \EA \cup \set{\estar}$. 
		\end{enumerate}
	\item \label{line:r-partitioning} Let $\Ei{1},\ldots,\Ei{k}$ be a \emph{random $k$-partitioning} of $E$ and let the input to player $\Ps{i}$ be the graph $\Gii(L,R,\Eii)$. 
\end{itemize}
}

For any $i \in [k]$, we define $L^1_i$ as the set of vertices in $L$ with degree \emph{exactly one} in $G^{(i)}$. We further define $R^1_i$ as the 
set of neighbors of vertices in $L^1_i$ (note that vertices in $R^1_i$ do not  \emph{not} necessarily have degree exactly one). 
We start by proving a simple property of this distribution. 

\begin{lemma}\label{lem:vc-lb-degree-one}
	For any $i \in [k]$, $\card{L^1_i} = \Theta(n/\alpha)$ and $\card{R^1_i} = \Theta(n/\alpha)$ w.p. $1-o(1)$. 
\end{lemma}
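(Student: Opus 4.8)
The plan is to analyze the structure of $G^{(i)}$ after the random $k$-partitioning. The edge $\estar$ lands in exactly one graph, and since adding or removing a single edge changes $|L^1_i|$ and $|R^1_i|$ by at most two, I can safely ignore $\estar$ throughout and work only with $\EA$; the claimed bounds are unaffected. So fix $i \in [k]$ and condition on the choice of $A$; recall $|A| = n/\alpha$. Each edge of $A \times R$ is included in $\EA$ independently w.p.\ $k/2n$ and then, independently, routed to $G^{(i)}$ w.p.\ $1/k$. Hence in $G^{(i)}$ each potential edge $(a,r)$ with $a \in A$, $r \in R$ is present independently with probability $p := 1/2n$. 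A vertex $a \in A$ has degree exactly one in $G^{(i)}$ iff exactly one of its $n$ potential edges is present, which happens with probability $q := np(1-p)^{n-1} = \tfrac12(1-1/2n)^{n-1} = \Theta(1)$ (it lies between, say, $1/4$ and $1/2$ for large $n$). Vertices outside $A$ have no incident $\EA$-edges, so $L^1_i$ consists precisely of the degree-one vertices of $A$.

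First I would bound $|L^1_i|$. Writing $|L^1_i| = \sum_{a \in A} \mathbf{1}[a \text{ has degree } 1]$, the indicator events for distinct $a \in A$ are mutually independent (they depend on disjoint sets of potential edges), so $\Ex[|L^1_i|] = q \cdot n/\alpha = \Theta(n/\alpha)$, and since by assumption $n/\alpha = \omega(k\log n) = \omega(\log n)$, Proposition~\ref{prop:chernoff} (Chernoff) gives $|L^1_i| = \Theta(n/\alpha)$ with probability $1 - O(1/n^2)$, and a union bound over $i \in [k]$ with $k = o(n/\log n)$ keeps the failure probability $o(1)$.

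Next I would handle $|R^1_i| = |N_{G^{(i)}}(L^1_i)|$. The upper bound $|R^1_i| \le |L^1_i| = O(n/\alpha)$ is trivial since each vertex of $L^1_i$ has exactly one neighbor. For the lower bound I need that these neighbors are spread out, i.e.\ not too many vertices of $L^1_i$ map to the same vertex of $R$. The cleanest route: condition on the set $L^1_i$ and on the degree sequence; conditioned on $a \in L^1_i$, the unique neighbor of $a$ is uniform over $R$, independently across $a \in L^1_i$. So $|R^1_i|$ is the number of nonempty bins when $|L^1_i| = \Theta(n/\alpha)$ balls are thrown into $n$ bins uniformly. Since the number of balls is at most $n$ (indeed much smaller), the expected number of collisions is $O(|L^1_i|^2/n) = O(n/\alpha^2) = o(|L^1_i|)$ when $\alpha = \omega(1)$ — but more carefully, even crudely, a second-moment or a direct computation shows $\Ex[|R^1_i|] \ge |L^1_i| - O(|L^1_i|^2/n) \ge (1 - o(1))|L^1_i| = \Theta(n/\alpha)$. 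Concentration of $|R^1_i|$ around its mean follows from the method of bounded differences (Proposition~\ref{prop:bounded-differences}): $|R^1_i|$, as a function of the independent neighbor choices of the balls, has Lipschitz constant $1$, so it is concentrated within $O(\sqrt{|L^1_i| \log n}) = o(|L^1_i|)$ of its mean with probability $1 - O(1/n^2)$. Union-bounding over $i$ finishes the proof.

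The main obstacle is the lower bound on $|R^1_i|$: one must be careful that the conditioning used (first reveal which vertices of $A$ have degree one, then reveal their unique neighbors) is legitimate, i.e.\ that conditioned on $a$ having degree exactly one its neighbor is genuinely uniform and that these neighbor choices stay independent across $a$. This holds because the event "$a$ has degree exactly $1$" is symmetric over the $n$ choices of which edge survives and depends only on $a$'s own edge slots, so after conditioning the identity of the surviving edge is uniform and the slots for different $a$'s are still disjoint. Once that is set up, the balls-into-bins estimate and the bounded-differences concentration are routine.
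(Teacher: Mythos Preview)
Your proposal is correct and follows essentially the same approach as the paper: compute $\Pr[d(v)=1]=\Theta(1)$ for $v\in A$, apply Chernoff for $|L^1_i|$, and then model the neighbors of $L^1_i$ as a balls-and-bins experiment to bound $|R^1_i|$. Your write-up is in fact more careful than the paper's---you explicitly justify why conditioning on degree-one events leaves the neighbor uniform and independent across $a\in A$, and you spell out the bounded-differences concentration, whereas the paper simply invokes Proposition~\ref{prop:random1}. The only (very minor) looseness is that your collision bound $|L^1_i|-O(|L^1_i|^2/n)$ is uninformative when $\alpha=O(1)$; the direct expectation computation you allude to (or Proposition~\ref{prop:random1}) handles that edge case.
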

\begin{proof}
	Fix any player $i \in [k]$ and any vertex $v \in A$. The distribution of neighborhood of $v$ in the graph $\Gii$ is as follows: pick each vertex in $R$ w.p. $1/2n$ independently;
	this is because each vertex in $R$ is chosen w.p. $k/2n$ to be a neighbor of $v$ in $G$ and then each of these vertices are assigned to the graph $\Gii$ w.p. $1/k$ by the random $k$-partitioning. 
	As such, 
	\begin{align*}
		\Pr\paren{d(v) = 1~\text{in $G^{(i)}$}} = {{n} \choose {1}} \cdot \frac{1}{2n} \cdot \paren{1-\frac{1}{2n}}^{n-1} \approx \frac{1}{2\sqrt{e}} =  \Theta(1)
	\end{align*}	
	Consequently, we have $\Ex\bracket{\card{L^1_i}} = \card{A} \cdot \Theta(1) = \Theta(n/\alpha)$ 
	and by Chernoff bound, $\card{L^1_i} = \Theta(n/\alpha)$ (note that for one player $\vstar$ would also belong to $O_i$ but that only changes
	the size of $\card{O_i}$ by one vertex). 
	
	We now bound the size of $R^1_i$. Each vertex in $L^1_i$ is choosing one vertex uniformly at random from $R$ and hence we can model this distribution by a simple balls and bins experiment (throwing $\card{L^1_i}$ balls
	into $n$ bins, each independently and uniformly at random), and hence by a standard fact about balls and bins experiments argue that $\card{R^1_i} = \Theta(n/\alpha)$ w.p. $1-o(1)$ as well (see
	Proposition~\ref{prop:random1} in Appendix~\ref{app:large-induced-matching} for a proof of this fact about balls and bins experiments). 
\end{proof}

We can now prove Theorem~\ref{thm:vc-lb-coreset}. 

\begin{proof}[Proof of Theorem~\ref{thm:vc-lb-coreset}]
	Let $i$ be the index of the player \Ps{i} that the edge $\estar$ is given to. We argue that if the coreset sent by player $\Ps{i}$ is of size $o(n/\alpha)$, then the coordinator cannot obtain a vertex
	cover of size $o(n)$. As the graph $G$ admits a vertex cover of size $(n/\alpha+1)$ (pick $A$ and $\vstar$), this proves the theorem. 
	
	By Lemma~\ref{lem:vc-lb-degree-one}, the set of vertices in $L$ with degree exactly one in $\Gii$ and the set of their neighbors in $R$, i.e., the sets $L^1_i$ and $R^1_i$, are of size $\Theta(n/\alpha)$ w.p. $1-o(1)$. 
	In the following, we condition on this event. 
	As the algorithm used by $\Ps{i}$ to create the coreset is deterministic, given a fixed input, it always creates the same coreset. However, a crucial property of the distribution $\distVC$ is that, conditioned on a fixed assignment to 
	$L^1_i$, the vertex $\vstar$ is chosen uniformly at random from $L^1_i$. This implies that if the coreset of player $\Ps{i}$ contains $o(n/\alpha)$ edges, then w.p. $1-o(1)$, $\estar$ is not part of the coreset ($\estar$ is
	chosen uniformly at random from the set of all edges incident on $L^1_i$). 
	Similarly, if the coreset fixes $o(n/\alpha)$ vertices to be added to the final solution, w.p. $1-o(1)$, no end point of $\estar$ is added to this 
	fixed set ($\vstar$ is chosen uniformly at random from $L^1_i$ of size $\Theta(n/\alpha)$, and the other end point of $\estar$ is chosen uniformly at random from $R^1_i$ of size $\Theta(n/\alpha)$). 
	Finally, the coresets of other players are all independent of the edge $\estar$ and hence as long as the total number of fixed vertices sent by the players is $o(n)$, w.p. $1-o(1)$, no end points 
	of $\estar$ are present in the fixed solution. Conditioned on these three events, w.p. $1-o(1)$, the output of the algorithm does not cover the edge $\estar$ and hence is not a feasible vertex cover. 
	
	We remark that this argument holds even if we are allowed to add extra vertices to the final vertex cover (other than the ones fixed by the players or computed
	 as a vertex cover of the edges in the coresets), since conditioned on $\estar$ not being present in any coreset,
	the end point of this edge are chosen uniformly at random from all vertices in $L \setminus A$ and $R$ and hence a solution of size $o(n)$ would not contain either of them w.p. $1-o(1)$. 
\end{proof}

\section{Communication Complexity Lower Bounds}\label{sec:lower-dist}

We prove Result~\ref{res:lb-dist} in this section, showing that our randomized composable coresets in fact obtain the optimal communication complexity (among all possible protocols, not necessarily a coreset)
in the simultaneous communication model. This result is a vast generalization of Result~\ref{res:lb-coreset} proved in Section~\ref{sec:lower-coreset}. 

\subsection{An $\Omega(nk/\alpha^2)$ Lower Bound on Communication Complexity of Matching} \label{sec:matching-dist}

We prove a lower bound on the simultaneous communication complexity of the matching problem in the random partition
model, formalizing Result~\ref{res:lb-dist} for matching. 

\begin{theorem}\label{thm:matching-lb}
	For any $\alpha$ between $\Omega(\log{n})$ and $o(\min\set{\sqrt{n/k},k})$, the simultaneous communication complexity of $\alpha$-approximating the matching problem in the random partition
	model is $\Omega(nk/\alpha^2)$. 
\end{theorem}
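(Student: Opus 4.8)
The plan is to prove the lower bound by Yao's minimax principle over (a suitably rescaled version of) the same hard distribution $\distMatch$ used in Section~\ref{sec:lb-matching-coreset}, combined with a \emph{symmetrization} reduction to a two--player one--way problem. First I would recall the structure of $\distMatch$ and the facts already established there: $\mm(G)\geq n/2$; any $(\alpha/4)$-approximate matching of $G$ must contain $\Omega(n/\alpha)$ edges of the hidden matching $\Ebab$; the random $k$-partitioning splits $\Ebab$ into parts $\Mstari$ with $\card{\Mstari}=\Theta(n/k)$ w.h.p. (Chernoff); by Lemma~\ref{lem:matching-dist-large} the induced matching $\Mii$ of $\Gii$ has $\card{\Mii}=\Theta(n/\alpha)$ with $\Mstari\subseteq\Mii$; and, crucially, the \emph{indistinguishability property} from the proof of Theorem~\ref{thm:matching-lb-coreset}: for any edge $e$ and any message produced from $\Gii$ alone, $\Pr\paren{e\in\Mstari}\leq O(\alpha/k)$. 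Fixing the public coins, it suffices to lower bound deterministic simultaneous protocols on $\distMatch$.

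The reason a direct per-player argument (as in Theorem~\ref{thm:matching-lb-coreset}) does not immediately work is that the referee's output is not a function of any single player's input, and the players' inputs are correlated through $A,B,\Eab$ and the global matching $\Ebab$. To handle this I would first \emph{decorrelate} by conditioning on $A$, $B$, $\Eab$, and the assignment of $\Eab$-edges to players; under this conditioning each player's message is a deterministic function of $\Mstari$ alone. Since any $(\alpha/4)$-approximate output contains $\Omega(n/\alpha)=\sum_i\Omega(n/(\alpha k))$ edges of $\Ebab=\bigsqcup_i\Mstari$, averaging over a uniformly random player shows the referee recovers $\Omega(n/(\alpha k))$ edges of $\Mstari$ in expectation; applying the symmetrization technique of~\cite{PhillipsVZ12} (as done for vertex cover later in Section~\ref{sec:lower-dist}) reduces the $k$-player problem to the following two-player one-way problem whose $\Omega(n/\alpha^2)$ communication lower bound yields the desired $\Omega(nk/\alpha^2)$ bound: Alice holds a matching $\Mii$ of $\Theta(n/\alpha)$ edges containing a uniformly random sub-matching $\Mstari$ of $\Theta(n/k)$ edges which she cannot identify, Bob holds enough side information (essentially the vertex set of $\Mstari$) to recognize which candidate edges are hidden, and from Alice's single message Bob must output $\Omega(n/(\alpha k))$ edges of $\Mstari$.

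The combinatorial core, which I expect to be the main obstacle, is the $\Omega(n/\alpha^2)$ bound for this two-player problem --- this is what replaces the reduction from \textsc{Disjointness} used in the vertex-cover case. Conditioning on Alice's message $\mu$ (of length $m$), her message is a deterministic function of $\Mii$, so by the indistinguishability property $\Pr\paren{e\in\Mstari\mid\mu}\leq O(\alpha/k)$ for \emph{every} edge $e$. Letting $C_i$ denote the edges of $\Mii$ that appear in Bob's output, success forces $\Omega(n/(\alpha k))\leq\Ex\bracket{\card{C_i\cap\Mstari}\mid\mu}\leq\card{C_i}\cdot O(\alpha/k)$, hence $\card{C_i}=\Omega(n/\alpha^2)$: Bob's output must reveal $\Omega(n/\alpha^2)$ distinct edges of $\Mii$. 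Since $\Mii$ has $\Omega(n/\alpha^2)$ bits of entropy even conditioned on Bob's side information (its $\Eab$-part is pinned down only by the part-$i$ assignment, which Bob does not see), Alice's message must have $m=\Omega(n/\alpha^2)$ bits, and summing over the $k$ players gives $\Omega(nk/\alpha^2)$. The delicate points to be handled carefully are: (i) justifying that after the conditioning the success event still decomposes across players and that the positive correlation between ``Bob outputs $e$'' and ``$e\in\Mstari$'' (via Bob knowing the endpoints of hidden edges) does not invalidate the bound $\Ex\bracket{\card{C_i\cap\Mstari}\mid\mu}\leq\card{C_i}\cdot O(\alpha/k)$; (ii) controlling all the high-probability events (Lemma~\ref{lem:matching-dist-large}, concentration of $\card{\Mstari}$, and the error terms in the counting step), which is where the hypotheses $\alpha=\Omega(\log n)$ and $\alpha=o(\sqrt{n/k})$ enter.
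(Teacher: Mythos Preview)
Your high-level plan---Yao's principle on $\distMatch$, then a symmetrization reduction to a two-player one-way problem where Alice plays $\Ps{i}$ and Bob plays everyone else---matches the paper's route exactly, and the paper's intermediate problem (\MatchingRecovery, Problem~\ref{prob:matching-recovery}) is precisely the two-player problem you describe. The gap is in the core step you yourself flag as ``the main obstacle''.

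The chain $\Omega(n/(\alpha k))\leq\Ex\bracket{\card{C_i\cap\Mstari}\mid\mu}\leq\card{C_i}\cdot O(\alpha/k)$ does not hold as written. The indistinguishability bound $\Pr\paren{e\in\Mstari\mid\mu}=O(\alpha/k)$ from the coreset lower bound applies to an observer who sees only $\Gii$ (or a function thereof). Bob is not that observer: in your own setup he is handed the vertex sets $P,Q$ of $\Mstari$ together with the edges $\EB$ of all other players. His output $C_i$ is a function of $(\mu,P,Q,\EB)$, and conditioned on that full tuple an edge between $P$ and $Q$ can lie in $\Mstari$ with probability close to $1$. So the correlation you list under delicate point (i) \emph{does} invalidate the displayed bound, and the conclusion $\card{C_i}=\Omega(n/\alpha^2)$ is unsupported. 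The subsequent step---``$\Mii$ has $\Omega(n/\alpha^2)$ bits of entropy, hence $m=\Omega(n/\alpha^2)$''---is also a non sequitur: Bob producing many output edges, using substantial side information of his own, does not by itself force Alice's message to be long; what is needed is a quantitative link between the message length and the number of edges of $\MA$ that Bob can \emph{certify}.

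The paper supplies that link via a counting argument rather than an entropy assertion. It reformulates the distribution by partitioning the $\Theta(n/\alpha)$ vertices on each side into $c=\Theta(k/\alpha)$ blocks $(P_1,Q_1),\ldots,(P_c,Q_c)$, reveals this block structure $\vecB$ to both players, and lets Bob's $(P,Q)$ be a uniformly random block. Alice's message now depends on $\MA$ but is independent of which block Bob holds, so Bob can safely output an edge only if it lies in \emph{every} matching consistent with $\mu$ and $(\EB,\vecB)$. Lemma~\ref{lem:matching-counting} shows that any consistent class with $\ell$ common edges has measure at most $2^{-(\ell-\Theta(k))}$, so an $s$-bit message can pin down at most $O(s)$ edges in total; since Bob's block is one of $c$ uniformly random blocks, only $O(s)/c=(\alpha/k)\cdot O(s)$ of those pinned edges land there in expectation (Lemma~\ref{lem:cc-matching-detection}). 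Summing over the $k$ players gives the $\Omega(nk/\alpha^2)$ bound. This block trick is precisely what is missing from your sketch: it simultaneously neutralizes Bob's knowledge of $(P,Q)$ (by making it one of many symmetric blocks unknown to Alice) and converts the vague entropy claim into a clean counting statement.
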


By Yao's minimax principle~\cite{Yao87}, it suffices to analyze the communication complexity of 
\emph{deterministic} protocols over a fixed (hard) distribution. We again use the distribution $\distMatch$ in Section~\ref{sec:lb-matching-coreset}.
For the convenience of the reader, we repeat the description of this distribution here. 

\textbox{Distribution $\distMatch$. \textnormal{A hard input distribution for the matching problem.}}{
\begin{itemize}
	\item Let $G(L,R,E)$ (with $\card{L} = \card{R} = n$) be constructed as follows: 
		\begin{enumerate}
			\item Pick $A \subseteq L$ and $B \subseteq R$, each of size $n/\alpha$, uniformly at random. 
			\item Define $\Eab$ as a set of edges between $A$ and $B$, chosen by picking each edge in $A \times B$ w.p. $k \cdot \alpha/n$. 
			\item Define $\Ebab$ as a \emph{random} perfect matching between $\bA$ and $\bB$. 
			\item Let $E:= \Eab \cup \Ebab$. 
		\end{enumerate}
	\item Let $\Ei{1},\ldots,\Ei{k}$ be a \emph{random $k$-partitioning} of $E$ and let the input to player $\Ps{i}$ be the graph $\Gii(L,R,\Eii)$. 
\end{itemize}
} 

Let $G$ be a graph sampled from the distribution $\distMatch$. Notice first that the graph $G$ always has a matching of size at least $n-n/\alpha \geq n/2$, i.e., the matching $\Ebab$. 
Additionally, it is easy to see that any matching of size more than $2n/\alpha$ in $G$ uses at least $n/\alpha$ edges from $\Ebab$: the edges in $\Eab$ can only form a matching of size $n/\alpha$ by construction. 
This implies that any $(\alpha/4)$-approximate solution requires recovering at least $n/\alpha$ edges from $\Ebab$\footnote{Similar to Section~\ref{sec:lb-matching-coreset}, we prove the lower bound
for $(\alpha/4)$-approximation protocols for simplicity of representation.}. It is this task that we show requires $\Omega(nk/\alpha^2)$ communication. 

The following definitions are identical to those in Section~\ref{sec:lb-matching-coreset}.  
For any $i \in [k]$, define the \emph{induced matching} $\Mii$ as the unique matching in $\Gii$ that is incident on \emph{vertices of degree exactly one}, i.e., both end-points of each edge
in $\Mii$ have degree one in $\Gii$. Recall that by Lemma~\ref{lem:matching-dist-large}, $\card{\Mii} = \Theta(n/\alpha)$ w.h.p.

Let $\Mstari$ be the subset of the matching $\Ebab$ assigned to $\Ps{i}$. It is clear that $\Mstari \subseteq \Mii$ by the definition of $\Mii$. By our previous discussion, it is these edges of $\Mstari$ that the player $\Ps{i}$ needs to 
communicate to the coordinator. Moreover, notice that the players can simply ignore all edges in $\Gii$ that do not belong to $\Mii$ as they clearly cannot be in $\Mstari$. 
However, a crucial property of the distribution $\distMatch$ is that the edges in $\Mstari$ and the remaining edges in $\Mii$ are \emph{indistinguishable} in $\Gii$. In other words, conditioned on a 
specific assignment for $\Mii$, \emph{any} edge $e \in \Mii$ belongs to the matching $\Mstari$ w.p. $\alpha/k$ independent of the other edges. Moreover, it is intuitive to think that only player 
$\Ps{i}$ is able to communicate the edges in $\Mstari$ as the input of other players, while dependent on the set of vertices in the matching $\Mii$, are essentially independent 
of the edges in $\Mii$. This discussion suggests the following intermediate problem in the one-way two-player communication model. 

\begin{problem}[\MatchingRecovery problem]\label{prob:matching-recovery}
	Let $H$ be a bipartite graph with $t$ vertices on each side. Alice is given a perfect matching $\MA$ in $H$ and Bob is given the following input: 
	\begin{itemize}
		\item Two sets $P$, $Q$ (each in one side of the bipartition of $H$) with $\card{P} = \card{Q} = p$ and the \emph{promise} that in matching $\MA$ vertices in $P$ are matched to vertices $Q$. 
		\item A set $\EB$ of edges in $H$ with the \emph{promise} that the matching $\MA$ does not contain any edge from $\EB$.  
	\end{itemize}
	The goal is for Alice to send a message to Bob and Bob needs to output the edges in the matching $\MA$ that are between $P$ and $Q$. 
\end{problem}

Consider the following distribution $\distMD$ for $\MatchingRecovery$ based on the distribution $\distMatch$: Fix any arbitrary $i \in [k]$; we sample an input instance $\Gi{1},\ldots,\Gi{k}$ from $\distMatch$. Then, 
we let $H$ be the bipartite graph on the set of vertices in $\Mii$ and let $\MA = \Mii$. We let the input sets $P$ and $Q$ to Bob be the set of vertices incident on $\Mstari$. Finally, we let $\EB$ be the set
of edges assigned to all graphs $\Gi{j}$ for $j \neq i$ that are between the vertices matched by $\Mii$, i.e., are inside the graph $H$. This completes the description of the distribution $\distMD$. 

In the following, we condition on the inputs chosen from $\distMD$ to have the following additional property: $\card{\Mii} = \Theta(n/\alpha)$ and $\card{\Mstari} = \Theta(n/k)$. 
Notice that by Lemma~\ref{lem:matching-dist-large}, w.p. $1-O(1/n)$, for any player \Ps{i}, $\card{\Mii} = \Theta(n/\alpha)$. A simple application of Chernoff bound also ensures that the
number of edges from $\Ebab$ assigned to each player, i.e., the edges in the matching $\Mstari$ is $\Theta(n/k)$ w.p. $1-O(1/n)$. Consequently, conditioning on this event is essentially not changing the 
distribution and hence for simplicity from now on, we always assume the inputs chosen from $\distMD$ satisfy the mentioned properties. We establish the following lower bound for \MatchingRecovery. 

\begin{lemma}[Communication Complexity of \MatchingRecovery]\label{lem:cc-matching-detection}
	Suppose $s = \Omega(k)$ denotes the communication cost of a protocol for \MatchingRecovery and $X$ denotes the number of edges output by Bob in this protocol; for the inputs chosen
	from the distribution $\distMD$, we have $\Ex\bracket{X} \leq (\alpha/k) \cdot O(s)$. 
\end{lemma}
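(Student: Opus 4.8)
The plan is to invoke Yao's minimax principle~\cite{Yao87} and argue about deterministic one-way protocols over $\distMD$: fix $f$ computing Alice's message $M=f(\MA)$ of at most $s$ bits, and condition on the event (of probability $1-O(1/n)$, by Lemma~\ref{lem:matching-dist-large} and a Chernoff bound) that $t:=\card{\Mii}=\Theta(n/\alpha)$ and $p:=\card{\Mstari}=\Theta(n/k)$; this perturbs $\Ex\bracket{X}$ by only $O(1)$, which the hypothesis $s=\Omega(k)$ absorbs. Throughout we use that $p/t=\Theta(\alpha/k)$.

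The crux is the tension between what Alice sees and what Bob can certify. On one hand, inside $\Gii$ the edges of $\Mstari$ (coming from $\Ebab$) and the remaining edges of $\Mii$ (coming from $\Eab$) are statistically indistinguishable isolated edges; hence, conditioned on $\MA=\Mii$, every edge of $\MA$ lies in the target $\Mstari$ with probability $O(\alpha/k)=O(p/t)$, and this event is (essentially) independent of $\MA$ and therefore of Alice's message. On the other hand, Bob's output must be a valid matching of $G$, and since $P\subseteq\bA$, $Q\subseteq\bB$, and the only $G$-edges between $\bA$ and $\bB$ are those of $\Ebab$, any pair in $P\times Q$ that Bob outputs is either an edge of $\MA=\Mstari$ or not an edge of $G$ at all. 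Thus, up to the protocol's failure probability, Bob can output a pair in $P\times Q$ only if it is \emph{forced} by his view, i.e. belongs to every matching consistent with $M$ (and with $P,Q,\EB$). Writing $\Phi(\MA)$ for the number of edges of $\MA$ forced by $M=f(\MA)$, I claim the two observations combine to give
\[ \Ex\bracket{X}\;\le\;O(p/t)\cdot\Ex_{\MA}\bracket{\Phi(\MA)}, \]
since each forced edge of $\MA$ lands in the target with probability $O(p/t)$ and ``forced'' is a property of $\MA$ alone.

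It then remains to show $\Ex_{\MA}\bracket{\Phi(\MA)}=O(s)$ by a packing argument. For any message value $m$, all matchings in $f^{-1}(m)$ extend the edges forced by $m$, so if that set has size $j$ then $\card{f^{-1}(m)}\le(t-j)!$. Since $\MA$ is marginally uniform over the $t!$ perfect matchings, $\sum_m\card{f^{-1}(m)}=t!$, and there are at most $2^s$ message values; hence $\Pr_{\MA}\bracket{\Phi(\MA)\ge j}\le 2^s\cdot(t-j)!/t!$, and
\[ \Ex_{\MA}\bracket{\Phi(\MA)}\;=\;\sum_{j\ge 1}\Pr_{\MA}\bracket{\Phi(\MA)\ge j}\;\le\;\sum_{j\ge 1}\min\set{1,\;2^s\cdot(t-j)!/t!}\;=\;O(s/\log t)\;=\;O(s). \]
Combining, $\Ex\bracket{X}\le O(p/t)\cdot O(s)=O(\alpha s/k)$, as claimed.

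The step I expect to be the main obstacle is the inequality $\Ex\bracket{X}\le O(p/t)\cdot\Ex_{\MA}\bracket{\Phi(\MA)}$ — specifically, ruling out that Bob's extra knowledge of $P$, $Q$, $\EB$ lets him force more edges of $\MA$ than $M$ alone determines. A priori, knowing that $\Mstari$ is a perfect matching \emph{onto} the known set $Q$ restricts the consistent matchings and could pin down additional edges. Ruling this out is where one must use that the partition of $\Mii$ into $\Mstari$ and $\Mii\setminus\Mstari$ is (essentially) independent of $\Mii$ — so the oblivious map $f$ cannot be engineered to reveal target edges preferentially — together with the fact that $\EB$ is built from $\Eab$-edges incident only to $V(\Mii)\setminus(P\cup Q)$ and hence is independent of $\Mstari$. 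A careful combinatorial accounting of which edges can become ``forced'' after conditioning on $(P,Q,\EB)$ then shows that $\Phi$ is not inflated on the target and the factor $p/t$ survives.
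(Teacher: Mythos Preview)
Your high-level strategy---bound the number of ``forced'' edges of $\MA$ by a packing argument, then multiply by the probability $p/t=\Theta(\alpha/k)$ that a given edge lands in the target---is the same as the paper's, and your packing bound $\Ex_{\MA}[\Phi(\MA)]=O(s)$ is essentially the paper's Lemma~\ref{lem:matching-counting}. The gap is exactly where you flag it, and your proposed resolution does not close it.

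The problem is that $\Phi(\MA)$, as you define it, counts edges forced by the message $M$ \emph{alone}, whereas Bob's view also includes $(P,Q,\EB)$. The promise that $\MA$ restricts to a perfect matching between $P$ and $Q$ is a genuine constraint on $\MA$ that can pin down additional edges beyond those forced by $M$. (Concretely: if $f^{-1}(M)$ contains two matchings that differ by the swap $(u_1,v_1),(u_2,v_2)\leftrightarrow(u_1,v_2),(u_2,v_1)$, then neither edge is forced by $M$; but learning $u_1\in P$, $v_1\in Q$, $u_2\notin P$ forces $(u_1,v_1)$.) Your appeals to ``$f$ is oblivious to the target'' and ``$\EB$ avoids $P\cup Q$'' are both true but address a different concern: they say Alice cannot preferentially encode target edges, and that $\EB$ says nothing about the $P$--$Q$ part of $\MA$. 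Neither controls how much the constraint ``$\MA$ matches $P$ to $Q$'' shrinks the consistent set, and a ``careful combinatorial accounting'' of this effect is precisely what is missing.

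The paper sidesteps the issue by reformulating $\distMD$ via a \emph{block decomposition}: partition the vertex sets into $c=\lfloor t/p\rfloor$ public blocks $\vecB=((P_1,Q_1),\ldots,(P_c,Q_c))$, sample $\MA$ so that it matches each $P_j$ to $Q_j$, and let Bob's $(P,Q)$ be a uniformly random block. Revealing $\vecB$ to both players can only help them, and now the promise ``$\MA$ matches $P$ to $Q$'' is vacuous---it holds for every block---so it no longer constrains $\MA$ beyond what $\vecB$ already did. The forced-edge count is then taken with respect to $(M,\vecB,\EB)$ (the counting lemma gives $O(s)+\Theta(k)=O(s)$ forced edges, the $\Theta(k)$ term coming from $\EB$), and since the choice of block is uniform and independent of $\MA$, the expected number landing in $(P,Q)$ is $O(s)/c=(\alpha/k)\cdot O(s)$. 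This block trick is the missing idea in your argument.
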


Notice that the distribution $\distMD$ for \MatchingRecovery imposes a non-trivial correlation between the inputs of the two players which complicates the proof of this lower bound.
We address this issue by expressing this distribution as a convex combination of a relatively small number of simpler (yet non-trivial) distributions and prove the lower bound for each distribution separately. 
These distributions are still not product distributions but we can show that the mild correlation in the input of the players in this case can be managed directly using a careful combinatorial argument. 
The proof is deferred to Section~\ref{sec:matching-detection}. Before that, we prove a formal reduction from the matching problem to \MatchingRecovery and use 
Lemma~\ref{lem:cc-matching-detection} to finalize the proof of Theorem~\ref{thm:matching-lb}. 

\begin{proof}[Proof of Theorem~\ref{thm:matching-lb}] 
	Fix a protocol $\ProtMatching$ for the matching problem on $\distMatch$ with communication cost $o(nk/\alpha^2)$ 	and
	suppose each player $\Ps{i}$ communicates at most $s_i$ bits in this protocol. We assume that $s_i = \Omega(k)$ as otherwise we can simply augment it with $\Omega(k)$ bits to satisfy this bound
	while increasing the total communication cost of the protocol by $O(k^2) = o(nk/\alpha^2)$ bits (since $\alpha = o(\sqrt{n/k})$). Our goal is to show that in this protocol, at most $o(n/\alpha)$ edges
	from $\Ebab$ can be matched in expectation. The result then follows from the fact that obtaining better than $(\alpha/4)$-approximation requires outputting $\Omega(n/\alpha)$ edges
	from $\Ebab$. 

	We use $\ProtMatching$ to create $k$ protocols for \MatchingRecovery, whereby in the $i$-th protocol $\Prot_i$, 
	Alice plays the role of player $\Ps{i}$ and Bob plays the role of all other players plus the coordinator. 
	Fix an $i \in [k]$; the protocol $\Prot_i$ works as follows. 
	
	Given their input in \MatchingRecovery, Alice and Bob sample two random sets $\XM \subseteq L$ and $\YM \subseteq R$, each of size $t$ using public randomness. They also 
	sample two random sets $\XbM \subseteq L \setminus \XM$ and $\YbM \subseteq R \setminus \YM$, each of size $(n/\alpha-t+p)$. \
	
	Alice creates the graph $\Gii$ by letting $\Mii$ be 
	the matching $\MA$ and choose the remainder of the graph $\Gii$ by sampling the edges between $\XbM$ and $\YbM$ using the same distribution as the distribution of $\Gii$ conditioned 
	on $\Mii = \MA$ and the set of non-zero degree vertices in $\Gii$ being subset of $\XbM$ and $\YbM$. 
	
	Bob creates the input of the other players as follows. Bob picks a random mapping $\sigma: \EB \rightarrow [k] \setminus \set{i}$ and 
	in the graph $\Gi{j}$ for $j \neq i$, he assigns $\sigma^{-1}(j) \subseteq \EB$ to be the edges between $\XM$ and $\YM$. Finally, Bob samples
	the remainder of the graphs from the joint distribution of $\distMatch$ conditioned on the set $A = \XbM \cup \XM \setminus P$, $B = \XbM \cup \XM \setminus Q$,
	and the edges between $\XM$ and $\YM$ be the ones already sampled (via $\sigma$). Note that since Bob has the knowledge of the sets $P$ and $Q$, he can sample the reminder of the matching 
	$\Ebab$ for the $k-1$ remaining players as well. 
	
	One can verify that the distribution of the instances created by this reduction matches the distribution $\distMatch$. To finish
	the reduction, Alice and Bob simulate the protocol $\ProtMatching$ by Alice sending the message of $\Ps{i}$ to Bob (or equivalently the coordinator) and 
	Bob creating the message of all other players locally and completing the protocol. Bob then outputs the part of the matching computed by $\ProtMatching$ which lies between $P$ and $Q$. 
	This results in a protocol $\Prot_i$ for $\MatchingRecovery$ with communication cost of $s_i$ bits.

	We can now invoke Lemma~\ref{lem:cc-matching-detection} to argue that the expected number of edges
	matched by $\Prot_i$ and hence by $\ProtMatching$ for the player $\Ps{i}$ in the distribution $\distMatch$ is at most $(\alpha/k) \cdot O(s_i)$. Summing over all players, 
	we have that the total number of matched edges in $\Ebab = \bigcup_{i=1}^{k} \Mstari$ is 
	$ \sum_{i=1}^{k} (\alpha/k) \cdot O(s_i) = (\alpha/k) \cdot o(nk/\alpha^2) = o(n/\alpha)$, 
	where the first equality is by the bound on the communication cost of the protocol. This completes the proof. 
\end{proof}

We remark that the bound in Theorem~\ref{thm:matching-lb} is tight (up to an $O(\log{n})$ factor) for all ranges of $\alpha$. 

\begin{remark}\label{rem:tight-matching}
	The protocol in which each player computes a maximum matching of the input graph, subsamples the edges of this matching w.p. $1/\alpha$, and sends it to the coordinator who outputs a maximum matching
	of the received matchings is an $\alpha$-approximation protocol for the maximum matching problem with $\Ot(nk/\alpha^2)$ total communication. 
\end{remark}

We briefly sketch the proof of correctness for protocol in Remark~\ref{rem:tight-matching}. Assume first that the maximum matching size in each player input is of size $\Ot(n/\alpha)$. The bound on the 
total communication cost follows immediately from this assumption. To see the correctness, recall that in the proof of Theorem~\ref{thm:matching}, we showed each coreset (here, each player) can increase the size 
of the output matching by $\Omega(\mm(G)/k)$; since we are subsampling the maximum matching by a factor of $\alpha$, this increment would be $\Omega(\mm(G)/\alpha k)$ and hence over all $k$ players, 
we obtain a matching of size $\Omega(\mm(G)/\alpha)$. The assumption on the size of the maximum matching in each player input is essentially without loss of generality since otherwise one player can send an 
$\alpha$-approximate matching to the coordinator alone, resulting in a protocol with $\Ot(n/\alpha)$ communication. We point out that a simple concentration result proves that the 
maximum matching size between players is concentrated within an $O(\log{n})$ factor. This easily implies that in this case we can ensure that only one player is sending his maximum matching and not all players.

\subsection{Communication Complexity of \MatchingRecovery}\label{sec:matching-detection}

 We start by reformulating the distribution $\distMD$ to make it more suitable for proving the lower bound in Lemma~\ref{lem:cc-matching-detection}. 
Indeed, the distribution $\distMD$ is \emph{not} a product distribution: the promise that Alice's matching $\MA$ needs to always match the set $P$ to $Q$ correlates Alice's and Bob's input in a non-trivial way, 
complicating the analysis. To address this, we show that the distribution $\distMD$ can be expressed as a convex combination of a relatively small set of (essentially) product distributions; this significantly simplifies the proof of 
the lower bound. 

Let us first define the distribution $\distMD$ directly, i.e., without depending on the distribution $\distMatch$. 
In $\distMD$ conditioned on $\card{\MA} = t$ and $\card{P} = \card{Q} = p$, the matching $\MA$ is chosen uniformly at
random from the set of all matchings of size $t$ in $H$ and then $P$ and $Q$ are chosen uniformly at random from all pairs of sets of size $p$ that
are matched together in $\MA$. Finally, the edge-set $\EB$ is chosen by picking each edge in $H$ that is not incident on $P$ and $Q$ and not in $\MA$ w.p. $(k-1) \cdot \alpha/n$. One can check
that this results in an equivalent definition of $\distMD$.  

We now reformulate the distribution $\distMD$ as follows: we first randomly partition the vertices in $L_H$ and $R_H$ (i.e., the bipartition of the graph $H$) into $c:= \floor{t/p}$ \emph{blocks} denoted
by $\vecB:= (P_1,Q_1),\ldots,(P_c,Q_c)$ such that $P_i \subseteq L_H$, $Q_i \subseteq R_H$ and $\card{P_i} = \card{Q_i} = p$ for all $i \in [c]$. We then create $\MA$ by picking a random matching that 
matches each $P_i$ to $Q_i$\footnote{Note that this way, it is possible that up to $p$ vertices in $L_H$ and $R_H$ become ``left overs'' i.e., do not belong to any block. We pick a random matching between these vertices
also to complete the description of $\MA$.}. Finally, the input to Bob is a pair $P$ and $Q$ chosen uniformly at random from these $c$ blocks. We pick the edge-set $\EB$ of Bob as before. It is again easy to
verify that this is indeed an equivalent formulation of the distribution $\distMD$. 

Suppose we provide the identity of the blocks $\vecB$ to both Alice and Bob; this essentially breaks the dependence between Alice's and Bob's inputs (except for the mild correlation enforced by $E_B$ that 
we deal with directly). Note that revealing this extra information can only make our lower bound result stronger. In the following, we argue that even with $\vecB$ being public information, to solve the problem, Alice needs to communicate a large message. 

For any fixed $\vecB$, let $\sigma_{\vecB}(\MA)$ be the (deterministic) mapping used by Alice to create the message sent to Bob. The mapping $\sigma_{\vecB}$ then partitions 
all matchings $\MA$ that are valid with respect to $\vecB$ into $2^{s}$ classes $\Sigma_1,\ldots,\Sigma_{2^s}$ (one per each message). 
Moreover, for $\vecB$ and a fixed set of edges $\EB$, we define $\FC(\EB,\vecB)$ as the set of matchings $\MA$ that conform to the restrictions
imposed by both $\EB$ and $\vecB$ (we use $\FC$ if $\EB$ and $\vecB$ are clear from the context). Note that $\sigma_{\vecB}$ similarly partitions $\FC$ into $2^{s}$ different classes as well.

An important observation is that given a message corresponding to some class $\Sigma_i$ and the edges $\EB$, the matching $\MA$ is chosen uniformly at random
from all matchings in $\Sigma_i \cap \FC$; consequently, Bob can only output an edge in the final answer if it belongs to 
\emph{all} matchings in $\FC$ that are mapped to $\Sigma_i$. For any set $F \subseteq \FC$, we define $M_F$ as the intersection of all matchings $\MA$ in $F$. Intuitively, whenever $M_F$ is large, the set
$F$ itself should be small as many edges of the matchings $\MA \in F$ are forced to be the same. We formalize this intuition in the following lemma. 

\begin{lemma}\label{lem:matching-counting}
	For any set $F \subseteq \FC(\EB,\vecB)$, if $M_F$ contains $\ell$ edges, then $\card{F} \leq {2^{-\paren{\ell - \Theta(k)}}} \cdot \card{\FC}$ w.h.p. (the probability is taken over the choice of $E_B$ after fixing $\vecB$). 
\end{lemma}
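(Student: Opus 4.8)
The plan is to reduce Lemma~\ref{lem:matching-counting} to a combinatorial counting bound that factors over the blocks of $\vecB$, and then to establish the per-block bound by fixing the edges of $M_F$ one at a time. First, since every matching in $F$ contains $M_F$, we have $\card{F}\le\card{\set{\MA\in\FC:\MA\supseteq M_F}}$, so it suffices to show that for \emph{any} partial matching $S$ with $\card{S}=\ell$ that extends to a matching in $\FC(\EB,\vecB)$, the number of matchings in $\FC$ containing $S$ is at most $2^{-(\ell-O(k))}\card{\FC}$. Now recall that a matching conforming to $\vecB$ is block-diagonal: it is a disjoint union of one perfect matching inside each block $(P_1,Q_1),\dots,(P_c,Q_c)$ (we fold the at most $p$ leftover vertices on each side into one auxiliary block), and avoiding $\EB$ is an edge-by-edge constraint. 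Hence $\card{\FC}=\prod_j m_j$, where $m_j$ is the number of perfect matchings of the bipartite graph $H_j$ with parts $P_j,Q_j$ and edge set $(P_j\times Q_j)\setminus\EB$; and since $M_F$ is itself block-diagonal, the number of conforming matchings containing $S$ equals $\prod_j m_j(S_j)$, with $S_j:=S\cap(P_j\times Q_j)$ and $\ell_j:=\card{S_j}$. So it is enough to prove, for every block, $m_j(S_j)/m_j\le 2^{-(\ell_j-O(D))}$, where $D$ is a uniform bound on the number of $\EB$-edges incident to a vertex inside its own block.

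The probability in the statement is used exactly to control $D$. After $\vecB$ is fixed, each of the at most $p$ potential edges from a fixed vertex to the opposite side of its block lands in $\EB$ with probability $O(k\alpha/n)$ (and, up to the mild correlation caused by excluding $\MA$, roughly independently), so the expected $\EB$-degree inside a block is $O(p\cdot k\alpha/n)=O(\alpha)$; a Chernoff bound and a union bound over the $O(t)=O(n)$ vertices give $D=O(\alpha+\log n)=O(\alpha)$ with probability $1-O(1/n)$, using $\alpha=\Omega(\log n)$. Since there are $\floor{t/p}+1=\Theta(k/\alpha)$ blocks (here $t=\Theta(n/\alpha)$, $p=\Theta(n/k)$), summing the per-block losses yields a total loss of $\Theta(k/\alpha)\cdot O(\alpha)=O(k)$, which is the claimed $\Theta(k)$. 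We also note $\alpha=o(\sqrt{n/k})=o(\sqrt{p})$, so $D=o(p)$ and every $H_j$ is extremely dense, with minimum degree at least $p-D$.

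The per-block estimate is proved by fixing the edges of $S_j$ one at a time and multiplying ratios, and the key ingredient is the following claim: in \emph{any} bipartite graph on $q+q$ vertices with minimum degree at least $q-d$, the fraction of perfect matchings containing any fixed edge $(u,v)$ is at most $1/(q-2d)$. I would prove this by a switching argument. Given a perfect matching $M$ with $(u,v)\in M$, for each $v'\in N(u)\setminus\set{v}$ with $M(v')\in N(v)$ --- and there are at least $q-2d-1$ such choices of $v'$ by inclusion--exclusion inside the part of the bipartition containing $u$ --- replacing the pair of edges $(u,v),(u',v')$ by $(u,v'),(u',v)$, where $u'=M(v')$, produces a perfect matching that does \emph{not} contain $(u,v)$; moreover this image together with the choice of $v'$ determines $M$, so the map is injective, which gives $(q-2d-1)\,m_{uv}\le m-m_{uv}$ and hence $m_{uv}/m\le 1/(q-2d)$. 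Applying this while fixing $S_j$ edge by edge, the number of vertices per side decreases from $p$ to $p-\ell_j+1$ while the forbidden degree stays at most $D$, so each step contributes a factor at most $1/(q-2D)\le 1/2$ except for the last at most $2D+1$ steps (where $q-2D<2$), yielding $m_j(S_j)/m_j\le 2^{-(\ell_j-2D-1)}$. The bookkeeping for the leftover block and for the degenerate case $\card{P_j}<2D+2$ is routine.

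The main obstacle is precisely the per-block switching claim: in a uniformly random perfect matching, the partner of a vertex is \emph{not} distributed uniformly over its available neighbors (it is weighted by the number of ways to complete the matching), so one cannot simply charge ``one bit of entropy per fixed edge''; the injection above is what recovers a clean $1/(q-2d)$ factor independently of these weights. Everything else --- the block factorization, the Chernoff/union-bound estimate for $D$, and the arithmetic $\Theta(k/\alpha)\cdot O(\alpha)=O(k)$ --- is routine.
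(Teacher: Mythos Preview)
Your proof is correct and shares the paper's high-level scaffolding: factor $\card{\FC}$ and $\card{F}$ over the blocks of $\vecB$, use Chernoff to bound the per-vertex $\EB$-degree inside a block by $\beta=\Theta(\alpha)$, and sum the slack over $c=\Theta(k/\alpha)$ blocks to get the $\Theta(k)$ term. Where you diverge is the per-block estimate. The paper takes a direct factorial route: it lower-bounds the number of conforming matchings in a block by $(p-\beta)!$ and upper-bounds the number containing $\ell_i$ prescribed edges by $(p-\ell_i)!$ (dropping the $\EB$ constraint), then manipulates the ratio. You instead keep the exact counts $m_j$ and $m_j(S_j)$ and control their ratio edge by edge via a switching injection, proving that in any bipartite graph on $q+q$ vertices with minimum degree at least $q-d$ the fraction of perfect matchings through a fixed edge is at most $1/(q-2d)$. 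Your route is a bit more work but also more robust: the paper's displayed step $\frac{(p-\beta)!}{(p-\ell_i)!}\ge (p-\beta)(p-\beta-1)\cdots(p-\beta-\ell_i+1)$ is not correct as written (the right side equals $(p-\beta)!/(p-\beta-\ell_i)!$, which exceeds the left whenever $\beta\ge 1$), so a clean argument really does need something like your switching bound, or else a tighter upper bound on $\card{F}$ that retains the $\EB$ constraint.
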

\begin{proof}
	Note that we can switch the order in which we pick $\MA$ and $\EB$ in the distribution $\distMD$. Fix any block $(P_i,Q_i) \in \vecB$ and any vertex $u \in P_i$; 
	each edge $(u,v)$ for $v \in Q_i$ is chosen w.p. $(k-1) \cdot \alpha/n$ in $\EB$. Hence, w.h.p, at most $\beta:= 2\cdot \card{Q_i} \cdot (k-1) \cdot \alpha/n = \Theta(\alpha)$
	neighbors are chosen for $u$ in $Q_i$ in $\EB$ (here, we used a standard application of Chernoff bound and the
	assumption that $\alpha = \Omega(\log{n})$). This means that for each vertex in $P_i$ there are $p - \beta$ possible choices for its neighbor in $\MA$; hence, there
	are at least $(p-\beta)!$ choices for $\MA$ to match $P_i$ to $Q_i$. Since there are $c$ different blocks, we have $\card{\FC} \geq \paren{(p-\beta)!}^{c}$.
	
	Now suppose we fix $\ell$ edges for the matching $\MA$ (as happens in the set $F$), and let $\ell_i$ be the number of fixed edges between $(P_i,Q_i) \in \vecB$. There can be at most $(p-\ell_i)!$ choices 
	for the matching between $P_i$ and $Q_i$ (we ignore the restriction implied by $\EB$ for the purpose of obtaining an upper bound). Hence, 
	\begin{align*}
		\frac{\card{\FC}}{\card{F}} &\geq \prod_{i=1}^{c} \frac{(p-\beta)!}{(p-\ell_i)!} \geq \prod_{i=1}^{c} (p-\beta)  \ldots (p-\beta-\ell_i+1) \\
		&\geq \prod_{i=1}^{c} 2^{\ell_i - \beta} = 2^{\ell - c \cdot \beta} = 2^{\ell - \Theta(k)}
	\end{align*}
	where the last equality is by the fact that $\beta = \Theta(\alpha)$ and $c = \Theta(k/\alpha)$. 
\end{proof}

We are now ready to finalize the proof of Lemma~\ref{lem:cc-matching-detection}. 
\begin{proof}[Proof of Lemma~\ref{lem:cc-matching-detection}]
	Fix a set of blocks $\vecB$ and edges $\EB$ and assume that the event in Lemma~\ref{lem:matching-counting} happens. Notice that the mapping $\sigma_{\vecB}$ maps the set $\FC(\EB,\vecB)$ 
	to $2^{s}$ different choices $\Sigma_1,\ldots,\Sigma_{2^s}$. We define \event as the event that $\sigma_{\vecB}(\MA)$ maps to a class $\Sigma$ with $\card{\Sigma \cap \FC} \geq \card{\FC}/2^{2s}$. 
	The following claim on the probability of \event can be proven using a simple counting argument; the proof is deferred to after the current proof. 

	\begin{claim}\label{clm:event-whp}
	$\Pr\paren{\event} = 1-O(1/n)$. 
	\end{claim}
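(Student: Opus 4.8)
The plan is to prove the claim by a short pigeonhole/counting argument carried out over the single remaining source of randomness. Having fixed the blocks $\vecB$ and the edge-set $\EB$ (and conditioned on the high-probability event of Lemma~\ref{lem:matching-counting}), the matching $\MA$ is, by construction of $\distMD$, distributed uniformly over $\FC = \FC(\EB,\vecB)$. Alice's encoding map $\sigma_{\vecB}$ is deterministic, so it partitions $\FC$ into at most $2^{s}$ parts $\Sigma_1 \cap \FC,\ldots,\Sigma_{2^{s}} \cap \FC$, one per possible $s$-bit message; this is the partition I would work with.

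Next I would split the classes by size. Call a class $\Sigma_i$ \emph{light} if $\card{\Sigma_i \cap \FC} < \card{\FC}/2^{2s}$. Since there are at most $2^{s}$ classes in total, at most that many are light, and hence the number of matchings of $\FC$ lying in some light class is at most $2^{s}\cdot \card{\FC}/2^{2s} = \card{\FC}/2^{s}$. Because $\MA$ is uniform on $\FC$, this immediately yields $\Pr(\neg\event) = \Pr(\sigma_{\vecB}(\MA)\text{ maps to a light class}) \le 2^{-s}$.

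The last step is to observe that $2^{-s} = O(1/n)$. Here I would invoke the standing hypotheses: $s = \Omega(k)$ (the hypothesis of Lemma~\ref{lem:cc-matching-detection}), and in the parameter range of Theorem~\ref{thm:matching-lb} we have $\alpha = \Omega(\log n)$ together with $\alpha = o(k)$, whence $k = \omega(\log n)$ and therefore $s = \omega(\log n)$, which makes $2^{-s}$ smaller than any fixed polynomial in $1/n$ for $n$ large. (If one prefers not to rely on $k = \omega(\log n)$, one can instead artificially pad each player's message to length at least $2\log n$, increasing the total communication by only $O(k\log n) = o(nk/\alpha^{2})$ and leaving every other part of the argument untouched.) Either way $\Pr(\neg\event) \le 2^{-s} = O(1/n)$, which is the claim. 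The only point that takes a moment's thought is this conversion of $2^{-s}$ into the advertised $O(1/n)$; the counting itself is entirely routine, so I do not anticipate any real obstacle.
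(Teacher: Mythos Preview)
Your proposal is correct and follows essentially the same approach as the paper: define a class $\Sigma_i$ to be light when $\card{\Sigma_i \cap \FC} < \card{\FC}/2^{2s}$, bound the total mass of light classes by $2^{s}\cdot\card{\FC}/2^{2s} = \card{\FC}/2^{s}$, and conclude via uniformity of $\MA$ over $\FC$ that $\Pr(\neg\event)\le 2^{-s}=O(1/n)$. The paper justifies $2^{-s}=O(1/n)$ by the chain $s=\Omega(k)=\Omega(\alpha)=\Omega(\log n)$, which is the same as your route through $\alpha=\Omega(\log n)$ and $\alpha=o(k)$.
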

	Now fix a set $\Sigma$ that corresponds to the message Alice sent to Bob and suppose $\event$ happens. 
	 As argued earlier, given a message Bob can only output an edge in $\MA$ if it belongs to all matchings that are mapped to $\Sigma$, i.e., to $M_\Sigma$. By Lemma~\ref{lem:matching-counting}, the matching 
	$M_{\Sigma}$ contains at most $\ell = 2s + \Theta(k) = \Theta(s)$ edges (since $s = \Omega(k)$). 
	
	Now recall that the block $(P,Q)$ of Bob is chosen \emph{uniformly at random} from the blocks in $\vecB$, \emph{even} conditioned on a specific input matching $\MA$ to Alice; 
	this implies that in expectation $\ell/c = O(\alpha/k) \cdot \ell = (\alpha/k) \cdot O(s)$ edges of $M_{\Sigma}$ are 
	between $(P,Q)$. Consequently, Bob can only output $(\alpha/k) \cdot O(s)$ edges between $P$ and $Q$ in expectation. To finalize the proof, note that $\event$ and the event in Lemma~\ref{lem:matching-counting} happens
	w.h.p., and hence conditioning on these events can only change the expectation by an $O(1)$ additive factor. 
\end{proof}

For completeness, we provide a proof of Claim~\ref{clm:event-whp} here. 

\begin{proof}[Proof of Claim~\ref{clm:event-whp}]
	We say that $\Sigma_i$ in the partition $\Sigma_1,\ldots,\Sigma_{2^s}$ is \emph{light} iff $\card{\Sigma_i \cap \FC} < \card{\FC}/2^{2s}$. Since the matching $M_A$ is chosen uniformly at random, the probability that 
	$\sigma_{\vecB}(M_A)$ maps to some $\Sigma_i$ is exactly $\card{\Sigma_i \cap \FC}/\card{\FC}$. Hence, the probability that $M_A$ maps to some light set is
	at most 
	\begin{align*}
	2^{s} \cdot \frac{\card{\Sigma_i \cap \FC}}{\card{\FC}} \leq 2^{s} \cdot \frac{\card{\FC}}{\paren{2^{2s} \cdot \card{\FC}}} = \frac{1}{2^s} = O(1/n)
	\end{align*}
	where we used the fact that $s = \Omega(k) = \Omega(\alpha) = \Omega(\log{n})$. 
\end{proof}

\subsection{An $\Omega(nk/\alpha)$ Lower Bound on Communication Complexity of Vertex Cover}\label{sec:vc-dist}

We prove the following theorem on the simultaneous communication complexity of vertex cover, formalizing Result~\ref{res:lb-dist} for vertex cover. 

\begin{theorem}\label{thm:vc-lb}
	For any $\alpha$ between $\Omega(\log{n})$ and $o(\min\set{{n/k},k})$, the simultaneous communication complexity of $\alpha$-approximating the vertex cover problem in the random partition
	model with success probability at least $0.9$ is $\Omega(nk/\alpha)$. 
\end{theorem}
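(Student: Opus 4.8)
The plan is to invoke Yao's minimax principle~\cite{Yao87} and exhibit a hard distribution on which no low-communication \emph{deterministic} simultaneous protocol can do well. As in Section~\ref{sec:lb-vc-coreset} I would use the distribution $\distVC$ (with its parameter rescaled by a suitable absolute constant $c\in(0,1)$, which lets us first prove hardness of $(c\alpha)$-approximation and then rescale back to the stated bound). Recall that a graph $G\sim\distVC$ satisfies $\vc(G)\le n/\alpha+1$ via the cover $A\cup\{\vstar\}$, so any $(c\alpha)$-approximate cover $C$ has $\card{C}\le cn+c\alpha$ and hence omits all but a small constant fraction of $L$ and of $R$; in particular, to be valid it must cover $\estar$, i.e.\ contain $\vstar$ or its $R$-endpoint $w$. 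The starting structural fact is Lemma~\ref{lem:vc-lb-degree-one}: for the unique player $\Ps{i}$ that receives $\estar$, the set $L^1_i$ of degree-one vertices of $L$ in $\Gii$ has size $\Theta(n/\alpha)$, as does the set $R^1_i$ of their neighbours. Moreover, conditioned on $\Gii$, the edge $\estar$ is distributed uniformly over the degree-one matching $M_i$ of $\Gii$: player $\Ps{i}$ cannot tell $\vstar$ apart from an ordinary degree-one vertex of $A$, nor $w$ from the $R$-endpoint of any other edge of $M_i$. This is the ``hiding'' phenomenon we exploit.

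The second step reduces the $k$-player simultaneous problem to a two-player one-way problem using the symmetrization technique of~\cite{PhillipsVZ12}. If the total communication were $o(nk/\alpha)$, then after making the index $i$ of the special player uniform an average player sends $o(n/\alpha)$ bits, and by the symmetry of $\distVC$ across the $k$ players we may designate ``Alice'' to be $\Ps{i}$ and ``Bob'' to be the coordinator together with the remaining $k-1$ players. The delicate point is that the players' inputs are correlated — the other players' graphs collectively reveal (essentially) the set $A$, and a priori could reveal more — so before symmetrizing I would condition on the choice of $A$, on the partition of the edges $\EA$ among the $k$ machines, and on the degree-one structure of $\Gii$. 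Conditioned on this information, the inputs of the $k-1$ non-special players become independent of $\Ps{i}$'s private data (namely, which edge of $M_i$ is $\estar$), so Bob can generate them from public randomness and the revealed quantities; revealing all of this to both parties only strengthens the lower bound, while leaving $\vstar$ uniform over $L^1_i\setminus A$ and $w$ uniform over $R^1_i$, both of size $\Theta(n/\alpha)$ by Lemma~\ref{lem:vc-lb-degree-one}.

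With $A$ public, Alice's residual input is precisely the set $S:=L^1_i$ with $\card{S}=\card{A}=\Theta(n/\alpha)$ and $S\setminus A=\{\vstar\}$, so this two-player problem is an instance of the hidden vertex problem \VertexSeeking. Given a valid $(c\alpha)$-approximate cover $C$ returned by Bob, $C$ must contain $\vstar$ or $w$; but because $\Ps{i}$ is itself oblivious to which of its $\Theta(n/\alpha)$ degree-one edges is $\estar$, conditioned on Alice's $o(n/\alpha)$-bit message the pair $(\vstar,w)$ is still spread over $\Omega(n/\alpha)$ candidates, and since $C$ is produced without further knowledge of $M_i$, one extracts from $C$ (restricted to $L\setminus A$, after discarding the part of $C$ forced by $\EA$) a set of size $o(n)$ containing $\vstar$ with the required probability — i.e.\ Bob solves \VertexSeeking with $o(n/\alpha)$ communication. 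This contradicts the $\Omega(n/\alpha)$ lower bound for \VertexSeeking established in Section~\ref{sec:cc-vertex-seeking} (itself proved by a reduction from set disjointness using the H{\aa}stad--Wigderson protocol~\cite{HastadW07}), so the total communication must be $\Omega(nk/\alpha)$. The hypothesis $\alpha=\Omega(\log n)$ is what makes the Chernoff-type estimates underlying Lemma~\ref{lem:vc-lb-degree-one} go through, and $\alpha=o(\min\{n/k,k\})$ guarantees that $M_i$ and $\EA$ are genuinely matching-like across the machines.

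I expect the main obstacle to be making the conditioning-plus-symmetrization step rigorous: one must isolate exactly the right quantities to condition on so that (i) the $k-1$ other players' inputs become conditionally independent of $\Ps{i}$'s input, so Bob can simulate them; (ii) the hardness survives, i.e.\ from the coordinator's point of view $\vstar$ (and $w$) remain uniform over $\Theta(n/\alpha)$-size sets; and (iii) the resulting instance is a legal \VertexSeeking instance ($\card{S}=\card{T}$, $\card{S\setminus T}=1$). A secondary subtlety, flagged above, is ruling out that $C$ is cheaply made valid through the $R$-endpoint $w$ rather than through $\vstar$ — this is precisely where we use that $\Ps{i}$ cannot distinguish $\estar$ from the other edges of $M_i$, so that any small message leaves $w$ as uncertain as $\vstar$.
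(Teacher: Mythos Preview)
Your high-level plan matches the paper's: Yao's principle on $\distVC$, symmetrization so that Alice plays the critical player $\Ps{i}$ and Bob plays everyone else, reduction to the hidden vertex problem \VertexSeeking, and then the $\Omega(n/\alpha)$ lower bound for \VertexSeeking from Section~\ref{sec:cc-vertex-seeking}. However, the specific conditioning you propose has a genuine gap that would make the reduced instance trivially solvable.

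You propose to reveal to both parties the set $A$ and the partition of $\EA$ among the machines. Once $A$ is public, Alice (who holds $\Gii$) can compute $L^1_i \setminus A$, and this set is exactly $\{\vstar\}$ --- the only vertex of $L\setminus A$ with nonzero degree in $\Gii$ is $\vstar$. So Alice can identify $\vstar$ locally and send it to Bob in $O(\log n)$ bits, solving \VertexSeeking with a message far below the claimed $\Omega(n/\alpha)$ bound. Your own sentence ``leaving $\vstar$ uniform over $L^1_i\setminus A$'' is the tell: that set is a singleton, so the claimed residual randomness is vacuous. The gap you correctly anticipate in your last paragraph (``isolate exactly the right quantities to condition on'') is real, and your proposed choice does not satisfy your own condition~(ii).

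The paper's fix is subtle and worth noting. It does \emph{not} make $A$ public. Instead it makes public the initial partitioning $\Epi{1},\ldots,\Epi{k}$ of \emph{all} possible $L\times R$ edges (not just those in $\EA$), and additionally conditions on the set $\Dti$ of vertices of $A$ with degree $\ge 2$ in $\Gii$. Bob is then \emph{given} $T=\Dmi=A\setminus\Dti$ as part of his \VertexSeeking input, so Bob knows $A=\Dti\cup T$ and can simulate the other $k-1$ players (their inputs are independent of $\Ps{i}$'s given $A$ and the initial partitioning). Crucially, Alice does \emph{not} know $T$: from $\Gii$ she sees the degree-one vertices $\Doi\cup\{\vstar\}$ but cannot tell which of them lie in $A$ (i.e., in $\Doi$) versus which one is $\vstar$, because she does not know the degree-zero part $\Dzi$ of $A$. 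This asymmetric information --- Bob knows $A$, Alice does not --- is exactly what makes the \VertexSeeking instance hard and is what your symmetric ``reveal $A$ to both parties'' destroys.
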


For simplicity of exposition, we prove the lower bound for protocols that can obtain a $c \cdot \alpha$ approximation for some small constant $c > 0$ to be determined later. 
By re-parametrizing $\alpha$ by a constant factor in the following, one can obtain the result for $\alpha$-approximation protocols as well. 
We again use the distribution $\distVC$ in Section~\ref{sec:lb-vc-coreset}.
For the convenience of the reader, we repeat the description of this distribution here. 

\textbox{Distribution $\distVC$. \textnormal{A hard input distribution for the vertex cover problem.}}{
\begin{itemize}
	\item Construct $G(L,R,E)$ (with $\card{L} = \card{R} = n$) as follows: 
		\begin{enumerate}
			\item Pick $A \subseteq L$ of size $n/\alpha$ uniformly at random. 
			\item Let $\EA$ be a set of edges chosen by picking each edge in $A \times R$ w.p. $k/2n$. 
			\item Pick a single vertex $\vstar$ uniformly at random from $\bA$ and let $\estar$ be an edge incident on $\vstar$ chosen uniformly at random. 
			\item Let $E:= \EA \cup \set{\estar}$. 
		\end{enumerate}
	\item \label{line:r-partitioning} Let $\Ei{1},\ldots,\Ei{k}$ be a \emph{random $k$-partitioning} of $E$ and let the input to player $\Ps{i}$ be the graph $\Gii(L,R,\Eii)$. 
\end{itemize}
} 

The intuition behind the proof is as follows. The distribution ensures that w.h.p., the input to each player $\Ps{i}$ contains $\Theta(n/\alpha)$ vertices in $L$ with degree exactly one. 
Let us denote this set with $D_i$. Now consider the input of the player $\Ps{\istar}$ which is given the edge $\estar$ also. It is easy to see that in this case, player $\Ps{\istar}$ is oblivious to which vertex 
in $D_i$ is $\ustar$; more formally, conditioned on the input $D_{\istar}$, the vertex $\ustar$ is chosen uniformly at random from $D_{\istar}$. This means the if $\Ps{\istar}$ communicates $o(\card{D_i}) = o(n/\alpha)$ bits, 
he is essentially not  ``revealing any information'' about $\vstar$ (or the other end point of $\estar$). On the other hand, as only $\Ps{\istar}$ has a knowledge about $\vstar$, this intuitively means that coordinator is not provided with 
enough information about $\vstar$ as well. This forces the coordinator to cover $\Omega(n)$ vertices to ensure that $\estar$ is also being covered. 

Making this intuition formal is complicated by the fact that the message by other players is still revealing ``some information'' about the input of player $\Ps{\istar}$, for instance, the identity of the set $A$ or the set 
of edges that may possibly be in the input of $\Ps{\istar}$. To overcome this, we show that by proper conditioning on some part of the input, one can embed an instance of the well-known set disjointness problem in this distribution. On a high level,
solving the disjointness on this embedded instance amounts to finding the vertex $\ustar$. This further allows us to design a reduction from our problem to the disjointness problem and prove the lower bound. Interestingly, while we 
consider the vertex cover only in the simultaneous model, our reduction requires a two-way communication between the players (however note that disjointness is still hard even in the two-way communication model). We now 
continue with the formal proof.


We can interpret the last line of distribution $\distVC$ as follows: Pick a random $k$-partitioning $\Epi{1},\ldots,\Epi{k}$ of \emph{all possible edges between $L$ and $R$}, and let $\Ei{i} = \Epi{i} \cap E$ for all $i \in [k]$. 
In the following, we assume that this \emph{initial partitioning} $\Epi{1},\ldots,\Epi{k}$ is public knowledge, as this allows us to reduce the dependence between the inputs of players which is crucial in our lower bound proof. Clearly, this 
assumption can only strengthen our results. Throughout the proof, we fix an arbitrary small constant $\eps > 0$. We say that the initial partitioning $\Epi{1},\ldots,\Epi{k}$ is \emph{$\eps$-balanced} if the degree of each vertex $v \in L$ in each graph $G(L,R,\Epi{i})$ for all $i \in [k]$ is in $(1 \pm \eps) \cdot n/k$ (note that this graph is \emph{not} the input graph to player $\Ps{i}$). As $n/k \geq \alpha = \Omega(\log{n})$, by Chernoff bound, 
any initial partitioning is $\eps$-balanced w.p. $1-O(1/n)$. Consequently, conditioning on this event is essentially not changing the 
distribution and hence for simplicity from now on, we always assume the inputs chosen from $\distVC$ satisfy the $\eps$-balanced property. 

We say that the player $\PS{i}$ (for $i \in [k]$) is the \emph{critical} player iff the edge $\estar$ is assigned to $\Eii$, i.e., it appears in the input to $\Ps{i}$. We use $\istar$ to denote the index of the critical player. 
In the following, we show that the critical player and the coordinator need to (implicitly) solve a ``hard'' communication task (named the \emph{Hidden Vertex Problem}, $\VertexSeeking$ for short) which requires a large communication from $\Ps{i}$. 

Fix $\ProtVC$ as a $\delta$-error $(c \cdot \alpha)$-approximation protocol for vertex cover over distribution $\distVC$ (for some sufficiently small constant $c > 0$ to be determined later). For any $i \in [k]$, 
let $\delta_i$ be the probability that $\ProtVC$ errs conditioned on $\istar = i$. A simple averaging argument ensures that for any $i \in [k]$ there exists a set $\Epi{i}$ such that 
$\Pr\paren{\ProtVC~\errs \mid \Epi{i}, \istar = i} \leq \delta_i$. We refer to such $\Epi{i}$ as a \emph{good} initial partition for $\Ps{i}$.

Fix a player $i \in [k]$ and a good initial partition $\Epi{i}$ for $i$. Let $\Dzi,\Doi$ and $\Dti$ be the set of vertices in $A$ that have degree, respectively, zero, one, and \emph{at least} two in the graph $\Gi{i}$. 
We further define $\Dmi := \Dzi \cup \Doi = A \setminus \Dti$. 

\begin{claim}\label{clm:exists-dti}
	For any $i \in [k]$ and any good initial partition $\Epi{i}$ for $i$, 
	there exists a set $\Dti$ with $\card{A \setminus \Dti} = \Omega(n/\alpha)$ such that, $\Pr\paren{\ProtVC~\errs \mid \Epi{i},\Dti,\istar=i} \leq \delta_i + o(1)$. 
\end{claim}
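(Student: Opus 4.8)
\textbf{Proof plan for Claim~\ref{clm:exists-dti}.} The plan is to prove the claim in two stages: first a probabilistic estimate showing that $\card{A\setminus\Dti}=\card{\Dmi}$ is $\Omega(n/\alpha)$ with probability $1-o(1)$ \emph{even after conditioning on the good initial partition $\Epi{i}$ and on $\istar=i$}, and then a short averaging argument over the possible realizations of $\Dti$ that extracts one particular realization with small conditional error probability. This is the conditional analogue of Lemma~\ref{lem:vc-lb-degree-one} combined with a standard first-moment-to-fixed-value step.

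For the first stage I would fix the good initial partition $\Epi{i}$ together with the (still random) set $A$, and invoke the $\eps$-balanced property exactly as it was designed for: for every $v\in A$, the number of edges incident to $v$ in $G(L,R,\Epi{i})$ lies in $(1\pm\eps)\cdot n/k$, and each such potential edge lands in the input graph $\Gi{i}$ of $\Ps{i}$ independently with probability $k/2n$ (precisely when it is chosen into $\EA$). Hence $\deg{\Gi{i}}(v)$ is distributed as $\textnormal{Binomial}\paren{(1\pm\eps)n/k,\ k/2n}$, whose mean lies in the fixed interval $[(1-\eps)/2,(1+\eps)/2]\subset(0,1)$, so $\Pr\paren{\deg{\Gi{i}}(v)\le 1}=\Theta(1)$. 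Since the graph is bipartite with $A\subseteq L$, the edge sets incident to distinct vertices of $A$ are disjoint, so the events $\set{\deg{\Gi{i}}(v)\le 1}$ for $v\in A$ are mutually independent given $\Epi{i}$ and $A$; moreover neither the presence of $\estar$ (which is incident to $\vstar\in\bA$, hence to no vertex of $A$) nor the conditioning $\istar=i$ alters this distribution. A Chernoff bound (Proposition~\ref{prop:chernoff}), using $\card{A}=n/\alpha=\omega(\log n)$, then yields an absolute constant $c'>0$ with $\card{\Dmi}\ge c'\cdot n/\alpha$ with probability $1-o(1)$; as this holds for every $A$, it holds after averaging over $A$ as well. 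Write \event for the event $\card{A\setminus\Dti}\ge c'\cdot n/\alpha$, so $\Pr\paren{\event\mid\Epi{i},\istar=i}\ge 1-o(1)$.

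For the second stage, note that $\Dti$ is a deterministic function of the input of $\Ps{i}$, so for each realizable value $D$ of $\Dti$ I would set $p(D):=\Pr\paren{\ProtVC~\errs\mid\Epi{i},\Dti=D,\istar=i}$; by the law of total probability, the goodness of $\Epi{i}$, and $p(\cdot)\ge 0$,
\[
\delta_i\ \ge\ \Pr\paren{\ProtVC~\errs\mid\Epi{i},\istar=i}\ =\ \sum_{D}\Pr\paren{\Dti=D\mid\Epi{i},\istar=i}\,p(D)\ \ge\ \Pr\paren{\event\mid\Epi{i},\istar=i}\cdot\!\!\!\min_{D\,:\,\card{A\setminus D}\,\ge\, c'\cdot n/\alpha}\!\!\! p(D).
\]
Since $\Pr\paren{\event\mid\Epi{i},\istar=i}\ge 1-o(1)$, the minimum above is at most $\delta_i/(1-o(1))=\delta_i+o(1)$, so there is a set $D$ with $\card{A\setminus D}\ge c'\cdot n/\alpha=\Omega(n/\alpha)$ and $p(D)\le\delta_i+o(1)$; taking $\Dti:=D$ proves the claim. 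The only genuine obstacle is care with the conditioning in the first stage: one must verify that conditioning on the good partition $\Epi{i}$ and on $\istar=i$ leaves the distribution of the $A$-degrees in $\Gi{i}$ intact, which is exactly what the $\eps$-balanced assumption (it pins down the $\Epi{i}$-degrees of all $L$-vertices) and the placement of $\estar$ off $\bA$ provide; the remaining first-moment-plus-Chernoff estimate and the one-line averaging argument are routine.
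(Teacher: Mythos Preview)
Your proposal is correct and follows essentially the same approach as the paper: show via Chernoff (using the $\eps$-balanced property and independence of the $A$-degrees) that $\card{\Dmi}=\Omega(n/\alpha)$ with probability $1-o(1)$ under the relevant conditioning, then average over realizations of $\Dti$ to extract a fixed value with conditional error at most $\delta_i+o(1)$. The paper's proof is terser---it compresses your second stage into ``the claim now follows immediately from this''---but the logic is identical; your only addition is making explicit that the event $\card{A\setminus\Dti}\ge c'\cdot n/\alpha$ is $\Dti$-measurable (since $\card{A}=n/\alpha$ is fixed, so $\card{A\setminus\Dti}=n/\alpha-\card{\Dti}$), which is what makes the averaging go through.
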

\begin{proof}
	Each vertex in $v \in A$ has degree more than $1$ in $\Gii$ independently and with some constant probability $p$ bounded away from $1$ (see the exact calculation in Claim~\ref{clm:obs-prob-twice}).
	Hence, in expectation $p \cdot n/\alpha$ vertices in $A$ have degree more than $1$. As $n/\alpha = \Omega(\log{n})$, by Chernoff bound plus a union bound, w.p. $1-o(1)$ at most $p \cdot n/\alpha + o(n/\alpha)$
	vertices in $A$ have degree more than $1$ in $\Gii$. The claim now follows immediately from this. 
\end{proof}

In the following, we further condition on a set $\Dti$ as in Claim~\ref{clm:exists-dti}. This implies that $A \setminus \Dti$, i.e., the set $\Dmi$ is a set of size $\Omega(n/\alpha)$ chosen uniformly at random 
from $L \setminus \Dti$. We are now ready to define the hidden vertex problem  in the one-way two-player communication model. 

\begin{problem}[The Hidden Vertex Problem (\VertexSeeking)]
	There are two disjoint sets $U$ and $V$ and a mapping $\sigma : U \rightarrow V$ known to both Alice and Bob. Bob is given a set $T \subseteq U$.
	Alice is given a set $S \subseteq T$ and a single vertex $\ustar$ chosen from $U \setminus T$ (identity of $\ustar$ is unknown to players). 
	Alice sends a single message to Bob and Bob needs to output two sets $X \subseteq U$ and $Y \subseteq V$ such that either $\ustar \in X$ or $\sigma(\ustar) \in Y$. The goal of the players 
	is to \emph{minimize the size} of $X \cup Y$. 
\end{problem}

Consider the following distribution $\distVS$ for \VertexSeeking: we sample an instance of vertex cover from $\distVC \mid \Epi{i},\Dti,\istar=i$, where $\Epi{i}$ is a good initial partition and $\Dti$ is a 
fixed set defined in Claim~\ref{clm:exists-dti}. We set $U = L \setminus \Dti$, $V = R$, and choose $\sigma : U \rightarrow V$ by mapping each $u \in U$ to one of the neighbors of $u$ (in $L$) in $\Epi{i}$ uniformly at random. 
Next, we set $T = \Dmi$ and $S = \Doi \cup \set{\vstar}$; this way the vertex $\ustar$ in \VertexSeeking is the special vertex $\vstar$ in distribution $\distVC$. We make the following simple observation about distribution 
$\distVS$. 

\begin{claim}\label{clm:obs-prob-twice}
	Each vertex in $T$ independently belongs to $S$ w.p. $(1 \pm O(\eps)) \cdot 1/3$. 
\end{claim}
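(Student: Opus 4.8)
The plan is to reduce the claim to an elementary binomial computation followed by a short independence argument. First I would unpack the conditioning built into $\distVS$. Because $\vstar \notin A \supseteq \Dmi = T$, a vertex $v \in T$ lies in $S = \Doi \cup \set{\vstar}$ exactly when $v \in \Doi$, i.e.\ exactly when $v$ has degree one in $\Gii$; and $v \in T$ is, by the definition $\Dmi = A \setminus \Dti$, precisely the event that $v$ has degree at most one in $\Gii$. So the quantity to pin down is $\Pr\paren{\deg{\Gii}(v) = 1 \mid \deg{\Gii}(v) \leq 1}$, the probability being over the residual randomness of $\distVC$ after fixing the good initial partition $\Epi{i}$, the set $\Dti$, and $\istar = i$.

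Next I would identify the degree distribution of a fixed $v \in A$ in $\Gii$. Conditioned on $\Epi{i}$, the only candidate neighbors of $v$ in $\Gii$ are its neighbors in the graph $G(L,R,\Epi{i})$; write $m_v$ for their number and recall $m_v \in (1 \pm \eps)\cdot n/k$ by the $\eps$-balanced property we have already assumed. Each such candidate edge is chosen into $\EA$ --- and hence, being already in $\Epi{i}$, into $\Gii$ --- independently with probability $p := k/2n$, while the extra edge $\estar$ is incident on $\vstar \notin A$ and is therefore irrelevant to degrees inside $A$. Thus $\deg{\Gii}(v) \sim \mathrm{Bin}(m_v, p)$, so $\Pr\paren{\deg{\Gii}(v) = 0} = (1-p)^{m_v}$, $\Pr\paren{\deg{\Gii}(v) = 1} = m_v\, p\,(1-p)^{m_v-1}$, and
\[
\Pr\paren{\deg{\Gii}(v) = 1 \mid \deg{\Gii}(v) \leq 1} \;=\; \frac{m_v\, p}{(1-p) + m_v\, p}.
\]
Writing $\lambda := m_v\, p = (1 \pm \eps)/2$ and using $p = k/2n = o(1)$ in the parameter range we work with, the right-hand side equals $\frac{\lambda}{1+\lambda}\cdot(1 \pm o(1))$; since $\lambda \mapsto \lambda/(1+\lambda)$ has bounded derivative near $\lambda = 1/2$, takes value $1/3$ there, and $\card{\lambda - 1/2} \leq \eps/2$, this is $(1 \pm O(\eps))\cdot 1/3$, the claimed bound. (Equivalently $(1-p)^{m_v} = (1 + o(1))\,e^{-m_v p}$ as $m_v p^2 = o(1)$, so one may just use the Poisson weights $e^{-\lambda}$ and $\lambda e^{-\lambda}$ at $\lambda \approx 1/2$.)

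Finally I would nail down the independence. The degree of $v$ in $\Gii$ depends only on which of the $m_v$ edges incident on $v$ in $G(L,R,\Epi{i})$ fall into $\EA$, and for distinct $v, v' \in A \subseteq L$ these edge sets are disjoint, so $\set{\deg{\Gii}(v)}_{v \in A}$ are mutually independent. Conditioning on the realized set $\Dti$ imposes, for each $v \in A$, the event $\set{\deg{\Gii}(v) \geq 2}$ (if $v \in \Dti$) or $\set{\deg{\Gii}(v) \leq 1}$ (if $v \in T$), and conditioning on a given $v \in T$ is itself such an event; all these events live on disjoint blocks of coordinates, so they factor, and the events $\set{v \in S}_{v \in T}$ stay mutually independent, each with the probability computed above. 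The step I would take care to state explicitly --- and which I expect to be the only place subtlety can creep in --- is exactly this decoupling: fixing which vertices of $A$ are ``heavy'' (that is, fixing $\Dti$) does not correlate the residual degrees of the remaining light vertices. It is immediate from the disjointness of the underlying edge sets, but it is the whole reason the word ``independently'' is justified in the claim.
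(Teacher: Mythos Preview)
Your proof is correct and follows essentially the same approach as the paper's: both reduce the claim to computing $\Pr\paren{d(v)=1 \mid d(v)\le 1}$ for a vertex $v\in A$ whose degree in $\Gii$ is binomial with parameters $m_v\in(1\pm\eps)\cdot n/k$ and $p=k/2n$, and both arrive at $(1\pm O(\eps))\cdot 1/3$ via the same arithmetic (the paper uses the Poisson-style approximation $(1-p)^{m_v}\approx e^{-1/2}$ to get $\Pr(d(v)=0)\approx 2\cdot\Pr(d(v)=1)$, while you compute the ratio $\frac{m_vp}{(1-p)+m_vp}$ directly and linearize around $\lambda=1/2$). Your treatment of the independence step --- that the events $\set{d(v)\le 1}_{v\in A}$ live on disjoint edge sets so conditioning on $\Dti$ factors --- is more explicit than the paper's, which simply asserts the conclusion after the ratio computation; this is a welcome addition rather than a different route.
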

\begin{proof}
	Note that $T = \Dmi$, i.e., the vertices that have degree $0$ or $1$ in $\Gii$, and vertices in $S \cap T$ are vertices that have degree exactly $1$ in $\Gii$. 
	Fix a vertex $v \in A$ and consider only conditioning on $\Epi{i}$. We know that since $\Epi{i}$ is a good initial partition, $v$ is incident on $(1 \pm \eps) \cdot n/k$ edges in $\Epi{i}$, and each of these edges appear
	independently in $\Gii$ w.p. $k/2n$ (by definition of $\distVC$). Let $d(v)$ denote the degree of $v$ in $\Gii$.  
	\begin{align*}
		\Pr\paren{d(v) = 0 \mid \Epi{i}} &= (1-\frac{k}{2n})^{\paren{1 \pm \eps} \cdot \frac{n}{k}} = e^{-\frac{1}{2}} \cdot (1\pm O(\eps)) \\
		\Pr\paren{d(v) = 1 \mid \Epi{i}} &= \paren{1 \pm \eps} \frac{n}{k} \cdot \paren{\frac{k}{2n}} \cdot (1-\frac{k}{2n})^{\paren{1 \pm \eps} \cdot \frac{n}{k}-1} \\
		&= \frac{1}{2} \cdot e^{-\frac{1}{2}} \cdot (1\pm O(\eps))
	\end{align*} 
	We can now conclude that $\Pr\paren{d(v) = 0 \mid \Epi{i}, \Dti} = (1 \pm O(\eps)) \cdot  2\cdot \paren{d(v) = 1 \mid \Epi{i}, \Dti}$, as conditioning on $\Dti$ imply that for each vertex $v \in T$, 
	$d(v) \in \set{0,1}$. The assertion of the claim now immediately follows. 
\end{proof}

We establish the following lower bound on the communication complexity of $\VertexSeeking$ on $\distVS$. 

\begin{lemma}[Communication Complexity of \VertexSeeking]\label{lem:cc-vertex-seeking}
	There exists a \emph{universal constant} $\cVS > 0$ such that any protocol for \VertexSeeking on the distribution $\distVS$ that computes an answer with $\card{X \cup Y} \leq \paren{\cVS \cdot n}$ w.p. at least $2/3$ needs 
	$\Omega(n/\alpha)$ communication. 
\end{lemma}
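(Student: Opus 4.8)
The plan is to prove this lower bound by a reduction from set disjointness, following the roadmap sketched in Section~\ref{sec:techniques}. Suppose for contradiction that there is a protocol $\pi$ for \VertexSeeking on $\distVS$ with communication $o(n/\alpha)$ that outputs $X \subseteq U$, $Y \subseteq V$ with $\card{X \cup Y} \le \cVS \cdot n$ and $\ustar \in X$ or $\sigma(\ustar) \in Y$ w.p.\ at least $2/3$, where $\cVS$ is a universal constant to be pinned down at the end. Using $\pi$, public randomness, and the H{\aa}stad--Wigderson protocol for lopsided disjointness~\cite{HastadW07}, I will construct a protocol that decides $\mathrm{DISJ}$ on a universe $[N]$ with $N = \Theta(n/\alpha)$ using $o(N)$ total communication, contradicting the classical $\Omega(N)$ randomized communication lower bound for disjointness; Yao's minimax principle then yields the distributional lower bound claimed in the lemma.

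The first step is the embedding. Given a disjointness instance $(A,B)$ on $[N]$ under the standard promise that $A \cap B$ is either empty or a singleton $\{j^\star\}$ (drawn from the hard distribution for disjointness), Alice and Bob use public coins to sample the remaining ingredients of a $\distVS$ instance: the mapping $\sigma$, a uniformly random embedding $\phi$ of $[N]$ into $U$, and the ``padding'' needed to fill out $S$, $T$, and $U \setminus T$ with the correct marginals. The embedding is arranged so that $\phi(j) \in S$ exactly when $j \in A$, and $\phi(j) \notin T$ exactly when $j \in B$. Thus, when $A \cap B = \{j^\star\}$ the resulting instance is a \emph{legal} \VertexSeeking instance with hidden vertex $\ustar = \phi(j^\star)$, whereas when $A \cap B = \emptyset$ the players get a \emph{non-legal} instance (no hidden vertex) on which they nonetheless run $\pi$. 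The padding, the choice of which coordinates land in $S \cap T$ versus in the ``free'' region $U \setminus T$, and the amount of public randomness injected are all chosen so that, conditioned on the public coins, the induced instance is distributed exactly as $\distVS \mid \Epi{i},\Dti,\istar=i$; in particular the $(1\pm O(\eps))\cdot 1/3$ density relating $S$ and $T$ from Claim~\ref{clm:obs-prob-twice} and the $\eps$-balancedness of the initial partition must be reproduced, and the intersection coordinate must end up distributed as $\ustar$.

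The second step uses the output of $\pi$. Run $\pi$ to obtain $(X,Y)$; on legal instances, w.p.\ at least $2/3$ we have $\ustar \in X$ or $\sigma(\ustar) \in Y$, so $j^\star \in Z := \phi^{-1}(X) \cup \phi^{-1}(\sigma^{-1}(Y))$. Using the randomness of $\phi$ together with $\eps$-balancedness (which bounds the in-degrees $(1\pm\eps)\cdot n/k$ under $\sigma$ and hence controls $\sigma^{-1}(Y)$), a hypergeometric tail bound shows $\card{Z}$ concentrates to $o(N)$ provided $\cVS$ is a small enough constant. Consequently $A \cap B \ne \emptyset \iff A \cap (B \cap Z) \ne \emptyset$ holds with probability at least $2/3$ (it is automatic in the disjoint case). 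Bob, who knows $B$ and $Z$, has therefore reduced the problem to the lopsided disjointness instance $(A,\, B \cap Z)$ with $\card{B \cap Z} \le \card{Z} = o(N)$, which the players finish by running the H{\aa}stad--Wigderson protocol at a cost of $o(N)$ communication. Summing the (free) public-coin setup, the $o(n/\alpha) = o(N)$ cost of $\pi$, and the $o(N)$ cost of H{\aa}stad--Wigderson gives $o(N)$ total, contradicting $\Omega(N)$ after standard constant-probability amplification; re-deriving the constant $\cVS$ from the constants in the hypergeometric bound and the H{\aa}stad--Wigderson guarantee completes the proof.

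The main obstacle I expect is making the embedding in the first step genuinely faithful. Because $\distVS$ is far from a product distribution ($S \subseteq T \cup \{\ustar\}$, the hidden vertex $\ustar$ uniform on $U \setminus T$, and $S \cap T$ a random $\approx 1/3$-subset of $T$), the reduction must route the disjointness coordinates through $S$, $T$, and the free region in exactly the right proportions, and inject precisely the right amount of auxiliary randomness, so that the induced law matches $\distVS \mid \Epi{i},\Dti,\istar=i$ coordinate-for-coordinate and so that the unique intersection coordinate behaves as $\ustar$. The secondary technical difficulty is bounding $\card{Z}$: since $\sigma$ is not injective, $\card{\sigma^{-1}(Y)}$ could a priori be as large as $\card{U}$, so one has to combine the bounded degrees guaranteed by $\eps$-balancedness with the randomness of $\phi$ to argue that only an $O(\cVS)$-fraction of the $N$ disjointness coordinates can fall into $Z$ --- this is precisely the step that fixes the admissible value of $\cVS$.
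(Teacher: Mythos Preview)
Your overall strategy matches the paper's proof exactly: embed a disjointness instance on $N=\Theta(n/\alpha)$ elements into \VertexSeeking, run the assumed protocol to obtain $(X,Y)$, use this to shrink Bob's set, and finish with the H{\aa}stad--Wigderson lopsided protocol. There is, however, a real gap in your second step. You assert that $\card{Z}=o(N)$ where $Z=\phi^{-1}(X)\cup\phi^{-1}(\sigma^{-1}(Y))$, arguing via the randomness of $\phi$. This is false as stated: for every $j\notin B$ your embedding places $\phi(j)\in T$, and since $\card{T}=\Theta(n/\alpha)\ll \cVS\cdot n$, the protocol may legally output $X\supseteq T$, which forces $[N]\setminus B\subseteq Z$ and hence $\card{Z}\ge \Theta(N)$ regardless of $\cVS$. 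The randomness of $\phi$ cannot help here because $T$ (and therefore $(X,Y)$) is a deterministic function of $\phi$ and $B$. What you actually need---and what the paper proves as Lemma~\ref{lem:intersection-size}---is a bound on $\card{B\cap Z}$ (the paper's $B'$): conditioned on the \VertexSeeking inputs $(S,T,\sigma)$, the images of the elements of $B\setminus\{j^\star\}$ are uniformly random in $U\setminus(S\cup T)$ and hence independent of $(X,Y)$, giving $\card{B\cap Z}\le O(\cVS)\cdot\card{B}$ w.h.p. Note this is $O(\cVS\cdot N)$, not $o(N)$; the contradiction comes from choosing $\cVS$ small relative to the constant in the disjointness lower bound.

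Two smaller points. First, the ``padding'' you anticipate is not needed: the paper's embedding is simply $S=A$ and $T=Z\setminus B$ (for a random $N$-subset $Z\subseteq U$), and the marginals already match because in the hard disjointness distribution $\Pr[j\in A\mid j\notin B]=1/3$, which by Claim~\ref{clm:obs-prob-twice} is exactly $\Pr[v\in S\mid v\in T]$ in $\distVS$ (up to the $(1\pm O(\eps))$ slack, which the paper absorbs by noting the disjointness lower bound is robust to such perturbations). Second, the control on $\sigma^{-1}(Y)$ need not invoke $\eps$-balancedness: since $\sigma$ is itself sampled from $\distVS$ using public coins \emph{after} the disjointness instance is fixed, the randomness of $\sigma$ directly gives that each element of $B\setminus\{j^\star\}$ lands in $\sigma^{-1}(Y)$ with probability $O(\card{Y}/n)$.
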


\newcommand{\criticalC}{\ensuremath{\mathcal{C}}}
We prove Lemma~\ref{lem:cc-vertex-seeking} in Section~\ref{sec:cc-vertex-seeking}. Before that, we establish Theorem~\ref{thm:vc-lb} using this lemma. 

\begin{proof}[Proof of Theorem~\ref{thm:vc-lb}]
	Recall that protocol $\ProtVC$ is a $\delta$-error $\paren{c \cdot \alpha}$-approximation protocol for vertex cover on $\distVC$. Note that w.p. $1-o(1)$, $\Epi{1},\ldots,\Epi{k}$ is an $\eps$-balanced 
	initial partitioning. Let $\event$ denote this event. Conditioned on $\event$, each edge in $G$ and in particular the edge $\estar$ belong to each player $i \in [k]$ w.p. $(1 \pm \eps) 1/k$. Hence, each player is the
	critical player w.p. $(1 \pm \eps) 1/k$. Let $\criticalC$ be the set of players such that $\delta_i \leq 2\delta$. We have $\card{\criticalC} \geq k/3$ as otherwise, 
	\begin{align*}
		\Pr\paren{\ProtVC~\errs \mid \event} &\geq \Pr\paren{\ProtVC~\errs \mid \event , \istar \notin \criticalC}
		 \cdot \Pr\paren{\istar \notin \criticalC} \\
		 &> 2\delta \cdot \card{\overline{\criticalC}} \cdot (1-\eps) \cdot 1/k > 4/3 \delta (1-\eps) > \delta
	\end{align*}
	for small enough $\eps > 0$, which contradicts the fact that $\Pr\paren{\ProtVC~\errs \mid \event} \leq \delta + o(1)$. 
		
	Fix an $i \in \criticalC$ and let $\Prot_i$ be the message sent by $\Ps{i}$ in protocol $\ProtVC$. 
	We use $\Prot_i$ to design a protocol $\Prot'$ for $\VertexSeeking$ on $\distVS$ (recall that $\distVS$ is a function of $\ProtVC$ and also index $i$). 
	Given an instance of $\VertexSeeking$ from $\distVS$, Alice and Bob create an instance of vertex cover sampled from $\distVC \mid \Epi{i},\Dti,\istar=i$ as follows: 
	\begin{enumerate}
		\item Alice plays the role of $\Ps{i}$ and Bob plays the role of all other players plus the coordinator. 
		\item Alice constructs the input of $\PS{i}$ (i.e., the graph $\Gii$) as follows: $(i)$ for each $u \in \Dti$, Alice samples the neighbors of $u$ from $\Epi{i}$ according to distribution $\distVC$, and $(ii)$ for 
		each $u \in \Doi \cup \set{\ustar}$, she adds the edge $(u,\sigma(u))$ to $\Gii$. 
		\item Bob constructs the inputs of all other players by letting the set $A = \Dti \cup T$ and sampling their inputs according to distribution $\distVC$. This is indeed possible since the input to players in $\distVC$ are 
		\emph{independent} conditioned on $A,\Epi{1},\ldots,\Epi{k},\istar$.   
	\end{enumerate}
	Next, Alice sends the message of $\Ps{i}$ to Bob and Bob simulates the messages of all other players (without any communication) and outputs the vertex cover computed by $\ProtVC$ 
	as the answer to the \VertexSeeking instance. 
	Using the definition of the distribution $\distVS$ one can verify
	 that the distribution of the instances sampled in this reduction matches distribution $\distVC \mid \Epi{i},\Dti,\istar=i$. Hence, since the minimum vertex cover size in $G$ is 
	at most $n/\alpha + 1$ (by picking $A \cup \set{\ustar}$), the output of $\Prot$ (i.e., the sets $X \cup Y$) is of size at most $c \cdot n$ w.p. $1-2\delta$ (as $i \in \criticalC$). 
	Moreover. since the edge $\estar$ in the vertex cover instance 
	corresponds to the pair $(\ustar,\sigma(\ustar))$ in the $\VertexSeeking$ instance, the returned solution is feasible. As $\delta \leq 0.1$, by picking the constant $c$ (in the $(c \cdot \alpha)$-approximation factor) 
	to be smaller than $\cVS$ (in Lemma~\ref{lem:cc-vertex-seeking}), we obtain that the size of ${\Prot_i}$ must be $\Omega(n/\alpha)$. Finally, since $\card{\criticalC} \geq k/3$ (i.e., there are at least $k/3$ choices for 
	player $\Ps{i}$), we obtain that the communication cost of $\ProtVC$ is $\Omega(nk/\alpha)$, proving the theorem. 
\end{proof}

We finish this section by noting that the bound stated in Theorem~\ref{thm:vc-lb} is in fact tight (up to an $O(\log{n})$ factor) for any approximation ratio $\alpha$. 

\begin{remark}\label{rem:vc-tight}
	The protocol in which the players group the vertices in the original graph into groups of size $\Theta(\alpha/\log{n})$ (deterministically but consistently across players) and then run
	 the algorithm in Theorem~\ref{thm:vc} on the resulting graph is an $\alpha$-approximation protocol with  $\Ot(nk/\alpha)$ communication for the minimum vertex cover problem. 
\end{remark}

Note that in Remark~\ref{rem:vc-tight}, we used the fact that Theorem~\ref{thm:vc} works even when the input graph has parallel edges, i.e., is a \emph{multi-graph}.

\subsubsection{Communication Complexity of \VertexSeeking}\label{sec:cc-vertex-seeking}
In this section, we prove Lemma~\ref{lem:cc-vertex-seeking} by a reduction from the well-known set disjointness in the two-player communication model. In this problem, 
Alice is given a set $A \subseteq [N]$ and Bob is given a set $B \subseteq [N]$ with the promise that $\card{A \cap B} \in \set{0,1}$ and their goal is to distinguish between these
two cases \emph{via two-way communication}. Let $\distDisj$ be the following distribution: start with $A = B = [N]$ and for each element $e \in A$, w.p. $1/2$, drop $e$ from both $A$ and $B$, w.p. $1/4$ drop 
$e$ from $A$, and with the remaining $1/4$ probability, drop $e$ from $B$. Next, pick an element $\estar \in [N]$ uniformly at random and w.p. $1/2$ add $\estar$ to both $A$ and $B$.  
It is known that solving disjointness under $\distDisj$ requires $\Omega(N)$ communication (see, e.g.,~\cite{Bar-YossefJKS02-S,Razborov92}). It also immediately follows from~\cite{Bar-YossefJKS02-S}
that if instead of dropping each element w.p. exactly $1/4$, we drop them w.p. $(1 \pm \eps) \cdot 1/4$ (for sufficiently small constant $\eps > 0$), the distribution still remains hard. 

Now let $\ProtVS$ be a $\delta$-error protocol for $\VertexSeeking$ on distribution $\distVS$. We use $\ProtVS$ to create a protocol $\Prot'$ for disjointness on distribution $\distDisj$. Note that while $\ProtVS$ is a 
one-way protocol, the protocol $\Prot'$ is allowed to use two-way communication. Given a pair of sets $(A,B)$ in $\distDisj$, we create an instance of $\VertexSeeking$ as follows: 
\begin{enumerate}
	\item Bob first communicates the \emph{size} of $B$ to Alice. 
	\item The players choose a set $Z$ of size $N = \card{T} + \card{B}$ vertices from $U$ uniformly at random and consider a fixed mapping between $[N]$ and $Z$; 
	note that $\card{T}$ is fixed in distribution $\distVS$ and $\card{B}$ is known at this point by both players.   
	\item Alice lets $S = A$ and Bob lets $T = Z \setminus B$ and they pick $\sigma$ uniformly at random from $\distVS$. 
	\item The players run $\ProtVS$; Bob computes the sets $X$ and $Y$ and let $B' = \paren{X \cup \sigma^{-1}(Y)} \cap B$. 
	\item If $\card{B'} > 3\cVS \cdot N$, Bob terminates the protocol. Otherwise the players run a lopsided set disjointness protocol (see, e.g.,~\cite{Patrascu11,DasguptaKS12}) for solving
	the disjointness instance $(A,B')$ (with error guarantee, say, $1/10$) and output the same answer as this protocol. 
\end{enumerate}

Whenever $(A,B)$ is a no instance of $\distDisj$, i.e., $\card{A \cap B} = 1$, the distribution of the instances constructed by $\Prot'$ is $\distVS$ (with $\ustar = A \cap B$). 
To see this, notice that for a fixed set $T$ in $\distVS$, each element in $T$ is in $S$ w.p. $1/3 \cdot (1 \pm O(\eps))$ and is outside $S$ with remaining probability (by Claim~\ref{clm:obs-prob-twice}).
 This is exactly the distribution of the set $[N] \setminus B$ in $\distDisj$ conditioned on $B$. The rest follows since we are choosing the set $T$ (by the random choice of $Z$) and $\sigma$ according to distribution 
 $\distVS$. 

On the other hand, when $\card{A \cap B} = 0$, the distribution of instances do \emph{not} correspond to $\distVS$. In fact, this is not 
even a valid instance of $\VertexSeeking$ as there is no element $\ustar$ in this instance.  This means that in this case, $\ProtVS$ may terminate, output a non-valid answer, or still
output two sets $X \subseteq U$ and $Y \subseteq V$ with $\card{X \cup Y} \leq \cVS \cdot n$. 
Unless the later happens, Bob is always able to distinguish this case as a Yes case of disjointness and solve the problem correctly (w.p. $1-\delta$). Hence, in the following, we assume
the worst case that $\ProtVS$ outputs two sets $X$ and $Y$ even if the instance created is not a legal input of $\VertexCollection$. We can now argue the following key lemma. 

\begin{lemma}\label{lem:intersection-size}
	In any instance $(S,T)$ of $\VertexCollection$ created by $\Prot'$, $\card{B'} \leq 3\cVS \cdot \card{B}$ w.h.p. 
\end{lemma}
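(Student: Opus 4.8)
The plan is to split $\card{B'}\le\card{X\cap B}+\card{\sigma^{-1}(Y)\cap B}$ and bound the two terms separately, exploiting that — even though Bob knows $B$ when he forms $B'$ — the protocol $\ProtVS$ that produces $(X,Y)$ never sees $B$ directly. It is fed only $S=A$, $T=Z\setminus B$, the public map $\sigma$, and the public coins, while the embedding of $[N]$ into $U$ (equivalently $Z$) is drawn with \emph{fresh} public randomness; so the vertices of $B$ form a uniformly random size-$\card{B}$ subset of $U$ that the protocol cannot tell apart from the roughly $n$ other ``decoy'' vertices, and a bounded-size output can capture only a proportional share of it. Throughout I would assume, w.l.o.g., that $\ProtVS$ always outputs sets with $\card{X}+\card{Y}\le\cVS\cdot n$ (otherwise truncate; this only affects correctness, which is all Lemma~\ref{lem:cc-vertex-seeking} demands of $\ProtVS$).

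First I would condition on $A$, $T$, $\sigma$, and the coins of $\ProtVS$, which fixes $X$ and $Y$. Under this conditioning, the construction of $\Prot'$ and the definition of $\distDisj$ show that the image of $B$ in $U$ behaves (up to lower-order perturbations) like a random subset in which each element of a fixed set $W\subseteq U$ with $\card{W}=(1-o(1))\,n$ is included independently with probability $\Theta(\card{B}/n)$: the $\distDisj$ process places each element of $[N]\setminus A$ into $B$ essentially independently with probability $\tfrac13(1\pm O(\eps))$ (matching the law of $T=\Dmi$ versus $S=\Doi\cup\{\vstar\}$, cf. Claim~\ref{clm:obs-prob-twice}), deleting the $O(n/\alpha)$ vertices of $T$ and of the support of $A$ shrinks the ambient universe only by a $1-o(1)$ factor, and the random relabeling $[N]\leftrightarrow Z$ randomizes which vertices of $U$ these are. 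In particular $B$ is (essentially) independent of the now-fixed $X,Y$, and $\card{B}=\Theta(N)=\Theta(\vc(G))=\omega(k\log n)$ by the standing assumption, which gives all concentration bounds below enough room.

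For the first term, $\card{X\cap B}=\card{X\cap W\cap B}$ is a sum of (essentially) independent indicators with mean $\le(1+o(1))\,\cVS\,\card{B}$ since $\card{X}\le\cVS n$ and $\card{W}=(1-o(1))n$, so Proposition~\ref{prop:chernoff} gives $\card{X\cap B}\le1.2\,\cVS\,\card{B}$ w.h.p. The second term is handled by the same argument with $\sigma^{-1}(Y)$ in place of $X$, once we show $\card{\sigma^{-1}(Y)}=O(\cVS\,n)$. For that I would use $\eps$-balance of the initial partition $\Epi{i}$ on both sides: every vertex of $U$ is sent by $\sigma$ to a uniform neighbour among its $(1\pm\eps)n/k$ neighbours in $\Epi{i}$, and every vertex of $V=R$ has $(1\pm\eps)n/k$ candidate pre-images, so in expectation over $\sigma$ an image vertex carries only $O(1)$ pre-images; summing over the $\le\cVS n$ vertices of $Y$ and concentrating gives $\card{\sigma^{-1}(Y)}=O(\cVS n)$. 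Combining, $\Ex\bracket{\card{\sigma^{-1}(Y)\cap B}}=O(\cVS)\cdot\card{B}$, concentration finishes it, and adding the two bounds yields $\card{B'}\le 3\cVS\card{B}$ w.h.p.

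The delicate step — and the main obstacle — is the bound $\card{\sigma^{-1}(Y)}=O(\cVS n)$ and the decoupling it rests on: $Y$ is produced by a protocol that has inspected \emph{all} of $\sigma$, including its values on the (hidden but present) set $B$, so one cannot simply fix $Y$ before $\sigma$, and a union bound over all feasible $Y$ of size $\le\cVS n$ does not close because there are exponentially many of them. I would instead argue that $Y$ depends only negligibly on the $O(n/\alpha)$ coordinates $\{\sigma(u):u\in B\}$ — precisely because $B$ is indistinguishable from the surrounding decoys given everything the protocol sees — so that for a typical $u\in B$ we still have $\Pr\bracket{\sigma(u)\in Y}=O(\cVS)$; this is the step that ultimately dictates how small the universal constant $\cVS$ of Lemma~\ref{lem:cc-vertex-seeking} must be. Tracking which part of the randomness of $B$ survives all the conditioning — and checking that the $A$–$B$ correlation in $\distDisj$ and the $T$–$B$ correlation in the reduction cost only $1-o(1)$ factors — is routine bookkeeping of the same flavour as the conditioning used for $\distMD$ in Section~\ref{sec:matching-detection}.
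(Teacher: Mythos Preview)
Your approach is essentially the paper's: fix $(S,T,\sigma)$ so that $(X,Y)$ is determined, observe that the image of $B^{-}:=B\setminus\{\ustar\}$ in $U$ is then a uniformly random size-$\card{B^{-}}$ subset of $U\setminus T$ (independent of $\sigma$ and hence of $(X,Y)$), and compute $\Exp\card{B^{-}\cap(X\cup\sigma^{-1}(Y))}$ by linearity. The paper handles the two terms in one line and closes with Chernoff, exactly as you outline.

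You are right that the step you flag as delicate --- controlling $\card{\sigma^{-1}(Y)}$ when $Y$ is chosen \emph{after} seeing all of $\sigma$ --- is glossed over in the paper's sentence ``as $\sigma$ is a random mapping.'' However, your proposed fix (``$Y$ depends only negligibly on $\{\sigma(u):u\in B\}$ because $B$ is indistinguishable from the decoys'') is aimed in the wrong direction and would be hard to make rigorous. No decoupling of $Y$ from $\sigma$ is needed: the bound is purely combinatorial about $\sigma$ itself, and holds \emph{uniformly} over all admissible $Y$. By $\eps$-balance, each $v\in V$ has $(1\pm\eps)n/k$ candidate preimages in $U$, each of which maps to $v$ with probability at most $(1+\eps)k/n$; so the load $\card{\sigma^{-1}(v)}$ is (approximately) $\mathrm{Poisson}(1)$ for every $v$. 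Standard balls-in-bins concentration then shows that, w.h.p.\ over $\sigma$, the $\cVS n$ heaviest vertices of $V$ carry only $O(\cVS n)$ total preimages --- hence $\card{\sigma^{-1}(Y)}=O(\cVS n)$ for \emph{every} $Y$ with $\card{Y}\le\cVS n$, adversarial or not. Plugging this into the hypergeometric/Chernoff step you already wrote gives $\card{B'}=O(\cVS)\card{B}$ w.h.p. (The specific constant $3$ in the lemma is immaterial; the downstream use in Lemma~\ref{lem:cc-vertex-seeking} only needs $\card{B'}=O(\cVS)\card{B}$.)
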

\begin{proof}
	Consider the set $B^{-} := B \setminus \set{\ustar}$: this set is chosen from $U \setminus S \cup T$ uniformly at random. On the other hand, conditioned on 
	$S,T$ (and $\sigma$), i.e., all the inputs in distribution $\distVS$, the output of $\ProtVS$ are two fixed sets $X$ and $Y$ chosen independent of $B^{-}$. 
	This means that each vertex in $B^{-}$ belongs to $X$ w.p. $\card{X}/\card{U \setminus S \cup T}$. Similarly, each vertex in $\sigma(B^{-})$ also belongs to $Y$ w.p. $\card{Y}/\card{U \setminus S \cup T}$ (as $\sigma$ is a 
	random mapping).  This ensures that $\card{B^{-} \cap \paren{X \cup \sigma^{-1}(Y)}} \leq \paren{\card{X} + \card{Y}}/(n/2) \leq 2\cVS$ in expectation. A simple application of Chernoff bound finalizes the proof. 
\end{proof}

\begin{proof}[Proof of Lemma~\ref{lem:cc-vertex-seeking}]

	We first argue the correctness of the protocol $\Prot'$ and then bound its communication cost. 
	Clearly we have $B' \subseteq B$ and moreover in the no instances of disjointness, 
	the reduction ensures that $A \cap B \subseteq B'$; the reason is that $\ustar = A \cap B$ and since $\ProtVS$ is computing two sets $X$ and $Y$ which contain either $\ustar$ or $\sigma(\ustar)$, we obtain that $\ustar \in B'$.
	Consequently, the probability that $\Prot'$ errs is at most $1/3$ (if $\ProtVS$ errs), plus
	 $o(1)$ (if Bob terminates the protocol (by Lemma~\ref{lem:intersection-size})), plus $1/10$ (by error guarantee of the lopsided disjointness instance). This means that $\Prot'$ is a $\delta'$-error protocol for 
	 disjointness with $\delta' < 1/2$ (bounded away from half). 
	 
	 We now bound the communication cost of $\Prot'$. In the following, let $c$ be a constant such that communication complexity of disjointness on $\distDisj$ is at least $c \cdot N$. 
	Since if the protocol is not terminated, $\card{B'} \leq 3\cVS \cdot \card{B}$, the lopsided disjointness problem $(A,B')$ can be solved with $3\cVS \cdot O(\card{B}) = \cVS \cdot O(N)$ 
	communication (using, e.g., the protocol of H{\aa}stad and Wigderson~\cite{HastadW07}). Now assume by contradiction that cost $\ProtVS $ is $o(n/\alpha) = o(N)$. 
	This means that the total communication cost of $\Prot'$ is $O(\log{N}) + o(N) + \cVS \cdot O(N)$. By taking
	$\cVS \ll c$ but still a constant to suppress the constant in the $O(N)$ term above, the total cost of $\Prot'$ can be made
	smaller than $c \cdot N$. This contradicts the fact that communication complexity of disjointness on $\distDisj$ is at least $c \cdot N$, finalizing the proof. 
\end{proof}

\subsection*{Acknowledgements} 
The first author would like to thank Sepideh Mahabadi and Ali Vakilian for many helpful discussions in the earlier stages of this work. The authors are grateful to the anonymous reviewers of SPAA 2017 for many insightful comments and suggestions.

\bibliographystyle{abbrv}
\bibliography{general}
\clearpage
\appendix
\section{Proof of Lemma~\ref{lem:matching-dist-large}}\label{app:large-induced-matching}

Consider the edges of $\Eab$ assigned to the graph $\Gii$. It is easy to see that by the choice of $\Eab$ and further partitioning the edges between the $k$ players, 
the graph $\Gii$ on the set of edges $\Eab^{(i)}$ forms a random bipartite graph. Hence, proving Lemma~\ref{lem:matching-dist-large} reduces to proving the following property of 
random bipartite graphs.

Let ${\cal G}(n,n,1/n)$ denote the family of random bipartite graphs where each side of the bipartition contains $n$ vertices, and each edge is present w.p. $1/n$. 
We will show that if we sample a random graph $G \in {\cal G}(n,n,1/n)$, then w.p. at least $1 - 1/n^2$, it contains an {\em induced matching} of size $\Omega(n)$. We emphasize here that the notion of induced matching is with respect 
to the entire graph and not only with respect to the vertices included in the induced matching.

Our proof will use the following pair of elementary propositions.

\begin{proposition}
\label{prop:random1}
Suppose we assign $N$ balls uniformly at random to $M > N$ bins. Let $B$ be an arbitrary fixed subset of bins. Then with probability at least $1 - \frac{1}{N^3}$, there are at least $\left( \frac{|B|}{M} \right) \cdot \frac{N}{e} - o(N)$ bins in $B$ that contain exactly one ball, assuming $N$ is sufficiently large. 
\end{proposition}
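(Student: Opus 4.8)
The plan is a first-moment computation followed by a bounded-differences concentration argument. For each bin $b \in B$, let $X_b$ be the indicator that $b$ receives exactly one ball, and set $Y := \sum_{b \in B} X_b$, the quantity of interest. Since each of the $N$ balls lands in $b$ independently with probability $1/M$, we have $\Pr(X_b = 1) = \frac{N}{M}\paren{1 - \frac{1}{M}}^{N-1}$. As $M > N$ we have $N - 1 \le M - 1$, so $\paren{1-\frac{1}{M}}^{N-1} \ge \paren{1-\frac{1}{M}}^{M-1} > 1/e$, using the elementary fact that $M \mapsto (1-1/M)^{M-1}$ decreases to $1/e$. Hence $\Ex[Y] = |B| \cdot \frac{N}{M}\paren{1-\frac{1}{M}}^{N-1} > \paren{\frac{|B|}{M}} \cdot \frac{N}{e}$.

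Next I would apply the method of bounded differences (Proposition~\ref{prop:bounded-differences}) to $Y$, viewed as a function $f(Z_1,\ldots,Z_N)$ of the independent ball positions $Z_1,\ldots,Z_N$, each uniform on the $M$ bins. Changing a single coordinate $Z_i$ moves ball $i$ out of one bin and into another; this can change the ``exactly-one-ball'' indicator of each of those two bins by at most $1$ and leaves all other indicators untouched, so $f$ satisfies the Lipschitz property with constant $d = 2$. Proposition~\ref{prop:bounded-differences} then gives $\Pr\paren{\card{Y - \Ex[Y]} > t} \le 2\exp\paren{-\frac{2t^2}{4N}} = 2\exp\paren{-\frac{t^2}{2N}}$.

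Finally I would choose the deviation parameter $t := \ceil{\sqrt{2N\ln(2N^3)}}$, which is $O(\sqrt{N\log{N}}) = o(N)$; with this choice the right-hand side above is at most $1/N^3$. On the complementary event we get $Y \ge \Ex[Y] - t > \paren{\frac{|B|}{M}} \cdot \frac{N}{e} - o(N)$, exactly the claimed bound. The $o(N)$ error term is uniform in $M$ and $|B|$, so ``$N$ sufficiently large'' is the only requirement.

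I do not expect a serious obstacle here; the two points that need care are (i) pinning down the Lipschitz constant as $2$ rather than $1$, since both the source and destination bins of the moved ball are affected, and (ii) the inequality $(1-1/M)^{N-1} \ge 1/e$, which genuinely uses $N \le M$ and would otherwise fail.
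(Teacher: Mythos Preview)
Your proof is correct and follows essentially the same approach as the paper: compute the expectation of the count of singly-occupied bins in $B$, then apply the method of bounded differences (Proposition~\ref{prop:bounded-differences}) to the function of the independent ball placements with Lipschitz constant $d=2$, and take $t = \Theta(\sqrt{N\log N})$. Your lower bound $(1-1/M)^{N-1} \ge (1-1/M)^{M-1} > 1/e$ is in fact slightly cleaner than the paper's, which only records $(1-1/M)^{N-1} \ge 1/e - o(1)$, but the two arguments are otherwise the same.
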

\begin{proof}
We arbitrarily number the balls $1$ through $N$ and the bins $1$ through $M$. W.l.o.g. assume that the bins in $B$ are numbered $1$ through $|B|$.
For $1 \le i \le |B|$, let $Z_i$ be the $0/1$ random variable that indicates whether or not bin $i \in B$ receives exactly one ball, and furthermore, let $Z = \sum_{i=1}^{|B|} Z_i$. Then

$${\rm Pr}[Z_i = 1] = {N \choose 1}\cdot \left( \frac{1}{M} \right)  \cdot \left(1 - \frac{1}{M}\right)^{N-1} \ge \frac{N}{M}\cdot\left( \frac{1}{e} - o(1) \right).$$

Hence $E[Z] \ge \left( \frac{|B|}{M} \right) \cdot \frac{N}{e} - o(N)$. We now wish to argue that the value of $Z$ is concentrated around its expectation. However, we can not directly invoke the standard Chernoff bound since the variables $Z_i$'s are not independent. We will instead utilize the more general version stated in Proposition~\ref{prop:bounded-differences}.

Let $X_{j} \in [1..M]$ denote the index of the bin in which the $j_{th}$ ball lands. Given the variables $X_1, X_2, ..., X_N$, we can define the function $f(X_1, X_2, ...., X_N)$ to be the number of bins in $B$ that receive exactly one ball. Note that $f$ is completely determined by the variables $X_1, X_2, ..., X_N$ and that $E[f] = E[Z] \ge \left( \frac{|B|}{M} \right) \cdot \frac{N}{e} - o(N)$. It is easy to see that the function $f$ satisfies the Lipschitz property with $d=2$ since changing the assignment of any single ball, can reduce or increase the number of bins in $B$ with exactly one ball by at most $2$. We can thus invoke Proposition~\ref{prop:bounded-differences} with $t = 4\sqrt{N \ln N}$, completing the proof.
\end{proof}

\begin{proposition}
\label{prop:random2}
For sufficiently large $n$, with probability at least $1 - 1/n^3$, a graph $G(L \cup R, E)$ drawn from ${\cal G}(n,n,1/n)$ satisfies the following properties:
\begin{itemize}
\item[(a)] The set $S \subseteq L$ of all vertices in $L$ with degree exactly one in $G$ has size $n/e \pm o(n)$. 
\item[(b)] The set $T \subseteq R$ of vertices defined as all vertices in $R$ with no edges to $L \setminus S$ has size at least $n/e - o(n)$.
\end{itemize}
\end{proposition}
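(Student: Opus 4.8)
The plan is to dispatch~(a) with a one-line Chernoff estimate on a sum of independent indicators, and then to obtain~(b) from~(a) via a purely deterministic inequality that avoids having to condition on the (random) set $S$.

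\emph{Part (a).} For each $v \in L$ the degree $\deg(v)$ is a $\mathrm{Bin}(n,1/n)$ random variable, and since distinct vertices of $L$ have disjoint sets of incident potential edges, the degrees $\set{\deg(v): v\in L}$ are mutually independent. Hence $\Pr\paren{\deg(v)=1} = n\cdot\tfrac1n\cdot\paren{1-\tfrac1n}^{n-1} = \paren{1-\tfrac1n}^{n-1} = \tfrac1e + O(1/n)$, so $\card{S} = \sum_{v\in L}\mathbf{1}\bracket{\deg(v)=1}$ is a sum of $n$ independent $\set{0,1}$-variables with mean $n/e + O(1)$. Applying Proposition~\ref{prop:chernoff} with $t = \Theta(\sqrt{n\log n})$ gives $\card{S} = n/e \pm o(n)$ with probability at least $1 - 1/(2n^3)$, which is exactly~(a).

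\emph{Part (b).} First I would record the deterministic bound
\[
\card{T} \;\ge\; n - \sum_{v\in L\setminus S}\deg(v),
\]
valid for \emph{any} bipartite graph: a vertex $w\in R$ lies outside $T$ only if it is the (unique) $R$-endpoint of some edge going to $L\setminus S$, and the total number of edges incident to $L\setminus S$ is $\sum_{v\in L\setminus S}\deg(v)$. Next, since every vertex of $S$ has degree exactly $1$,
\[
\sum_{v\in L\setminus S}\deg(v) \;=\; \card{E} - \card{S},
\]
and $\card{E}\sim\mathrm{Bin}(n^2,1/n)$ concentrates around its mean $n$ to within $o(n)$ with probability at least $1-1/(2n^3)$ by a standard (multiplicative) Chernoff bound for the binomial. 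On the intersection of this event with the event from~(a),
\[
\sum_{v\in L\setminus S}\deg(v) \;=\; \card{E}-\card{S} \;\le\; \paren{n+o(n)} - \paren{n/e-o(n)} \;=\; \paren{1-\tfrac1e}\,n + o(n),
\]
whence $\card{T} \ge n - \paren{1-\tfrac1e}n - o(n) = n/e - o(n)$, which is~(b). A union bound over the two failure events shows that~(a) and~(b) hold simultaneously with probability at least $1-1/n^3$ for all sufficiently large $n$.

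I do not anticipate a genuine obstacle; the only point requiring a little care is~(b), where $S$ is itself random and one cannot simply condition on it while keeping edges independent. The deterministic inequality above sidesteps this, and it is essentially lossless in this sparse regime: even the crude ``at most one new $R$-vertex per edge'' count, applied to a graph with only $\approx n$ edges of which $\approx n/e$ emanate from $S$, already leaves $\approx n/e$ vertices of $R$ untouched. The rest is bookkeeping the error probabilities so that the union bound beats $1/n^3$.
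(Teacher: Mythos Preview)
Your proof is correct. Part~(a) is essentially identical to the paper's argument: independent degree indicators plus Chernoff with $t=\Theta(\sqrt{n\log n})$.

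For part~(b), however, you take a genuinely different route. The paper ``fixes'' $S$ and then, for each $i\in R$, writes $\Pr[Y_i=1]=(1-1/n)^{|L\setminus S|}\ge 1/e-o(1)$ and applies Chernoff to $\sum_i Y_i$; this treats the edges out of $L\setminus S$ as fresh $\mathrm{Bernoulli}(1/n)$ variables after conditioning on $S$, which is precisely the dependency you identify and avoid (knowing $S$ conditions every $v\in L\setminus S$ on $\deg(v)\neq 1$). Your deterministic inequality $|T|\ge n-(|E|-|S|)$ reduces~(b) to two unconditional concentration statements, for $|E|$ and for $|S|$, and in this sparse regime ($|E|\approx n$) the bound is essentially lossless, yielding the same constant $n/e$. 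The paper's approach would be sharper in denser regimes where many edges from $L\setminus S$ collide on the same right vertex, but here your argument is both simpler and avoids the conditioning subtlety altogether. One small remark: the concentration of $|E|\sim\mathrm{Bin}(n^2,1/n)$ does need the multiplicative Chernoff bound you invoke, not the Hoeffding-type Proposition~\ref{prop:chernoff} stated in the paper (which would give only $\exp(-2t^2/n^2)$); this is standard, so no issue.
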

\begin{proof}
To see property (a), let us define $0/1$ random variables $X_1, X_2, ..., X_n$ where $X_i = 1$ iff vertex $i \in L$ has degree exactly one in $G$. Then ${\rm Pr}[X_i=1] = (1-\frac{1}{n})^{n-1} = 1/e - o(1)$ for sufficiently large $n$. Thus $E[\sum_{i=1}^{n} X_i] = n/e - o(n)$, and using Chernoff bound (Proposition~\ref{prop:chernoff} with $t = 4\sqrt{n \ln n}$) implies that with probability at least $1 - 2/n^4$, there is a set $S \subseteq L$ of size $n/e \pm o(n)$ whose vertices have degree exactly one in $G$.

To see property (b), fix a set $S$ of degree $1$ vertices in $L$. Let us define $0/1$ random variables $Y_1, Y_2, ..., Y_n$ where $Y_i = 1$ iff vertex $i \in R$ receives no edges from vertices in $L \setminus S$. 
Then ${\rm Pr}[Y_i=1] = (1-\frac{1}{n})^{|L \setminus S|} \ge 1/e - o(1)$ for sufficiently large $n$. Thus $E[\sum_{i=1}^{n} Y_i] \ge n/e - o(n)$, and using Chernoff bound (Proposition~\ref{prop:chernoff} with $t = 4\sqrt{n \ln n}$) implies that with probability at least $1 - 2/n^4$, there is a set $T \subseteq R$ of size at least $n/e - o(n)$ whose vertices 
do not have any edges to $L \setminus S$.

Thus both properties (a) and (b) hold with probability at least $1 - 1/n^3$, as desired.
\end{proof}

\begin{lemma}\label{lem:random-graph}
Let $G(L \cup R, E)$ be drawn from ${\cal G}(n,n,1/n)$. Then for sufficiently large $n$, with probability at least $1 - 1/n^2$, $G$ contains an {\em induced matching} of size $n/e^3 -o(n)$.
\end{lemma}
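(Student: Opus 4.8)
\emph{Proof proposal.} The plan is to exhibit the induced matching as a collection of \emph{isolated edges} of $G$ — edges both of whose endpoints have degree exactly one — and to produce $n/e^3 - o(n)$ of them by combining the two structural facts from Proposition~\ref{prop:random2} with the balls-and-bins estimate of Proposition~\ref{prop:random1}.

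First I would set up notation exactly as in Proposition~\ref{prop:random2}: let $S \subseteq L$ be the set of vertices of $L$ with degree exactly one in $G$, and let $T \subseteq R$ be the set of vertices of $R$ having no neighbor in $L \setminus S$. By Proposition~\ref{prop:random2}, with probability at least $1 - 1/n^3$ we have $\card{S} = n/e \pm o(n)$ and $\card{T} \geq n/e - o(n)$; I condition on this event. The crucial point is a conditioning argument: the edge-sets incident to distinct vertices of $L$ are mutually independent, $S$ is determined by the indicators ``$v$ has degree $1$'', and $T$ is a function of the edges incident to $L \setminus S$ only. Hence, conditioning on the identity of $S$ and on all edges incident to $L \setminus S$ (which fixes both $S$ and $T$ together with their sizes), the unique neighbor $w_v \in R$ of each $v \in S$ is, conditionally, uniformly distributed over $R$ and independent across $v \in S$.

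Next I would apply Proposition~\ref{prop:random1} with the vertices of $S$ playing the role of $N := \card{S}$ balls thrown independently and uniformly into the $M := n$ bins $R$, taking the fixed subset of bins to be $B := T$. This yields, with probability at least $1 - 1/N^3 = 1 - O(1/n^3)$, at least $\paren{\card{T}/n} \cdot \card{S}/e - o(\card{S}) \geq \paren{1/e - o(1)} \cdot \paren{n/e^2 - o(n)} = n/e^3 - o(n)$ bins of $T$ that receive exactly one ball. For each such bin $w \in T$ with its unique ball $v \in S$: $w$ has no neighbor in $L \setminus S$ and exactly one neighbor in $S$, namely $v$, so $w$ has degree one in $G$; and $v$ has degree one in $G$ by definition of $S$. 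Thus $(v,w)$ is an isolated edge. Over all such pairs the chosen left endpoints are distinct (distinct balls) and the right endpoints are distinct (distinct bins), and there is no edge between a left endpoint $v_i$ and a right endpoint $w_j$ with $i \neq j$ because $v_i$ has degree one — so the collection is an \emph{induced} matching of size $n/e^3 - o(n)$. A union bound over the two failure events ($1/n^3$ from Proposition~\ref{prop:random2} and $O(1/n^3)$ from Proposition~\ref{prop:random1}) bounds the total failure probability by $1/n^2$ for all sufficiently large $n$, which is the claim.

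The only genuinely delicate step is the conditioning in the second paragraph: one must argue that fixing the identity of $S$ — and hence of $T$ — does not bias the unique neighbors $\{w_v\}_{v \in S}$, so that Proposition~\ref{prop:random1} is literally applicable. This is where I would be most careful, spelling out that $S$ and $T$ are measurable with respect to the edges incident to $L \setminus S$, a source of randomness independent of the edges incident to $S$; everything after that is a mechanical substitution of the two propositions' bounds.
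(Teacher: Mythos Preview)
Your proof is correct and follows the paper's argument essentially line for line: condition on Proposition~\ref{prop:random2}, observe that the unique neighbors of vertices in $S$ are then i.i.d.\ uniform over $R$, and apply Proposition~\ref{prop:random1} with $B=T$ to extract $n/e^3 - o(n)$ isolated edges.

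One small correction to your final paragraph: it is \emph{not} true that $S$ is measurable with respect to the edges incident to $L \setminus S$ (knowing those edges tells you nothing about which vertices in $S$ have degree one). The correct justification is the one you already gave in your second paragraph: the edge-sets incident to distinct $v \in L$ are independent, so conditioning on the event $\{v \in S\}$ for each $v$ separately, together with all edges out of $L \setminus S$, leaves the neighbor of each $v \in S$ uniform over $R$ and independent; and $T$ is determined by the edges out of $L \setminus S$ alone. The paper phrases this the same way.
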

\begin{proof}
By Proposition~\ref{prop:random2}, we know with probability at least $1 - 1/n^3$, the graph $G(L \cup R, E)$ satisfies properties (a) and (b). We will assume from here on that this event, denoted by ${\cal E}$, has occurred. We first 
observe that conditioned on the event ${\cal E}$, and for any choice of sets $S$ and $T$ as defined in Proposition~\ref{prop:random2} as well as edges from the set $L \setminus S$ to $R \setminus T$, sampling a graph $G$ from $
{\cal G}(n,n,1/n)$ is equivalent to assigning each vertex in $S$ a uniformly at random neighbor in $R$. 

Now invoking Proposition~\ref{prop:random1}, with $N = |S|$, $B = T$, we know that with 
probability at least $1 - O(1/n^3)$, there is a set $T' \subseteq T$ of size at least  

$$\frac{|T|}{n} \cdot \frac{|S|}{e} - o(|S|) = \left( \frac{1}{e} - o(1) \right) \cdot \left( \frac{n/e - o(n)}{e} \right) - o(n) \ge \frac{n}{e^3} - o(n)$$

such that each vertex in $T'$ receives exactly one ball from $S$ (i.e. receives exactly one edge from the vertices in $S$).
Let $S' \subseteq S$ be the set of vertices that ``supply a ball'' (i.e. an edge) to vertices in $T'$. Since by definition the vertices in $T$ receive edges only from $S$, and since all vertices in $S$ have degree exactly one, the set $S' \cup 
T'$ of vertices induces a matching of size at least $n/e^3 - o(n)$ in $G$, as asserted in the lemma.
\end{proof}

The lower bound in Lemma~\ref{lem:matching-dist-large} now follows from Lemma~\ref{lem:random-graph} for the family of bipartite graphs with $n/\alpha$ vertices on each side. 
The upper bound is a simple application of Chernoff bound on the number of edges from $\Ebab$ that are assigned to $\Gii$.

\end{document}